\def\dtau{\partial^{(\tau)}}
\def\widehatdntau{\widehat{\partial}^{(\tau)}_n}
\def\soft{\text{\rm soft}}
\def\stiff{\text{\rm stiff}}
\def\rr{\text{\rm right}}
\def\ll{\text{\rm left}}
\def\eff{\text{\rm eff}}
\def\hom{\text{\rm hom}}
\def\xit{\xi^{(\tau)}}
\def\bxit{\overline{{\xi}^{(\tau)}}}
\def\bxitone{\overline{{\xi}^{(\tau)}_1}}
\def\bxittwo{\overline{{\xi}^{(\tau)}_2}}
\def\psit{\psi^{(\tau)}}
\def\Port{\mathcal P_{\bot}^{(\tau)}}
\def\P{\mathcal P^{(\tau)}}
\def\e{\varepsilon}
\DeclareMathOperator*{\dom}{\mathrm{dom}}
\DeclareMathOperator*{\clos}{\mathrm{clos}}
\DeclareMathOperator*{\card}{\mathrm{card}}
\DeclareMathOperator*{\diag}{\mathrm{diag}}
\DeclareMathOperator*{\ran}{\mathrm{ran}}
\newtheorem{theorem}{Theorem}[section]
\newtheorem{proposition}[theorem]{Proposition}
\newtheorem{corollary}[theorem]{Corollary}
\newtheorem{definition}[theorem]{Definition}
\newtheorem{lemma}[theorem]{Lemma}
\newtheorem{prop}[theorem]{Proposition}
\newtheorem{remark}[theorem]{Remark}
\begin{document}
\title{Unified approach to critical-contrast homogenisation with explicit links to time-dispersive media}
\author{Kirill D. Cherednichenko}
\address{Department of Mathematical Sciences, University of Bath, Claverton Down, Bath, BA2 7AY, United Kingdom}
\email{cherednichenkokd@gmail.com}
\author{Yulia Yu. Ershova}
\address{Department of Mathematical Sciences, University of Bath, Claverton Down, Bath, BA2 7AY, United Kingdom  {\sc and} Department of Mathematics, St.\,Petersburg State University of Architecture and Civil Engineering, 2-ya Krasnoarmeiskaya St. 4, 190005 St.\,Petersburg, Russia}
\email{julija.ershova@gmail.com}
\author{Alexander V. Kiselev}
\address{Departamento de F\'{i}sica Matem\'{a}tica, Instituto de Investigaciones en Matem\'aticas Aplicadas y en Sistemas, Universidad Nacional Aut\'onoma de M\'exico, C.P. 04510, M\'exico D.F. {\sc and} International research laboratory ``Multiscale Model Reduction'', Ammosov North-Eastern Federal University, Yakutsk, Russia}
\email{alexander.v.kiselev@gmail.com}
\author{Sergey N. Naboko}
\address{Department of Higher Mathematics and Mathematical Physics, St.\,Petersburg State University, Ulianovskaya~1, St.Peterhoff, St.\,Petersburg, 198504, Russia {\sc and} Department of Mathematics, Stockholm University, Sweden}
\email{sergey.naboko@gmail.com}
\subjclass[2000]{Primary 34E13 ; Secondary 34E05, 35P20, 47A20, 81Q35}

\keywords{Homogenization, Critical contrast, Time dispersion, Graphs, Dilation, Generalised Resolvent}

\begin{abstract}
A novel approach to critical-contrast homogenisation is proposed.
Norm-resolvent asymptotics are explicitly constructed. An essential feature of our approach is that it relates homogenisation limits to a class of time-dispersive media.
\end{abstract}

\maketitle

\par{\raggedleft\slshape To Professor Andrei Shkalikov with deepest respect\par}

\section{Introduction}

The research aimed at modelling and engineering metamaterials has been recently brought to the forefront of materials science (see, e.g., \cite{Phys_book} and references therein). It is widely acknowledged that these novel materials acquire non-classical properties as a result of a carefully designed microstructure of a composite medium, which can be assumed periodic with a small enough periodicity cell. The mathematical apparatus involved in their modelling must therefore include as its backbone the theory of homogenisation, see
{\it e.g.} \cite{Lions,Jikov_book}, which aims at characterising limiting, or ``effective'', properties of small-period composites. A typical problem here is to study the asymptotic behaviour of solutions to equations of the type
\begin{equation}
-{\rm div}\bigl(A^\varepsilon(\cdot/\varepsilon)\nabla u_\e\bigr)-\omega^2u_\e=f,\ \ \ \ f\in L^2({\mathbb R}^d),\qquad \omega^2\notin{\mathbb R}_+,
\label{eq:generic_hom}
\end{equation}
where for all $\varepsilon>0$ the matrix $A^\varepsilon$ is $Q$-periodic, $Q:=[0,1)^d,$ non-negative and may additionally be required to satisfy the condition of uniform ellipticity.

On the other hand, the result sought ({\it i.e.}, the metamaterial behaviour in the limit of vanishing $\e$) belongs to the domain of the so-called time-dispersive media (see, e.g., \cite{Tip_1998,Figotin_Schenker_2005,Tip_2006,Figotin_Schenker_2007b}). For such media, in the frequency domain one faces a setup of the type
\begin{equation*}
-{\rm div}\bigl(A\nabla u\bigr)+\mathfrak B(\omega)u=f,\ \ \ \ f\in L^2({\mathbb R}^d),\
\end{equation*}
where $A$ is a constant matrix and $\mathfrak B(\omega)$ is a frequency-dependent operator in $L^2({\mathbb R}^d)$ taking the place of $-\omega^2$ in (\ref{eq:generic_hom}) if, for the sake of argument, in the time domain we started with an equation of second order in time as in (\ref{eq:generic_hom}). If in addition $\mathfrak B$ is a scalar matrix function, {\it i.e.}, $\mathfrak B(\omega)=\beta(\omega) I$ with a scalar function $\beta(\omega)$, then a problem of the type
\begin{equation}
-{\rm div}\bigl(A(\omega)\nabla u\bigr)=\omega^2 u
\label{eq:generic_spectral_td}
\end{equation}
appears in place of the spectral problem after a formal division by $-\beta(\omega)/\omega^2$, with frequency-dependent (but independent of the spatial variable) matrix $A(\omega)$.

Thus, the matrix elements of $A(\omega)$, interpreted as material parameters of the medium, acquire non-trivial dependence on the frequency, which may lead to their taking negative values in certain  frequency intervals. The latter property is in turn characteristic of metamaterials \cite{Veselago}. It is therefore of a paramount interest to understand how inhomogeneity in the spatial variable (see \eqref{eq:generic_hom}) can lead in the limit of vanishing $\e$ to the inhomogeneity in the frequency variable, and in particular to uncover the conditions on $A^\e$ sufficient for this. A result, which from the above perspective can be seen as negative, is provided by the homogenisation theory in the uniformly strongly elliptic setting ({\it i.e.}, both $A^\e$ and $(A^\e)^{-1}$ are uniformly bounded). Here one proves (see \cite{Zhikov_1989,BirmanSuslina} and references therein) the existence of a constant $A^\hom$ matrix such that solutions $u_\e$ to \eqref{eq:generic_hom} converge to those of
\begin{equation*}
-{\rm div}\bigl(A^\hom \nabla u_\hom\bigr)-\omega^2u_\hom=f,
\end{equation*}
which leaves no room for time dispersion. This result also carries over to vector models, including the Maxwell system.

If the uniform ellipticity assumption is dropped, the analysis of (\ref{eq:generic_hom}) becomes much more complicated, which is to be expected (see, e.g., \cite{Shkalikov}). By employing the technique of two-scale convergence, first Zhikov \cite{Zhikov2000,Zhikov2005} and then Bouchitte and Felbacq \cite{BouchitteFelbacq} obtained an effective problem of the type \eqref{eq:generic_spectral_td}. The former works treat the critical-contrast model of the type \eqref{eq:generic_hom}, while the latter is devoted to an associated scattering problem. Here, under critical contrast one means that the components of the medium must have material properties in a proper contrast to each other, governed by the size of periodicity cell (see Section 2 for further details). In a closely related development, Hempel and Lienau \cite{HempelLienau_2000} proved the band-gap structure of the limiting spectrum (see also \cite{Friedlander} for a result concerning the asymptotics of the integrated density of states). Although well received, these results fall short of establishing a rigorous one-to-one correspondence between homogenisation limits for
critical-contrast media and time dispersion in the effective medium. This is due to the following: (i) the additional assumptions imposed only permit to treat a limited set of models (curiously, excluding even the one-dimensional version of the problem, let alone physically relevant cases like, {\it e.g.}, ``split-ring resonator'' type inclusions); (ii) the error control is lacking, due to the very weak convergence of solutions claimed. A more general theory, akin to that of Birman and Suslina \cite{BirmanSuslina,BirmanSuslina_corr} in the
moderate-contrast case, is therefore required. The present paper attempts to suggest precisely this.

The benefits of the novel unified approach as developed henceforth are these, in a nutshell:
\begin{enumerate}
  \item Being free from additional assumptions on the geometry and PDE type, it can be successfully applied in a consistent way to diverse problems motivated by applications;
  \item It can be viewed as a natural (albeit non-trivial) generalisation of the Birman and Suslina approach in the uniformly elliptic case;
  \item The analysis is shown to be reducible by purely analytical means to an auxiliary uniformly elliptic problem; the latter, unlike the original problem, is within the reach of robust numerical techniques;
  \item The error bounds are controlled uniformly via norm-resolvent estimates (yielding the spectral convergence as a by-product);
  \item Not only the relation of the composite with the corresponding effective time-dispersive medium is made transparent (showing the introduction of second (``fast'') variable via the two-scale asymptotics to be unnecessary), but the approach can be also seen to offer a recipe for the construction of such media with prescribed dispersive properties from periodic composites whose individual components are non-dispersive.
\end{enumerate}

The analytical toolbox we propose also allows\footnote{This argument will appear in a separate publication.}: (i) to explicitly construct spectral representations and functional models for both homogenisation limits of critical-contrast composites and the related time-dispersive models; (ii) on this basis, to solve direct and inverse scattering problems in both setups. It therefore paves the way to treating the inverse problem of constructing a metamaterial ``on demand'', based on its desired properties.

In the present paper we consider a model of a high-contrast graph periodic along one axis (see Section 2 for the setup and notation). It is instructive from the point of view of applications to think of this graph as being embedded into $\mathbb R^d$ for some $d\ge 1$. Indeed, by \cite{KuchmentZeng,Exner,KuchmentZeng2004}, see also \cite{Zhikov_singular_structures}, it can be viewed as an idealised model of a thin periodic critical-contrast network. The named setup (see also our paper \cite{Physics}, where its asymptotic behaviour is studied in terms of eigenfunctions; the approach of the named paper can be viewed as an alternative manifestation of the toolbox developed here) allows to keep technicalities to a bare minimum, at the same time making the substance of the argument highly transparent. Having said that, we remark, that the main ingredients of the theory remain more or less the same when one passes over to the PDE setup, see our paper \cite{PDE_paper}.

The analytical backbone of our approach is the so-called generalised resolvent, or in other words the resolvent of the original operator family sandwiched by orthogonal projections to one of components of the medium (``soft'' one, see Section 2 for details). In its analysis, we rely upon the celebrated general theory due to Neumark \cite{Naimark1940,Naimark1943} and the follow-up work by Strauss \cite{Strauss}. An explicit analysis of Dirichlet-to-Neumann maps (separately for the components comprising the medium) becomes necessary to facilitate the use of the well-known Kre\u\i n resolvent formula. The corresponding analysis is based on a version of Birman-Kre\u\i n-Vi\v sik theory \cite{Birman,Krein,Vishik}, treating self-adjoint extensions of symmetric operators.

The use of generalised resolvents is a rather common place in operator and spectral theory, notable examples ranging from scattering theory \cite{Yafaev} to the celebrated Birman-Schwinger principle \cite{Simon}. An essential part of the Birman-Suslina theory on moderate contrast homogenisation can also be viewed as
an application (albeit, degenerate) of the same. The present paper can therefore be seen as yet another example of how surprisingly far one can reach by a consistent application of the existing vast toolbox of abstract spectral theory.

\section{Periodic graph setup}

\label{setup_section}

Let $B_\e$ be a strip in $\mathbb R^d$, $d\ge 2,$
$
B_\e=\{y\in \mathbb R^d : y_1\in [ 0, c_0 \e) \},
$
where $y_1$ is the first component of the $d$-dimensional vector $y$ and $c_0$ is a real number, $c_0>0$. Introducing the translation vector $\vec \ell=(c_0 \e,0,\dots,0)^\top\in \mathbb R^d$, one represents the whole space as a periodic structure with respect to a lattice of hyperplanes, i.e., $\mathbb R^d=\cup_{j\in \mathbb Z} (B_\e+j\vec \ell)\equiv \cup_{j\in \mathbb Z} B_\e^{(j)}$, where the mentioned lattice is given by $\mathbb T=\cup_{j\in\mathbb Z} T_j:=\cup_{j\in\mathbb Z} (\{y\in \mathbb R^d : y_1=0 \}+ j \vec\ell)$.

Let $\mathbb G_{\text{per}}$ be a periodic metric graph embedded into $\mathbb R^d$
so that: (i) $\mathbb G_{\text{per}}$ is invariant under $\vec\ell$-translations, $\mathbb G_{\text{per}}=\mathbb G_{\text{per}}+\vec\ell$; (ii) $\mathbb G_\e:=\clos(\mathbb G_{\text{per}}\cap  B_\e)$ is a compact finite metric graph, see {\it e.g.} \cite{Kuchment2}. Such graphs arise in applications as limits of thin periodic networks, as the thickness of the network vanishes.

By introducing the standard parametrisation of each edge of the graph $\mathbb G_{\text{per}}$ (and hence of $\mathbb G_\e$) by a one-dimensional parameter, we treat the graph $\mathbb G_\e$ as a collection of intervals of the real line $e^{(p)}=[x^{(p), \ll},x^{(p), \rr}]$  ($p =1,...,n\equiv n(\mathbb G_\e)$). We denote the set of edges of $\mathbb G_\e$ by $\mathcal E=\mathcal E(\mathbb G_\e)$. We further assume that the total length ({\it i.e.} the sum of the edge lengths $l^{(p)}:=x^{(p), \rr}-x^{(p), \ll}$) of the graph $\mathbb G_\e$ is equal to $\e$. It is convenient to identify the vertices $V_m$, $m=1,\dots, N\equiv N(\mathbb G_\e)$ of the graph $\mathbb G_\e$ with equivalence classes of edge endpoints,
$$
V_m=\{x^{(p_1), \ll},\ldots,x^{(p_{s(m)}), \ll},x^{(p_{s(m)+1}), \rr},\ldots,x^{(p_{\gamma_m}), \rr}\},
$$
where $s(m)$ is the number of left endpoints of the edges comprising the vertex $V_m$ and
$\gamma_m$ is the degree, or valence, often denoted by $\deg V_m\equiv \gamma_m$ of the vertex $V_m$ is defined as the number of elements of $V_m.$
The set of vertices of a graph $\mathbb G$ will henceforth be denoted by $\mathcal V(\mathbb G)$.


For the graph $\mathbb G_\e$ introduced above we single out two natural classes of its vertices,
\[
\mathcal V^\ll:=\{V\in \mathcal V(\mathbb G_\e) : V\in T_0\},\qquad
\mathcal V^\rr:=\{V\in \mathcal V(\mathbb G_\e) : V\in T_1\}.
\]
These represent the vertices of the graph $\mathbb G_\e$ which are located on the left $T_0$ and right $T_1$ ``boundaries'' of the strip $B_\e$, respectively.
We assume throughout that the translation of the set $\mathcal V^\ll$ by the vector $\vec \ell$ intersected with the set $\mathcal V^\rr$ is non-empty (otherwise, the graph $\mathbb G_{\text{per}}$ is clearly disconnected). This intersection periodically extended will be denoted $\mathcal V^{\text{per}}$,
$$
\mathcal V^{\text{per}}=\cup_{j\in \mathbb Z} ((\mathcal V^\ll+\vec \ell)\cap \mathcal V^\rr+j\vec \ell).
$$


With the graph $\mathbb{G}_{\text{per}}$ we associate the Hilbert space $L^2(\mathbb{G}_{\text{per}})$ which is the direct sum of   $L^2$-spaces pertaining to the edges of the graph:
$$
L^2(\mathbb{G}_{\text{per}})=\bigoplus\limits_{j\in \mathbb Z}\bigoplus\limits_{e^{(p)}\in \mathcal{E}(\mathbb G_\e+j\vec \ell)} L^2\bigl(e^{(p)}\bigr),
$$
where $\mathcal{E}(\mathbb G_\e+j\vec \ell)$ is the set of edges of the translated graph $\mathbb G_\e+j\vec \ell$.

Next, we define the second-order high-contrast operator $A_\e$ on the graph $\mathbb G_{\text{per}}$. On each edge $e$ of the graph $\mathbb{G}_{\text{per}}$ the action of the operator $A_\e$ is set by the differential expression
\[
-a_e^2(\e)\frac {d^2} {dx^2},
\]
where $a_e(\e)$ is a positive weight, constant on each edge of the graph. The weight $a_e(\e)$ is further assumed to be periodic with the same periodicity cell as $\mathbb{G}_{\text{per}}$, {\it i.e.}, $a_e(\e)=a_{e+j\vec\ell}(\e)$ for any $j\in \mathbb Z$. We consider the weights $a_e(\e)$ of two classes only. The first class, corresponding to the \emph{stiff} component of the medium modelled by $\mathbb G_{\text{per}}$, corresponds to $a_e(\e)\equiv a_e$, independent of $\e$. The second class considered, corresponding to the \emph{soft} component, is of the form $a_e(\e)=a_e \e$. Correspondingly, the graph $\mathbb G_\e$ admits a decomposition into its soft and stiff components, $\mathbb G_\e=\mathbb G_{\stiff,\e} \cup \mathbb G_{\soft,\e}$, where both $\mathbb G_{\stiff,\e}$ and  $\mathbb G_{\soft,\e}$ are graphs (either connected or otherwise) formed of the edges of the first and second class, respectively.

The domain $\dom A_\e$ of the operator $A_\e$ is described as follows: it consists of all functions from the Sobolev space $W^{2}_2(\mathbb{G}_{\text{per}})$ subject to the so-called Kirchhoff matching conditions at the graph vertices:
\begin{equation}\label{Kirchhoff_periodic}
\dom (A_{\varepsilon})=\biggl\{\left.u\in W^2_2(\mathbb{G}_{\text{per}})\right| \text{ at any vertex } V\ u \text{ is continuous and } \sum\limits_{e\thicksim V} \partial_n u_e(V) =0
 \biggr\},
\end{equation}
where the notation $e \thicksim V$ abbreviates the condition ``the edge $e$ is adjacent to the vertex $V$'' ({\it i.e.} one of its endpoints belongs to $V$) and $u_e(x)= \left. u(x) \right|_{e}$ for $x\in e$.
Here the operator $\partial_n$ computes the inward normal co-derivative at the vertex $V$:
$$
\partial_n u_e(V):=(a_e(\e))^2\begin{cases}
u_e'(V), & V \text{ is the left endpoint of  } e,\\[0.3em]
-u_e'(V), & V \text{ is the right endpoint of  } e,
\end{cases}
$$
with an obvious adjustment of notation when $e$ is a loop. It is easily seen that the operator $A_\e$ thus defined is a bounded below self-adjoint operator in $L^2(\mathbb{G}_{\text{per}})$. Clearly, the Kirchhoff matching condition at a hanging vertex of the graph, i.e., at a vertex $V$ such that $\deg V=1$, reduces to the Neumann boundary condition.

\section{Gelfand transform and auxiliary rescaling}

\label{Gelfand_section}

\subsection{Gelfand transform for a graph periodic along one axis}

It is customary to apply either Floquet or Gelfand transform to a periodic differential operator, in view of obtaining a fibre representation in the form of a direct von Neumann integral over the dual cell of quasimomentum. This allows to reduce the analysis of the original operator $A_\e$ to the one of the operator family $A_\e^{(t)}$ such that at each value of quasimomentum $t$ the operator $A_\e^{(t)}$ has compact resolvent and thus discrete spectrum accumulating to plus infinity.

Since the original graph $\mathbb G_{\text{per}}$ is periodic in precisely one direction, it would be natural to apply the one-dimensional Gelfand transform to it. Regretfully, the graph as defined above is embedded in $\mathbb R^d$ rather than $\mathbb R^1$, necessitating an auxiliary procedure which we will refer to as \emph{flattening}.

The possibility to re-embed the periodic graph $\mathbb G_{\text{per}}$ into $\mathbb R$ arises due to the fact that the Hilbert space associated with it is simply the orthogonal sum of $L^2$ spaces over segments of the real line. The ``geometry'' of the graph is encoded in the matching conditions \emph{only}. In essence, the graph geometry is only based on the \emph{locality} of matching conditions of Kirchhoff type at graph vertices. If one forgets for a moment the customary practice of drawing graphs with locality of matching conditions in mind, one is then free to consider the same graph as a collection of segments of the real line subject to a set of (non-local) conditions intertwining the values of functions and their derivatives at edge endpoints. In view of applicability of the Gelfand transform, it will be convenient for us to arrange the edges of the graph $\mathbb G_\e$ as consecutive segments of the real line\footnote{It is of course clear that the result of flattening thus understood, and therefore ultimately the image of the Gelfand transform \eqref{Gelfand}, will depend on the order in which the graph edges are counted. Due to the unitarity of Gelfand transform this is nevertheless irrelevant as all resulting fibre decompositions are unitary equivalent. For the purposes of the present paper it suffices to fix some particular numbering of edges, which we henceforth assume done.}, starting with zero. The periodicity condition then yields an $\e-$periodic infinite chain graph spanning the space $\mathbb R^1$. By a slight abuse of notation, we will keep the same notation $\mathbb G_\e$, $\mathbb G_{\text{per}}$ for the periodicity cell and the periodic graph, respectively, after the flattening in hope that it will not lead to misunderstanding.


The price paid for flattening the original graph $\mathbb G_{\text{per}}$ is that the Gelfand transform, once applied, yields additional unimodular weights in the  non-local matching conditions.
The Gelfand transform we apply next is defined as
\begin{equation}\label{Gelfand}
\widehat{u} (y, t)\equiv (G u)(y,t)= \sqrt{\frac {\varepsilon }{2\pi}} \sum_{n\in \mathbb{Z}} u(y+\varepsilon  n)e^{-i t (y+\varepsilon  n)},
\end{equation}
which is shown to be a unitary operator from $L^2(\mathbb R)$ to $L^2\bigl((0,\e)\times (-\pi/\e, \pi/\e)\bigr).$ Applied to the original operator family $A_\e$, it yields the fibre decomposition of the latter into the direct von Neumann integral
\begin{equation}\label{vonNeumann}
A_\e\cong\oplus \int_{-\pi/\e}^{\pi/\e}A_\e^{(t)} dt,
\end{equation}
where at each value of $t$ and for each edge $e$ of $\mathbb G_\e$
the action of  $A_\e^{(t)}$ is set by the differential expression
\[
-a^2_e(\e)\biggl(\frac{d}{dx}+it\biggr)^2.
\]
The operator $A_\e^{(t)}$ is defined on the compact graph $\mathbb G_\e$
subject to a set of non-local conditions at the vertices
of its flattened realisation. These non-local conditions are naturally split into two sets, the former originating from the non-local matching conditions arising due to \eqref{Kirchhoff_periodic} via flattening. These under the Gelfand transform are converted into matching conditions admitting the same form, although the values of functions and inward normal co-derivatives acquire certain unimodular ``weights''. The second set appears due to the periodicity of the operator $A_\e$ on the graph $\mathbb G_{\text{per}}$. These only involve the endpoints of the edges which belong to ``periodic'' vertices $\mathcal V^{\text{per}}$ in the original (unflattened) graph.

These observations allow to ``invert'' the flattening introduced above once the Gelfand transform has been applied. This inversion clearly yields the original graph $\mathbb G_\e$, in which every vertex $V$ belonging to $\mathcal V^{\text{per}}\cap T_0$ has been identified with its translation $W=V+\vec \ell$ in $T_1$. This identification means that the combined vertex $VW:=V\cup W$ is the equivalence class of edge endpoints which is the union of equivalence classes defining $V$ and $W$. We denote the resulting graph $\widehat{\mathbb G}_\e$.   Clearly, the number of edges in it is the same as in ${\mathbb G}_\e$, whereas the number of vertices is $\widehat{N}=N-\card \mathcal V^{\text{per}}\cap T_0$. This construction depends heavily on the geometry of the original graph, but it will be clarified in Section \ref{section:examples}, where we consider three examples in full detail.

Next we define the operator family $A_\e^{(t)}$ on the compact finite graph $\widehat{\mathbb G}_\e$.
On each edge $e$ of the graph $\widehat{\mathbb G}_\e$ the operator $A_\e^{(t)}$ is defined by the differential expression
$$
-a_e^2(\e)\biggl(\frac d{dx}+{\it i}t\biggr)^2.
$$
The domain of $A_\e^{(t)}$ is described by \emph{weighted Kirchhoff}, or \emph{Datta--Das Sarma} (see \cite{Datta}) conditions at each graph vertex. Namely, at any vertex $V$ of $\widehat{\mathbb G}_\e$ one sets for $u\in W^{2,2}(\widehat{\mathbb G}_\e)\equiv W^{2,2}({\mathbb G}_\e)$ (we set $u_e:=u|_{e}$ as above, and recall that $t\in[-\pi/\varepsilon, \pi/\varepsilon)$):
\begin{equation}
\label{weightedK}
\begin{gathered}
(i)\ \text{For any } e,e' \text{ such that } e,e'\thicksim V\\[0.4em]
\text{there exists a common value, denoted  } u(V), \text{ such that }\\[0.4em]
w_V(e)u_e(V)=w_V(e') u_{e'}(V)=:u(V)\\[0.4em]
(ii)\ \sum_{e\thicksim V} \widehat{\partial}_n^{(t)}u_e(V)=0, \text{ where }\\[0.4em]
\widehat{\partial}_n^{(t)} u_e(V):=\begin{cases}
w_V(e) a_e^2(\e) \biggl(\dfrac d{dx}+it\biggr) u_e(V), & V \text{ is the left endpoint of  } e,\\[1.0em]
-w_V(e) a_e^2(\e) \biggl(\dfrac d{dx}+it\biggr)u_e(V), & V \text{ is the right endpoint of  } e.
\end{cases}
\end{gathered}
\end{equation}
Here we suitably adjust the notation when $e$ is a loop, and $\{w_V(e)\}_{e\thicksim V}$ is defined at each vertex $V$ as a list of unimodular complex numbers. This list depends on a concrete choice of the graph flattening, see the related discussion above, leading to unitarily equivalent formulations for all such choices. We therefore refrain from providing any explicit expressions in the general case. However, in Section \ref{section:examples} we give such expressions for each of the three examples discussed there.

It is easily seen that $A_\e^{(t)}$ thus defined is a self-adjoint operator. The standard compactness argument is used to ascertain that its spectrum is discrete and accumulates to $+\infty$. The following theorem follows from the argument presented above.
\begin{theorem}\label{Gelfand-graph}
For each vertex $V\in \widehat{\mathbb G}_\e$ there exists a unimodular list $\{w_V(e)\}_{e\thicksim V}$ such that
the image of the operator family $A_\e$ under the Gelfand transform \eqref{Gelfand} is the fibre representation \eqref{vonNeumann}, where the operator family $A_\e^{(t)}$ is defined on $\widehat{\mathbb G}_\e$ by \eqref{weightedK}.
\end{theorem}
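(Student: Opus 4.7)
The plan is to compute directly the image under the Gelfand transform $G$ of both the differential expression on each edge and of the Kirchhoff matching conditions \eqref{Kirchhoff_periodic}, the latter rewritten in the flattened realisation of the graph. Unitarity of $G:L^2(\mathbb R)\to L^2((0,\e)\times(-\pi/\e,\pi/\e))$ is the classical Plancherel-type statement for the one-dimensional Gelfand transform; after flattening, $L^2(\mathbb G_{\text{per}})$ is identified with $L^2(\mathbb R)$, so $G$ transplants to the graph Hilbert space. The remaining task is entirely fibre-wise and reduces to two calculations.

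First, termwise differentiation of \eqref{Gelfand} gives the intertwining relation $(\partial_y+it)\widehat u(y,t)=\widehat{u'}(y,t)$, so the operator $-a_e^2(\e)\,d^2/dx^2$ on each edge transforms fibre-wise to $-a_e^2(\e)(d/dx+it)^2$, matching the action asserted in the theorem. Second, I would write out the flattened realisation explicitly: each edge endpoint originally associated with a vertex $V\in\mathcal V(\mathbb G_\e)$ is now located at some position $p\in\mathbb R$; for non-periodic vertices these positions sit in $[0,\e)$, while for $V\in\mathcal V^{\text{per}}\cap T_0$ the identification $V\sim V+\vec\ell$ contributes additional endpoints in a neighbouring cell. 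Using the pointwise inversion formula
\[
u(p)=\sqrt{\e/(2\pi)}\int_{-\pi/\e}^{\pi/\e}\widehat u(p\bmod\e,t)\,e^{itp}\,dt,
\]
the original local identity $u(p_j)=u(p_{j'})$ on $\mathbb R$ passes to the fibre-wise relation $e^{itp_j}\widehat u(p_j\bmod\e,t)=e^{itp_{j'}}\widehat u(p_{j'}\bmod\e,t)$ for almost every $t$. Introducing $w_V(e_j):=e^{it(p_j-q_j)}$, with $q_j$ an arbitrarily fixed reference position in $[0,\e)$ of the endpoint of $e_j$ at $V$, one reads off condition (i) of \eqref{weightedK}; the co-normal condition is treated analogously via the same intertwining, producing (ii). These fibre-wise identifications glue, at each periodic vertex, the endpoint at $y=0$ to the one at $y=\e$, yielding precisely the graph $\widehat{\mathbb G}_\e$, and the direct integral decomposition \eqref{vonNeumann} follows.

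Self-adjointness and discreteness of the spectrum of $A_\e^{(t)}$ then follow by standard arguments for compact finite metric graphs with self-adjoint couplings (Sobolev embedding together with the symmetry of the weighted couplings). The principal obstacle I anticipate is the bookkeeping of the unimodular weights $w_V(e)$: their explicit form depends on the chosen flattening (on how the edges of $\mathbb G_\e$ are lined up along $[0,\e)$) and on how periodic vertices contribute endpoints from neighbouring cells, bringing in factors of the form $e^{\pm it\e}$. Different flattenings generate different lists of weights but unitarily equivalent fibre operators, which is why the theorem asserts only their existence. Careful handling of signs at vertices where an edge ends on the right versus begins on the left, and at periodic vertices where a cell-boundary crossing is involved, is where the argument demands greatest care.
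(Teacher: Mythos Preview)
Your approach is correct and is precisely the direct computation the paper alludes to (the paper itself gives no proof beyond citing \cite{Yorzh3,Physics}). One slip to fix: from your own fibre-wise identity $e^{itp_j}\widehat u(p_j\bmod\e,t)=\text{const}$ one reads off $w_V(e_j)=e^{itp_j}$ (equivalently $e^{it(p_j-q)}$ for a \emph{single} reference point $q$), not $e^{it(p_j-q_j)}$ with an edge-dependent $q_j$; the latter would force trivial weights at every non-periodic vertex, contradicting the fact that flattening places the endpoints attached to such a vertex at distinct positions in $[0,\e)$.
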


The \emph{proof} is obtained by a straightforward computation based on those presented in \cite{Yorzh3,Physics}. It has to be noted that the values of $w_V(e)$ do depend on the numbering of graph vertices in $\mathbb G_\e$ via the flattening applied. This, however, happens in a unitary equivalent fashion.

\subsection{Rescaling to the graph of total length $1$}
\label{rescaling_section}

Guided by the result presented in \cite{Physics}, we introduce a unitary rescaling for the operator family $A_\e^{(t)}$. Set $\tau:=\e t\in[-\pi, \pi)$ and consider the rescaled operator family $A_\e^{(\tau)}$ defined by
$A_\e^{(\tau)}=\Phi_\e A_\e^{(t)}\Phi_\e^*$, where the unitary $\Phi_\e$ acts on $u\equiv\{u_e\}_{e\in \widehat{\mathbb G}_\e}\in L^2(\widehat{\mathbb G}_\e)$ by the following rule:
$$
\Phi_\e u_e= \sqrt{\e} u_e(\e x),\quad \forall\ e\in \widehat{\mathbb G}_\e .
$$
To simplify notation, we have elected to keep the same symbol $A_\e^{(\tau)}$ for the unitary image of $A_\e^{(t)},$ where $t=\tau/\varepsilon\in[-\pi/\varepsilon, \pi/\varepsilon).$ We hope that this does not lead to any misunderstanding.

Under the transformation $\Phi_\e$ the graph $\widehat{\mathbb G}_\e$ becomes $\mathbb G$ which is the same graph with every edge length multiplied by $1/\e$; the total length of $\mathbb G$ is $1$. For brevity we keep the same notation $e^{(p)}$ for the edges of the rescaled graph $\mathbb G.$ The operator $A_\e$ is then unitary equivalent to the direct von Neumann integral,
$$
A_\e\cong\oplus \int_{-\pi}^\pi A_\e^{(\tau)} d\tau,
$$
and the operator family $A_\e^{(\tau)}$ admits the following explicit description. On each edge $e$ of the graph $\mathbb G$ the operator $A_\e^{(\tau)}$ is defined by the differential expression
\[
-\frac{a_e^2(\e)}{\e^2}\biggl(\frac d{dx}+i\tau\biggr)^2.
\]
The domain of $A_\e^{(\tau)}$ is described by the \emph{Datta--Das Sarma}  conditions, {\it i.e} $u\in\dom A_\e^{(\tau)}$ if at each graph vertex $V$ of $\mathbb G$ the function $u\in W^{2,2}(\mathbb G)$ and
\begin{equation}\label{weightedK1}
\begin{gathered}
(i)\ \text{For any } e,e' \text{ such that } e,e'\thicksim V\\[0.4em]
\text{there exists a common value, denoted  } u(V), \text{ such that }\\[0.4em]
w_V(e)u_e(V)=w_V(e') u_{e'}(V)=:u(V)\\[0.4em]
(ii)\ \sum_{e\thicksim V} \widehat{\partial}_n^{(\tau)}u_e(V)=0, \text{ where }\\[0.4em]
\widehat{\partial}_n^{(\tau)} u_e(V):=\begin{cases}
w_V(e) \dfrac {a_e^2(\e)}{\e^2}\biggl(\dfrac d{dx}+i\tau\biggr) u_e(V), & V \text{ is the left endpoint of  } e\\[1.0em]
-w_V(e) \dfrac {a_e^2(\e)}{\e^2}\biggl(\dfrac d{dx}+i\tau\biggr)u_e(V), & V \text{ is the right endpoint of  } e,
\end{cases}
\end{gathered}
\end{equation}
where, as before, we suitably adjust the notation when $e$ is a loop. The stiff-soft decomposition $\mathbb G_\e=\mathbb G_{\stiff,\e} \cup \mathbb G_{\soft,\e}$ is replicated in $\mathbb G,$ so that $\mathbb G=\mathbb G_\stiff \cup \mathbb G_\soft$.

We remark that under the choice of $a_e(\e)$ made above, the operator family $A_\e^{(\tau)}$ on the soft component $\mathbb G_\soft$ is defined by the
$\e$-independent differential expression
\[
-a_e^2\biggl(\frac d{dx}+i\tau\biggr)^2,\quad e\in \mathbb G_\soft.
\]
On the stiff component $\mathbb G_\stiff$ the symbol of the operator is $1/\e^2$-large. It is for this reason that we find convenient to introduce the factors $1/\e^2$ in the definition of the operator $\widehatdntau$ above; indeed, on the edges comprising the soft component this leads to the $\e$-independent expression
\[
\pm w_V(e) a_e^2 \biggl(\frac d{dx}+i\tau\biggr) u_e(V),
\]
in line with that for the action of the operator itself.

We also note that the same family $A_\e^{(\tau)}$ would have appeared if we first applied the unitary rescaling $\Phi_\e$ to the operator family $A_\e$, followed by the application of the Gelfand transform \eqref{Gelfand} with $\e$ set to $1$. This remark also proves that the Datta-Das Sarma weights $w_V(e)$ depend on $\varepsilon$ via $\e t=\tau$.

Finally, we note that each edge $e$ of the graph $\mathbb G$ can be, by a shift of variable, identified with the segment $[0,l^{(e)}]$, where as above $l^{(e)}$ is the length of the edge $e$. We will consistently make use of this identification below.

\section{Preliminaries: boundary triples and the Weyl $M$-function}

Our approach is
based
on the theory of boundary triples \cite{Gor,Ko1,Koch,DM}, applied
to the class of operators introduced above. We next recall two fundamental
concepts of this theory, namely the boundary triple and the generalised Weyl-Titchmarsh
matrix function. Assume that $A_{\min}$ is a symmetric
densely defined operator with equal deficiency indices in a Hilbert space $H,$ and set $A_{\max}:=A_{\min}^*$.

\begin{definition}[\cite{Gor,Ko1,DM}]\label{Def_BoundTrip}
\label{definition1_1}
Let $\Gamma_0,$ $\Gamma_1$ be linear mappings of ${\rm dom}(A_{\max})$
to an auxiliary separable Hilbert space $\mathcal{H}.$ The triple
$(\mathcal{H}, \Gamma_0,\Gamma_1)$ is called \emph{a boundary
triple} for the operator $A_{\max}$ if:
\begin{enumerate}
\item For all $u,v\in {\rm dom}(A_{\max})$ one has
\begin{equation}
\langle A_{\max} u,v \rangle_H -\langle u, A_{\max} v \rangle_H = \langle \Gamma_1 u, \Gamma_0
v\rangle_{\mathcal{H}}-\langle\Gamma_0 u, \Gamma_1 v\rangle_{\mathcal{H}}.
\label{Green_identity}
\end{equation}
\item The mapping
$u\longmapsto (\Gamma_0 u;
\Gamma_1 u),$ $f\in {\rm dom}(A_{\max})$ is onto ${\mathcal H}\times{\mathcal H}.$
\end{enumerate}

A non-trivial extension ${A}_B$ of the operator $A_{\min}$ such
that $A_{\min}\subset  A_B\subset A_{\max}$  is called
\emph{almost solvable} if there exists a boundary triple
$(\mathcal{H}, \Gamma_0,\Gamma_1)$ for $A_{\max}$ and a bounded
linear operator $B$ defined on $\mathcal{H}$ such that for every
$u\in {\rm dom}(A_{\max})$
$$
u\in {\rm dom}({A_B})\ \ \ \text{\ if and only if }\ \ \ \Gamma_1 u=B\Gamma_0 u.
$$

The (correctly defined) operator-valued function $M=M(z)$ given by
\begin{equation*}\label{Eq_Func_Weyl}
M(z)\Gamma_0 u_{z}=\Gamma_1 u_{z}, \ \
u_{z}\in \ker (A_{\max}-z),\  \ z\in
\mathbb{C}_+\cup{\mathbb C}_-,
\end{equation*}
is called the Weyl-Titchmarsh function, or $M$-function function, of the operator
$A_{\max}$ with respect to the corresponding boundary triple.
\end{definition}


One of the cornerstones of our analysis is the celebrated Kre\u\i n formula, which allows to relate the resolvent of $A_B$ to the resolvent of a self-adjoint operator $A_\infty$ defined as the
restriction of the maximal operator $A_{\text{\rm max}}$ to the set
$$
{\rm dom}(A_\infty)=\bigl\{u\in \dom A_{\text{\rm max}}|\, \Gamma_0 u=0\bigr\}.
$$
(We follow Birman-Krein-Vishik \cite{Birman, Krein, Vishik}, see also \cite{Ryzhov}, in using the notation $A_\infty,$ justified by the fact that in the language of triples this extension formally corresponds to $A_B$ with $B=\infty.$)

In particular, we will find it necessary to consider not only proper operator extensions $A_B$, but also those for which the parameterising operator $B$ depends on the spectral parameter $z$. This amounts to considering spectral boundary-value problems where the spectral parameter is present not only in the differential equation but also in the boundary conditions:
\begin{equation}
A_{\text{\rm max}}u-z u =f,\ \ \ \
u\in {\rm dom} (A_{\text{\rm max}}),\ \ \
\Gamma_1 u = B(z) \Gamma_0 u.
\label{gen_prob}
\end{equation}
The solution operator $R(z)$ for a boundary-value problem of this type is known \cite{Strauss} to be a generalised resolvent in the case when $-B(z)$ is an operator-valued $R$-function:
if $B(z)$ is analytic in $\mathbb C_+\cup \mathbb C_-$ with $\Im z \Im B(z)\leq 0,$ then
\begin{equation}\label{eq:out-of-space}
R(z)=P_H(A_{\mathfrak H}-z)^{-1}\bigr|_H,
\end{equation}
where $\mathfrak H$ is a Hilbert space such that $H\subset \mathfrak H,$ the operator $P_H$ is the orthogonal projection of $\mathfrak H$ onto $H,$ and $A_{\mathfrak H}$ is a self-adjoint in $\mathfrak H$ out-of-space extension of the operator $A_{\text{min}}$.

On the other hand, for any fixed $z$ the operator $R(z)$  coincides with the resolvent (evaluated at the point $z$) of a closed linear operator in $H$
that is an anti-dissipative for $z\in{\mathbb C}_+$ (dissipative for $z\in{\mathbb C}_-$)  extension of $A_{\text{min}}$ with the $z$-dependent domain given in (\ref{gen_prob}).
It is for this reason that in what follows we preserve the notation $(A_B-z)^{-1}$ for the generalised resolvent of $A_B$ when
$B=B(z).$

The Kre\u\i n formula suitable for treatment of such problems was obtained in \cite{DM}.

\begin{prop}[Version of the Kre\u\i n formula of \cite{DM}]\label{prop:Krein}
Assume that $\{\mathcal{H},\Gamma_0,\Gamma_1\}$ is a boundary triple for the operator $A_{\text{\rm max}}$. Then for the (generalised) resolvent  $(A_B-z)^{-1}$, where $B=B(z)$ is a bounded operator in $\mathcal{H}$ for $z\in\mathbb C_+\cup \mathbb C_-$, one has, for all
$z\in \rho(A_B)\cap \rho(A_\infty):\,$\footnote{It is checked that the operator function $B-M$ is invertible under the conditions of the proposition.}
\begin{multline}
\label{eq:resolvent}
(A_B-z)^{-1}=(A_\infty-z)^{-1}+ \gamma(z)\bigl(B(z)-M(z)\bigr)^{-1}\gamma^*(\bar z)
\\
=(A_\infty-z)^{-1}+ \gamma(z)\bigl(B(z)-M(z)\bigr)^{-1}\Gamma_1 (A_\infty-z)^{-1},
\end{multline}
where $M(z)$ is the  M-function of $A_{\text{\rm max}}$ with respect to the boundary triple $\{\mathcal{H},\Gamma_0,\Gamma_1\}$ and $\gamma(z)$ is the solution operator
$$
\gamma(z)=\bigl(\Gamma_0|_{\text{\rm ker\,}(A_{\text{\rm max}}-z)}\bigr)^{-1}.
$$
\end{prop}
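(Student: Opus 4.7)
The plan is to follow the standard template for deriving the Kre\u\i n resolvent formula inside the boundary triples framework, treating $z$ as fixed so that the $z$-dependence of $B$ plays no role in the algebraic manipulation: given $f\in H$, I compare $u:=(A_B-z)^{-1}f$ with $u_\infty:=(A_\infty-z)^{-1}f$. Both satisfy $(A_{\max}-z)u=(A_{\max}-z)u_\infty=f$, so their difference lies in $\ker(A_{\max}-z)$, and by definition of the solution operator $\gamma(z)$ one has
\[
u-u_\infty=\gamma(z)\Gamma_0(u-u_\infty)=\gamma(z)\Gamma_0 u,
\]
where the last equality uses $\Gamma_0 u_\infty=0$.

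The next step is to compute $\Gamma_0 u$ from the boundary condition $\Gamma_1 u=B(z)\Gamma_0 u$. Applying $\Gamma_1$ to $u=u_\infty+\gamma(z)\Gamma_0 u$ and invoking the definition of the $M$-function, $\Gamma_1\gamma(z)=M(z)\Gamma_0\gamma(z)=M(z)$, gives
\[
\Gamma_1 u=\Gamma_1 u_\infty+M(z)\Gamma_0 u.
\]
Combining with the boundary condition yields $(B(z)-M(z))\Gamma_0 u=\Gamma_1 u_\infty$, and invertibility of $B(z)-M(z)$ on $\rho(A_B)\cap\rho(A_\infty)$ (the point flagged in the footnote, which I would verify by observing that existence and uniqueness of the solution of \eqref{gen_prob} for every $f\in H$ is equivalent to $B(z)-M(z)$ being boundedly invertible) gives
\[
\Gamma_0 u=\bigl(B(z)-M(z)\bigr)^{-1}\Gamma_1 u_\infty=\bigl(B(z)-M(z)\bigr)^{-1}\Gamma_1(A_\infty-z)^{-1}f.
\]
Substituting this back into $u=u_\infty+\gamma(z)\Gamma_0 u$ produces the second form of the formula in \eqref{eq:resolvent}.

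To obtain the first form I would establish the identity
\[
\gamma^*(\bar z)=\Gamma_1(A_\infty-z)^{-1}
\]
as a bounded operator from $H$ to $\mathcal H$. This is a standard computation: for $\varphi\in\mathcal H$ and $f\in H$, set $v=\gamma(\bar z)\varphi\in\ker(A_{\max}-\bar z)$ and $w=(A_\infty-z)^{-1}f$, whence $A_{\max}v=\bar z v$ and $A_{\max}w-zw=f$ with $\Gamma_0 w=0$. Inserting these into the Green identity \eqref{Green_identity} collapses the right-hand side to $\langle\Gamma_1 w,\Gamma_0 v\rangle_{\mathcal H}=\langle\Gamma_1(A_\infty-z)^{-1}f,\varphi\rangle_{\mathcal H}$, while the left-hand side equals $\langle f,\gamma(\bar z)\varphi\rangle_H=\langle\gamma^*(\bar z)f,\varphi\rangle_{\mathcal H}$, and the asserted identity follows by arbitrariness of $\varphi$ and $f$.

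The main technical point I expect to have to be careful about is not the algebra above, which is insensitive to the $z$-dependence of $B$, but rather the interpretation of $(A_B-z)^{-1}$ as a \emph{generalised} resolvent when $B(z)$ depends nontrivially on $z$: I would note that for each individual $z\in\rho(A_B)\cap\rho(A_\infty)$ the derivation is purely algebraic and only invokes the fact that $A_B$ at that spectral parameter is a well-defined $z$-dependent extension of $A_{\min}$ with boundary condition $\Gamma_1 u=B(z)\Gamma_0 u$, so the same formula \eqref{eq:resolvent} is valid pointwise in $z$. The Strauss/Na\u\i mark representation \eqref{eq:out-of-space}, although essential for the interpretation of $R(z)$ globally as the compressed resolvent of an out-of-space self-adjoint dilation, is not needed for the pointwise identity at $z\in\rho(A_B)\cap\rho(A_\infty)$ and I would mention it only to justify why calling $R(z)$ a ``resolvent'' is legitimate.
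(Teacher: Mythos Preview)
Your argument is correct and is precisely the standard derivation of the Kre\u\i n formula in the boundary-triple framework. Note, however, that the paper does not supply its own proof of this proposition: it is stated as a version of the formula from \cite{DM} and is quoted without proof, so there is nothing in the paper to compare your approach against beyond the attribution to Derkach--Malamud.
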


\subsection{The triple}
\label{triple_section}

In order to apply the theory of boundary triples to the operator family $A_\e^{(\tau)}$ we first choose a convenient boundary triple. It was shown in \cite{Yorzh3}, see also references therein, how a ``natural'' boundary triple is selected in the setting of quantum graphs, and we follow the mentioned approach here.

First, we define a ``maximal'' operator
$A_{\max}$ in the space $L^2(\mathbb G)$ by the same differential expression\footnote{For brevity, we henceforth omit the subscript $\e$ and the superscript $(\tau)$ in the notation pertaining to maximal operators.} as the operator $A_\e^{(\tau)}$.
Its domain $\dom A_{\max}$ is defined as those $u\in W^{2,2}(\mathbb G)$ that admit the \emph{weighted continuity condition} at all graph vertices $V$:
\begin{equation}\label{domAmax}
\begin{gathered}
\text{For any } e,e' \text{ such that } e,e'\thicksim V\\
\text{there exists a common value, denoted  } u(V), \text{ such that }\\
w_V(e)u_e(V)=w_V(e') u_{e'}(V)=:u(V)
\end{gathered}
\end{equation}
(as above, $u_e:=u|_{e}$).

We set the adjoint to $A_{\max}$ to be the ``minimal'' densely defined symmetric operator $A_{\min}.$  We choose the boundary triple as follows: the boundary space is $\mathcal{H}=\mathbb{C}^N$, where $N=N(\mathbb G)$ is the number of vertices in $\mathbb G$, and the boundary operators are chosen as follows.
\begin{equation}\label{eq:triple}
     (\Gamma_0u)_V:=  u(V),\ \ \ \ \
     (\Gamma_1u)_V:=
        \sum_{{e}\thicksim V} \widehatdntau u_e(V), \ \ \ \ \ \forall\ V\in\mathbb G,
\end{equation}
where $\widehatdntau$ is defined in (\ref{weightedK1}).
The Green identity (\ref{Green_identity})
holds by integration by parts, see \cite{CherKis,Yorzh3} for details. It follows that the operator $A_\e^{(\tau)}$ is an almost solvable extension of $A_{\min}$ associated with the matrix $B=0$.

In what follows, we will further require two more boundary triples, constructed separately for the operators pertaining to the stiff and soft components of the graph $\mathbb G$, respectively. The maximal operators of these triples $A_{\max}^\stiff$ and $A_{\max}^\soft$ are defined by the same differential expression as that defining $A_\e^{(\tau)}$, but on $L^2(\mathbb G_\stiff)$ and $L^2(\mathbb G_\soft),$ respectively.
The domains of $A_{\max}^\stiff$ and $A_{\max}^\soft$ are set by \eqref{domAmax} restricted to $e,e'\in \mathbb G_{\stiff\ (\soft)}$. Finally, the boundary operators $\Gamma_0^{\text{stiff (soft)}}$ and $\Gamma_1^{\text{stiff (soft)}}$ are defined by \eqref{eq:triple}, but the sum in the second expression is taken over $e\in \mathbb G_{\stiff\ (\soft)}$ only. We remark that although formally all three operators $\Gamma_0,$ $\Gamma_0^{\text{stiff (soft)}}$ are defined by the same rule, their domains are clearly different since the weighted continuity condition \eqref{domAmax} depends on the underlying graph.

By this construction, the boundary spaces for $A_{\max}^\stiff$ and $A_{\max}^\soft$ are chosen as $\mathbb C^{N(\mathbb G_\stiff)}$ and $\mathbb C^{N(\mathbb G_\soft)}$, respectively. In general, this does not quite suite us as (see \cite{Physics}) we need to ensure that $N(\mathbb G)=N(\mathbb G_\soft)=N(\mathbb G_\stiff)$. In the examples considered below, the latter identity holds automatically. The general case admits a reduction to the one considered below via an application of \cite[Appendix A]{Physics}.

\subsection{$M$-matrix}
The derivation of the $M$-matrix $M_\e^{(\tau)}(z)$ with respect to the triple $(\mathbb C^N, \Gamma_0,\Gamma_1)$ defined by \eqref{eq:triple} is based on the same argument as in \cite{Yorzh3} (cf. \cite{CherKis,Physics}) which permits us to omit it here. The result is formulated in the following
\begin{theorem}\label{thm:M-matrix}
Assume that $\mathbb G$ contains no loops.\footnote{This assumption is without loss of generality. Indeed, one can always add auxiliary vertices of degree 2 to the graph $\mathbb G$ to satisfy it.}
The Weyl-Titchmarsh $M$-matrix $M_\e^{(\tau)}(z)$ has
matrix elements given by the following formula:
\begin{equation}\label{Eq_Weyl_Func_Delta}
M_{jm}=\begin{cases}
-k \sum\limits_{e\thicksim V_m} \dfrac {a_e(\e)}{\e} \cot \dfrac{k \varepsilon l^{(e)}}{a_{e}(\e)},& m=j,\\[0.8em]
 \sum\limits_{e\thicksim V_m, e\thicksim V_j} \overline{w_{V_m}(e)}w_{V_j}(e)e^{i\sigma_m(e)l^{(e)}\tau} k \dfrac{a_{e}(\e)}{\e}\csc \dfrac{k \varepsilon l^{(e)}}{a_{e}(\e)},\ \ \ \ \ &m\neq j;\ \exists\ e\thicksim V_m \text{ and }
                     e\thicksim V_j,\\[1.2em]
0& \text{ otherwise.}
\end{cases}
\end{equation}
Here $k=\sqrt{z}$ (the branch such that $\Im k\geq 0$), $l^{(e)}$ is the length of the edge $e$, and
$$
\sigma_m(e)=\begin{cases}
-1,& e \text{ is an outgoing edge for } V_m,\\[0.3em]
+1,& e \text{ is an incoming edge for } V_m.
\end{cases}
$$
\end{theorem}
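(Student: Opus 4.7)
The plan is to compute $M_\e^{(\tau)}(z)\Gamma_0 u_z = \Gamma_1 u_z$ directly for arbitrary $u_z\in\ker(A_{\max}-z)$, by solving the ODE on each edge in closed form and then assembling the vertex contributions. Fix an edge $e$ with, say, $V_j$ as left endpoint and $V_m$ as right endpoint, and identify $e$ with $[0,l^{(e)}]$. Making the substitution $v_e(x):=e^{i\tau x}u_e(x)$ removes the magnetic shift, turning $-(a_e^2(\e)/\e^2)(d/dx+i\tau)^2 u_e=zu_e$ into $v_e''+(k\e/a_e(\e))^2 v_e=0$, with $k=\sqrt z$. This can be solved explicitly by writing
\[
v_e(x)=\frac{v_e(0)\sin\bigl(k\e(l^{(e)}-x)/a_e(\e)\bigr)+v_e(l^{(e)})\sin\bigl(k\e x/a_e(\e)\bigr)}{\sin\bigl(k\e l^{(e)}/a_e(\e)\bigr)},
\]
provided the sine in the denominator does not vanish (which is generic in $z$, and the formula then extends by analyticity to all $z\in\rho(A_\infty)$).

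Next, I would translate the endpoint data into the boundary data used by the triple \eqref{eq:triple}. Since $|w_V(e)|=1$, the weighted continuity \eqref{domAmax} gives $u_e(V)=\overline{w_V(e)}\,u(V)$, hence $v_e(0)=\overline{w_{V_j}(e)}\,u(V_j)$ and $v_e(l^{(e)})=e^{i\tau l^{(e)}}\overline{w_{V_m}(e)}\,u(V_m)$. Differentiating the displayed formula for $v_e$ and using $(d/dx+i\tau)u_e=e^{-i\tau x}v_e'$, one computes $v_e'(0)$ and $v_e'(l^{(e)})$ as linear combinations of $v_e(0)$ and $v_e(l^{(e)})$ with coefficients involving $(k\e/a_e(\e))\cot(k\e l^{(e)}/a_e(\e))$ and $(k\e/a_e(\e))\csc(k\e l^{(e)}/a_e(\e))$.

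The contribution of this edge to $(\Gamma_1 u_z)_{V_j}$ is then $w_{V_j}(e)(a_e^2(\e)/\e^2)v_e'(0)$, and to $(\Gamma_1 u_z)_{V_m}$ is $-w_{V_m}(e)(a_e^2(\e)/\e^2)e^{-i\tau l^{(e)}}v_e'(l^{(e)})$, per \eqref{weightedK1}. Substituting the expressions for $v_e(0)$ and $v_e(l^{(e)})$, the weight $w_{V_j}(e)$ multiplying the $v_e(0)$ term cancels against $\overline{w_{V_j}(e)}$ to leave the diagonal contribution $-k(a_e(\e)/\e)\cot(k\e l^{(e)}/a_e(\e))\,u(V_j)$, whereas the cross term acquires the factor $w_{V_j}(e)\overline{w_{V_m}(e)}\,e^{i\tau l^{(e)}}$. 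A symmetric computation applies at $V_m$, where the sign flip in \eqref{weightedK1} combined with the extra $e^{-i\tau l^{(e)}}$ produces the factor $w_{V_m}(e)\overline{w_{V_j}(e)}\,e^{-i\tau l^{(e)}}$; in both cases the exponent is precisely $i\sigma_m(e)l^{(e)}\tau$ with $\sigma_m(e)$ as defined in the theorem. Summing over all edges $e\sim V_j$ yields the diagonal formula, while summing over those $e$ that join $V_j$ and $V_m$ yields the stated off-diagonal formula; edges not joining the two vertices contribute nothing, giving the third case.

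The calculation is routine once the substitution $v_e=e^{i\tau\cdot}u_e$ is in place; the only delicate point is bookkeeping the unimodular weights $w_V(e)$ and the sign convention $\sigma_m(e)$ so that the factors $w_{V_j}(e)\overline{w_{V_m}(e)}$ and the phase $e^{i\sigma_m(e)l^{(e)}\tau}$ come out correctly at both endpoints of every edge. The hypothesis that $\mathbb G$ contains no loops ensures that each adjacency of $e$ to a vertex contributes exactly one term in the sums defining $(\Gamma_1 u)_{V_j}$, so no doubling or cross-cancellation occurs; as the footnote indicates, the loop case reduces to this one by edge subdivision.
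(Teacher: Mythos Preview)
Your computation is correct and is precisely the standard derivation the paper has in mind: the paper omits the proof entirely, stating that it ``is based on the same argument as in \cite{Yorzh3} (cf.\ \cite{CherKis,Physics}) which permits us to omit it here,'' and your edge-by-edge calculation via the gauge substitution $v_e=e^{i\tau\cdot}u_e$ is exactly that argument. One minor notational point: when you write ``in both cases the exponent is precisely $i\sigma_m(e)l^{(e)}\tau$,'' the index on $\sigma$ should be understood as the \emph{second} (column) index of the matrix entry under consideration---so for $M_{mj}$ it is $\sigma_j(e)$, not $\sigma_m(e)$---but your actual computation of the phases $e^{\pm i\tau l^{(e)}}$ is correct and matches the formula.
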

Note that \eqref{Eq_Weyl_Func_Delta} defines a Hermitian matrix for real values of $k$ away from a discrete set of $k$.

The next statement, which is commonly used in both ODE and PDE contexts, proves to be valuable for our analysis. Its proof can be obtained, {\it e.g.}, by minor modifications, due to the presence of Datta-Das Sarma weights, of the related proof in \cite{CherKisSilva}.

\begin{proposition}\label{thm:additivity}
Let\footnote{This assumption is without loss of generality due to the reduction of \cite[Appendix A]{Physics}.} $N(\mathbb G)=N(\mathbb G_\soft)=N(\mathbb G_\stiff)$. Let the operators $A_{\max},$ $A_{\max}^\stiff$ and $A_{\max}^\soft$ along with their boundary triples be chosen as in Section \ref{triple_section}. Then
\begin{equation}\label{eq:additivity}
M_\e^{(\tau)}(z)=M_\e^{(\tau),\stiff}(z)+M_\e^{(\tau),\soft}(z),
\end{equation}
where $M_\e^{(\tau),\stiff}(z)$ and $M_\e^{(\tau),\soft}(z)$ are the $M-$matrices of the operators $A_{\max}^\stiff$ and $A_{\max}^\soft$, respectively.
\end{proposition}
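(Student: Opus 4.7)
The proof amounts to exploiting the fact that the three boundary triples share a common vertex set $\mathcal V(\mathbb G)$ and a common notion of vertex value, so that the $M$-matrix identity reduces to the observation that the sum over edges incident to a vertex $V$ splits as a sum over stiff edges plus a sum over soft edges. The plan is as follows. Fix $z\in \mathbb C_+\cup\mathbb C_-$ in the resolvent set of all three operators $A_\infty,$ $A_\infty^\stiff,$ $A_\infty^\soft$ and fix $\phi\in\mathbb C^N$. By the surjectivity clause of Definition \ref{definition1_1} applied to the global triple, there exists $u_z\in\ker(A_{\max}-z)$ with $\Gamma_0 u_z=\phi$; by construction, $u_z$ satisfies the weighted continuity condition \eqref{domAmax} at every vertex.

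Next I would set $u_\stiff:=u_z|_{\mathbb G_\stiff}$ and $u_\soft:=u_z|_{\mathbb G_\soft}$ and verify that $u_\stiff\in\ker(A_{\max}^\stiff-z)$ and $u_\soft\in\ker(A_{\max}^\soft-z)$. The edgewise differential equation is inherited trivially, while the weighted continuity condition restricts cleanly to each subgraph: the common value $u(V)$ produced by \eqref{domAmax} at any vertex $V$ of $\mathbb G$ is, in particular, a common value for the stiff edges alone and for the soft edges alone. Under the hypothesis $N(\mathbb G)=N(\mathbb G_\stiff)=N(\mathbb G_\soft)$, every vertex of $\mathbb G$ is simultaneously a vertex of both subgraphs, so the boundary spaces of the three triples can all be identified with $\mathbb C^N$, and the restrictions satisfy
\begin{equation*}
\Gamma_0^\stiff u_\stiff=\Gamma_0^\soft u_\soft=\Gamma_0 u_z=\phi.
\end{equation*}

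The crux is the additivity of $\Gamma_1$ at each vertex: since the edge set $\mathcal E(\mathbb G)$ is the disjoint union $\mathcal E(\mathbb G_\stiff)\sqcup \mathcal E(\mathbb G_\soft)$ and the Datta--Das Sarma weights $\{w_V(e)\}_{e\thicksim V}$ used to define $\widehatdntau$ in \eqref{weightedK1} are inherited verbatim by each restricted triple, one has
\begin{equation*}
(\Gamma_1 u_z)_V=\sum_{e\thicksim V}\widehatdntau u_{z,e}(V)=\sum_{\substack{e\thicksim V\\ e\in\mathbb G_\stiff}}\widehatdntau u_{\stiff,e}(V)+\sum_{\substack{e\thicksim V\\ e\in\mathbb G_\soft}}\widehatdntau u_{\soft,e}(V)=(\Gamma_1^\stiff u_\stiff)_V+(\Gamma_1^\soft u_\soft)_V.
\end{equation*}
Invoking the defining relation of each of the three $M$-matrices then gives
\begin{equation*}
M_\e^{(\tau)}(z)\phi=\Gamma_1 u_z=\Gamma_1^\stiff u_\stiff+\Gamma_1^\soft u_\soft=\bigl(M_\e^{(\tau),\stiff}(z)+M_\e^{(\tau),\soft}(z)\bigr)\phi,
\end{equation*}
and since $\phi\in\mathbb C^N$ was arbitrary, \eqref{eq:additivity} follows.

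The only real obstacle is the bookkeeping around vertices where stiff and soft edges coexist: one must be sure that the weights $w_V(e)$ and the common vertex values prescribed by the restricted triples are genuinely the same objects as those used in the global triple. The assumption $N(\mathbb G)=N(\mathbb G_\stiff)=N(\mathbb G_\soft)$ is precisely what rules out vertices that are internal to only one component and would otherwise be absorbed or split by the restriction; in the general case, the reduction of \cite[Appendix A]{Physics} restores this equality by inserting auxiliary degree-$2$ vertices without altering any $M$-matrix, and the above argument then applies verbatim.
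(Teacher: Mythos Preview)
Your argument is correct and is precisely the standard direct proof: restrict a $z$-harmonic function on $\mathbb G$ to each component, observe that $\Gamma_0$ is preserved while $\Gamma_1$ splits additively over the disjoint edge sets, and conclude. The paper does not spell this out but simply refers to \cite{CherKisSilva} with ``minor modifications due to the presence of Datta--Das Sarma weights''; you have supplied exactly that modification by noting that the weights $w_V(e)$ are inherited verbatim by the restricted triples, so your write-up is in fact more explicit than the paper's own treatment.
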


\section{Three examples}
\label{section:examples}

In the present section, we introduce three examples of graphs periodic along one axis, which we consider in full detail below. These correspond to the following mutually exclusive setups: (0) disconnected stiff and soft components; (1) disconnected soft component, connected stiff component; (2) connected soft component, disconnected stiff component. The fourth possibility, {\it i.e.}, the one where both components of the medium are connected, proves to yield no new effects compared to (1) and (2), therefore we omit it. As stated above, the general case can be reduced to one of these examples (although the reduction proves to be non-trivial). We mention that the reduction via \cite[Appendix A]{Physics} is not the most effective and elegant of those available; it proves to be possible to construct a deletion-contraction type reduction. This latter subject falls beyond the scope of the present paper and will be treated elsewhere.

\subsection*{(0) A medium with both components disconnected}
Consider the high-contrast one-dimensional periodic medium (cf. \cite{CherKis}), as shown in Fig. \ref{fig:0}. In this case, the graph $\mathbb G_{\text{per}}$ is nothing but an infinite periodic chain-graph.
\begin{figure}[h!]
\begin{center}
\includegraphics[scale=0.7]{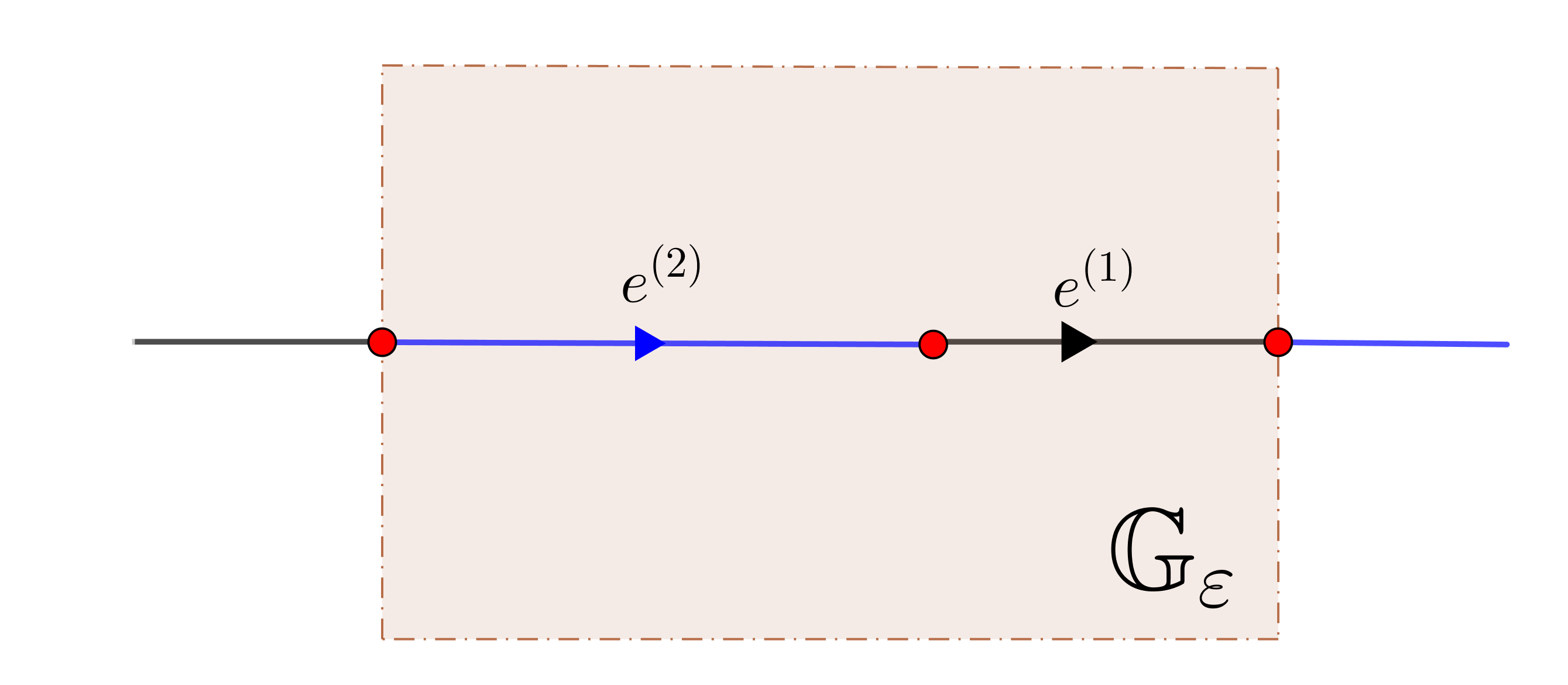}
\includegraphics[scale=0.7]{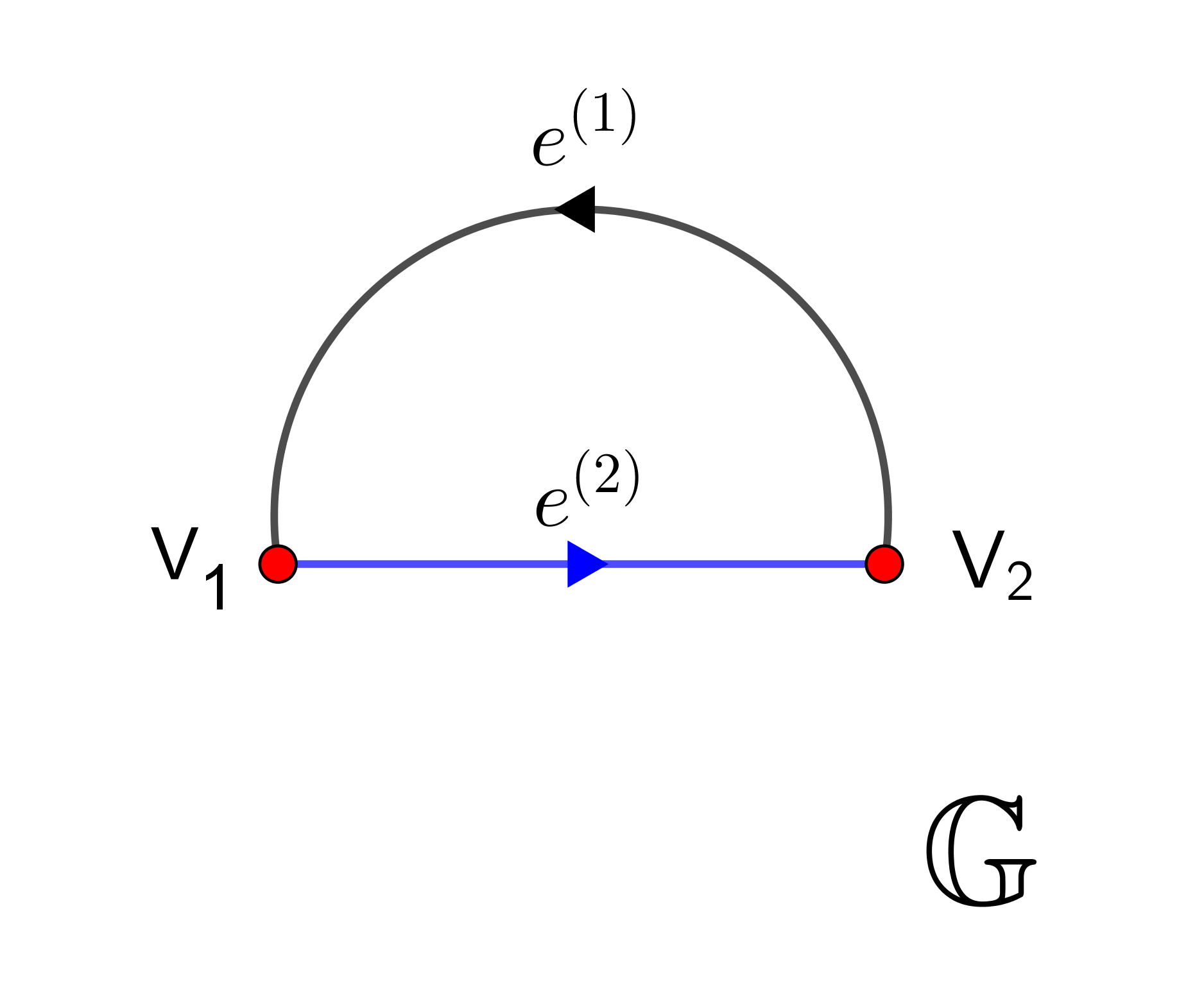}
\end{center}
\caption{{\scshape Example (0).} {\small $\mathbb G_{\text{per}}$ with $\mathbb G_\e$ outlined on the left; the graph $\mathbb G$ after Gelfand transform on the right. The soft component is drawn in blue.}\label{fig:0}}
\end{figure}

The boundary space $\mathcal H$ pertaining to the graph $\mathbb G$ is chosen as $\mathcal H=\mathbb C^2$. The unimodular list functions $w_{V_1}$ and $w_{V_2}$ are chosen as follows:
\begin{equation}\label{eq:0-weights}
\begin{gathered}
\{w_{V_1}(e^{(j)})\}_{j=1}^2=\{1,1\},\qquad\{w_{V_2}(e^{(j)})\}_{j=1}^2=\{1,1\}.
\end{gathered}
\end{equation}
We note, that Datta-Das Sarma weights in this example can be chosen to be trivial due to the fact that no flattening was applied to $\mathbb G_\e$.

Theorem \ref{thm:M-matrix} and Proposition \ref{thm:additivity} now yield the following expressions for the corresponding Dirichlet-to-Neumann maps.

\begin{lemma}\label{lemma:M_0}
Let the maximal operators $A_{\max}$, $A_{\max}^{\stiff (\soft)}$ and boundary operators $\Gamma_j$, $\Gamma^{\stiff (\soft)}_j$ ($j=1,2$) be chosen as in Section \ref{triple_section}. Then
\begin{equation}\label{D2N_0}
\begin{gathered}
M_\e^{(\tau),\stiff}=\dfrac k\e\begin{pmatrix}
-a_1  \cot \dfrac{k \e l^{(1)}}{a_1}& a_1 e^{-i l^{(1)} \tau} \csc \dfrac{k \e l^{(1)}}{a_1}\\[1.1em]
a_1 e^{i l^{(1)} \tau} \csc \dfrac{k \e l^{(1)}}{a_1}&-a_1  \cot \dfrac{k \e l^{(1)}}{a_1}
\end{pmatrix},\\[0.9em]
M_\e^{(\tau),\soft}=k\begin{pmatrix}
-a_2 \cot \dfrac{k l^{(2)}}{a_2}& a_2 e^{i l^{(2)} \tau} \csc \dfrac{k l^{(2)}}{a_2}\\[1.1em]
a_2 e^{-i l^{(2)} \tau} \csc \dfrac{k l^{(2)}}{a_2}&-a_2  \cot \dfrac{k l^{(2)}}{a_2}
\end{pmatrix},
\end{gathered}
\end{equation}
and \eqref{eq:additivity} holds.
\end{lemma}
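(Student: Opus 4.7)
The lemma is a direct specialisation of Theorem \ref{thm:M-matrix} and Proposition \ref{thm:additivity} to the geometry of Example (0), and my plan is accordingly computational. First I would record that $\mathbb{G}_\stiff$ consists of the single edge $e^{(1)}$ of length $l^{(1)}$ carrying the constant weight $a_{e^{(1)}}(\e) = a_1$, while $\mathbb{G}_\soft$ consists of the single edge $e^{(2)}$ of length $l^{(2)}$ carrying the $\e$-dependent weight $a_{e^{(2)}}(\e) = a_2 \e$; both components share the same two vertices $V_1, V_2$, obtained after the Gelfand-induced identification of $V+\vec\ell\in T_1$ with $V\in T_0$. In particular $N(\mathbb{G}) = N(\mathbb{G}_\stiff) = N(\mathbb{G}_\soft) = 2$, so no reduction via \cite[Appendix A]{Physics} is required and Proposition \ref{thm:additivity} delivers \eqref{eq:additivity} directly once the two component $M$-matrices are known.

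For each component I would then apply Theorem \ref{thm:M-matrix}. Since each component has exactly one edge connecting $V_1$ to $V_2$, the sums in \eqref{Eq_Weyl_Func_Delta} collapse to a single term: the diagonal entry $M_{mm}$ reduces to one cotangent, and the off-diagonal entries $M_{12}, M_{21}$ come from the same edge with the corresponding choices of $(j,m)$. The weights \eqref{eq:0-weights} are all unity, so $\overline{w_{V_m}(e)} w_{V_j}(e) = 1$ and the only surviving unimodular factor in the off-diagonals is the Gelfand phase $e^{i\sigma_m(e) l^{(e)} \tau}$. Substituting $a_e(\e) = a_1$ into \eqref{Eq_Weyl_Func_Delta} for the stiff edge gives the arguments $k\e l^{(1)}/a_1$ and overall prefactor $(k/\e)\, a_1$; substituting $a_e(\e) = a_2 \e$ for the soft edge produces the cancellation $k a_e(\e)/\e = k a_2$ and $k\e l^{(e)}/a_e(\e) = k l^{(2)}/a_2$, precisely reproducing the different $\e$-scalings visible in the two matrices of \eqref{D2N_0}.

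The one place where care is needed, and which I expect to be the main (modest) obstacle, is consistent bookkeeping of the orientation function $\sigma_m(e)$. In the loop graph $\widehat{\mathbb{G}}_\e$ produced by the periodic identification, the stiff edge $e^{(1)}$ and the soft edge $e^{(2)}$ traverse the loop in opposite senses, so at each vertex $V_m$ one has $\sigma_m(e^{(1)}) = -\sigma_m(e^{(2)})$. This is precisely the mechanism producing the opposite signs in the exponentials $e^{\mp i l^{(1)} \tau}$ and $e^{\pm i l^{(2)} \tau}$ in \eqref{D2N_0}. Once the numbering of edges and vertices is fixed in accordance with the flattening convention of Section \ref{Gelfand_section}, the remainder of the proof is straight substitution into \eqref{Eq_Weyl_Func_Delta}, followed by an appeal to Proposition \ref{thm:additivity} to recover the full $M_\e^{(\tau)}$ as the sum of the two blocks.
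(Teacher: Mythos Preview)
Your proposal is correct and matches the paper's approach exactly: the paper simply states that the lemma follows from Theorem~\ref{thm:M-matrix} and Proposition~\ref{thm:additivity}, and you have spelled out the substitution in detail, including the correct identification of the orientation signs $\sigma_m(e)$ as the source of the opposite phase conventions in the two blocks of~\eqref{D2N_0}. There is nothing to add.
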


For simplicity, we will henceforce assume without loss of generality that $a_2=1$.

\subsection*{(1) A case of connected stiff component}
The periodic graph considered, its periodicity cell and the result of Gelfand transform is shown in Fig. \ref{fig:1}.
\begin{figure}[h!]
\begin{center}
\includegraphics[scale=0.7]{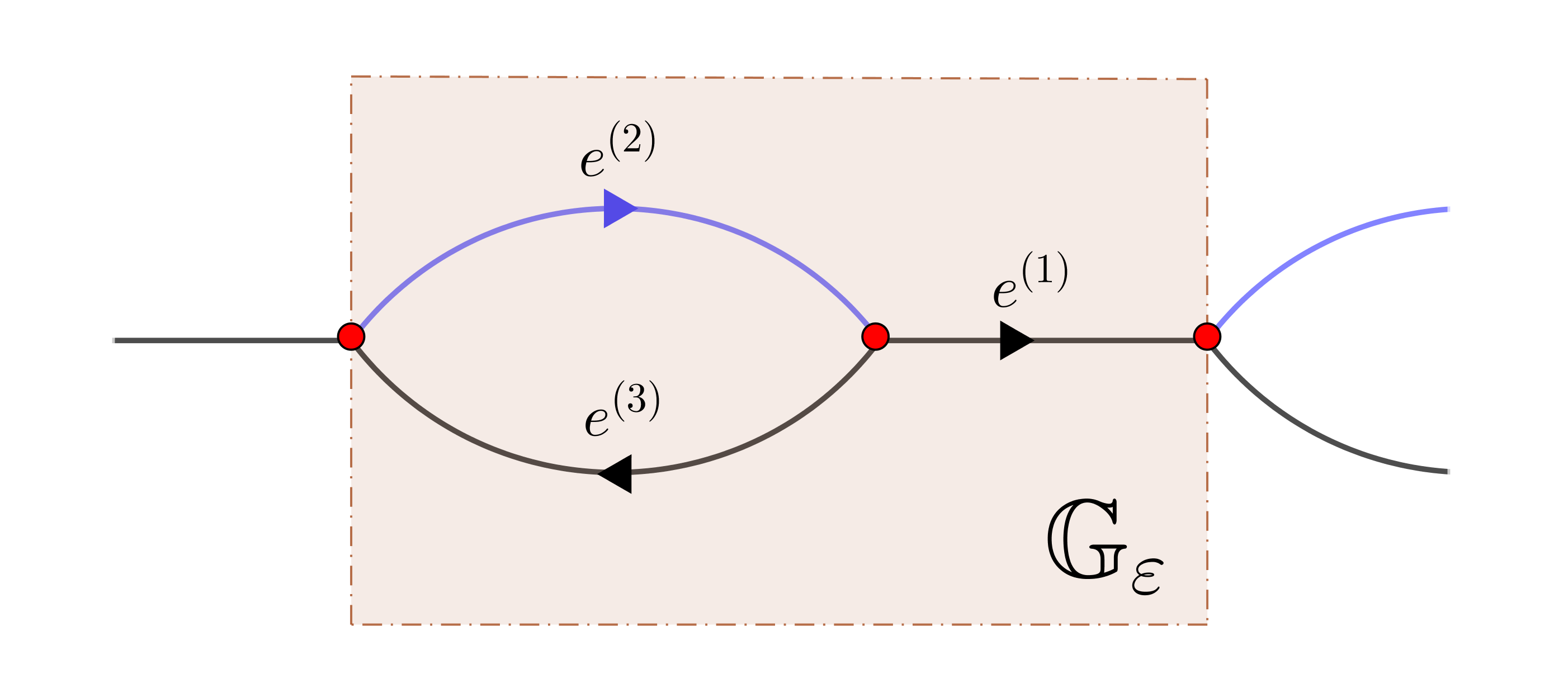}
\includegraphics[scale=0.7]{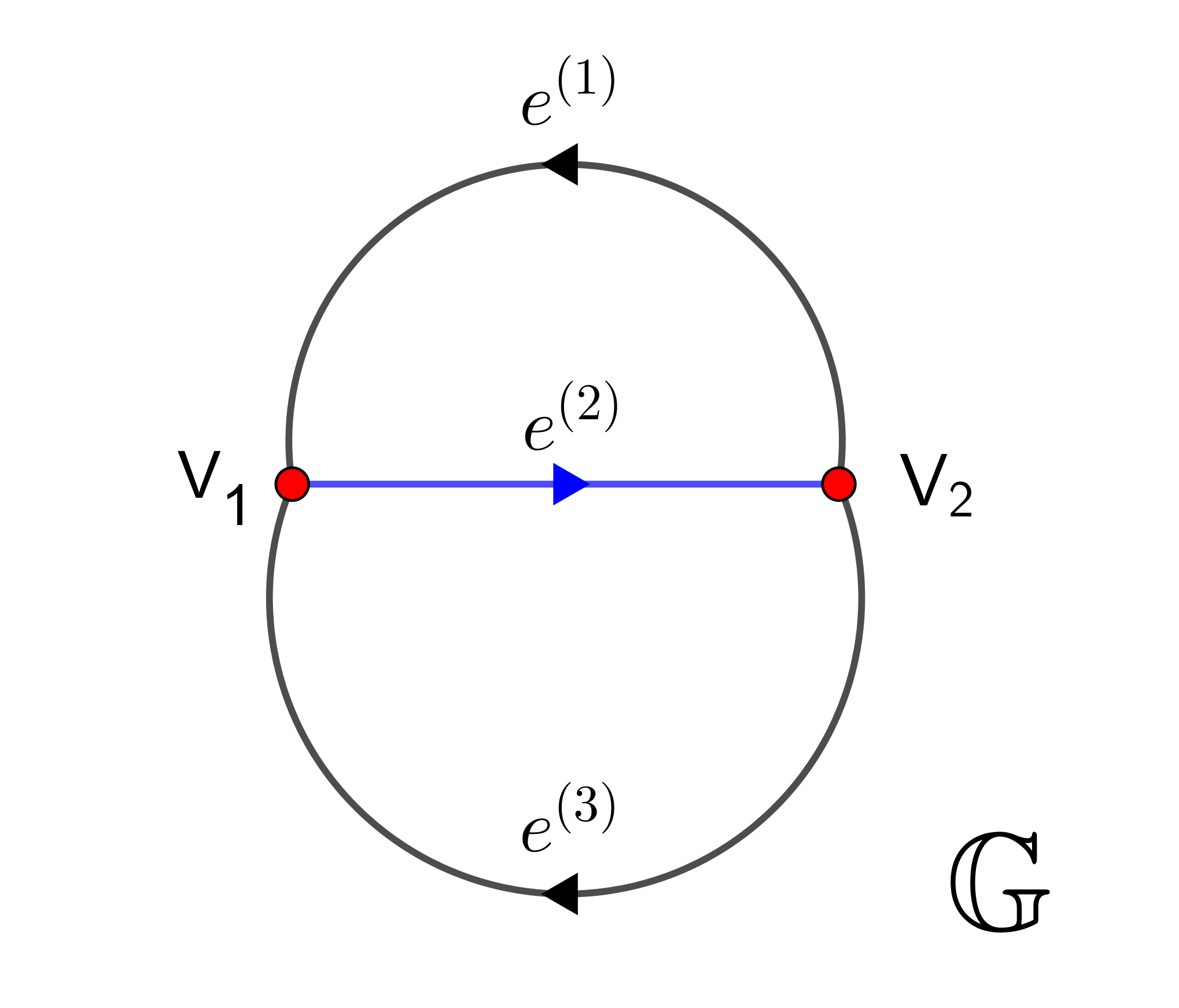}
\end{center}
\caption{{\scshape Example (1).} {\small $\mathbb G_{\text{per}}$ with $\mathbb G_\e$ outlined on the left; the graph $\mathbb G$ after Gelfand transform on the right. The soft component is drawn in blue.}\label{fig:1}}
\end{figure}
The boundary space $\mathcal H$ pertaining to the graph $\mathbb G$ is chosen as $\mathcal H=\mathbb C^2$. The unimodular list functions $w_{V_1}$ and $w_{V_2}$ are chosen as follows:
\begin{equation}\label{eq:1-weights}
\begin{gathered}
\{w_{V_1}(e^{(j)})\}_{j=1}^3=\{1,1,e^{i\tau(l^{(2)}+l^{(3)})}\},\quad \{w_{V_2}(e^{(j)})\}_{j=1}^3=\{e^{i\tau l^{(3)}},1,1\}
\end{gathered}
\end{equation}

Theorem \ref{thm:M-matrix} and Proposition \ref{thm:additivity} yield the following statement.

\begin{lemma} Let the maximal operators $A_{\max}$, $A_{\max}^{\stiff (\soft)}$ and boundary operators $\Gamma_j$, $\Gamma_j^{\stiff (\soft)}$  ($j=1,2$) be chosen as in Section \ref{triple_section}. Then
\begin{equation}\label{D2N_1}\begin{gathered}
M_\e^{(\tau),\stiff}=\dfrac k\e\begin{pmatrix}
-a_1  \cot \dfrac{k \e l^{(1)}}{a_1}-a_3 \cot \dfrac{k \e l^{(3)}}{a_3}
& a_1 e^{-i (l^{(1)}+l^{(3)}) \tau} \csc \dfrac{k \e l^{(1)}}{a_1}+a_3 e^{i l^{(2)} \tau} \csc \dfrac{k \e l^{(3)}}{a_3}\\[1.1em]
a_1 e^{i (l^{(1)}+l^{(3)}) \tau} \csc \dfrac{k \e l^{(1)}}{a_1}+a_3 e^{-i l^{(2)} \tau} \csc \dfrac{k \e l^{(3)}}{a_3}&-a_1  \cot \dfrac{k \e l^{(1)}}{a_1}-a_3  \cot \dfrac{k \e l^{(3)}}{a_3}
\end{pmatrix},\\[0.9em]
M_\e^{(\tau),\soft}=k\begin{pmatrix}
-a_2  \cot \dfrac{k l^{(2)}}{a_2}& a_2 e^{i l^{(2)} \tau} \csc \dfrac{k l^{(2)}}{a_2}\\[1.1em]
a_2 e^{-i l^{(2)} \tau} \csc \dfrac{k l^{(2)}}{a_2}&-a_2 \cot \dfrac{k l^{(2)}}{a_2}
\end{pmatrix}
\end{gathered}
\end{equation}
and \eqref{eq:additivity} holds.
\end{lemma}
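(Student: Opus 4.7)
The plan is to apply Theorem \ref{thm:M-matrix} to the stiff and soft components separately, plug in the explicit weights \eqref{eq:1-weights}, and then invoke Proposition \ref{thm:additivity} for the additivity \eqref{eq:additivity}. First I would identify the subgraph structure read off from Figure~\ref{fig:1}: the graph $\mathbb G$ has exactly two vertices $V_1,V_2$, and $\mathbb G_\stiff$ consists of the two parallel stiff edges $e^{(1)}$ and $e^{(3)}$ joining $V_1$ and $V_2$, while $\mathbb G_\soft$ is the single soft edge $e^{(2)}$ also joining $V_1$ and $V_2$. In particular, the hypothesis $N(\mathbb G)=N(\mathbb G_\stiff)=N(\mathbb G_\soft)=2$ of Proposition \ref{thm:additivity} is automatically satisfied, with no recourse to \cite[Appendix A]{Physics}.

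Next I would compute $M_\e^{(\tau),\stiff}$ by inserting the stiff weights $a_{e^{(1)}}(\e)=a_1$, $a_{e^{(3)}}(\e)=a_3$ into \eqref{Eq_Weyl_Func_Delta}. Both edges $e^{(1)}$ and $e^{(3)}$ are incident to each of $V_1$ and $V_2$, so the diagonal entries are sums of two cotangent terms of the form $-(k a_e/\e)\cot(k\e l^{(e)}/a_e)$, one per edge. The off-diagonal entries are sums of two cosecant terms of the form $(k a_e/\e)\csc(k\e l^{(e)}/a_e)$, each multiplied by the phase $\overline{w_{V_m}(e)}\,w_{V_j}(e)\,e^{i\sigma_m(e)l^{(e)}\tau}$. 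Using the weights $w_{V_1}(e^{(1)})=1$, $w_{V_2}(e^{(1)})=e^{i\tau l^{(3)}}$, $w_{V_1}(e^{(3)})=e^{i\tau(l^{(2)}+l^{(3)})}$, $w_{V_2}(e^{(3)})=1$ from \eqref{eq:1-weights}, and reading the orientations $\sigma_m(e)$ off the flattened realisation of $\mathbb G$, one obtains the phases $e^{\mp i(l^{(1)}+l^{(3)})\tau}$ on the $e^{(1)}$-contributions and $e^{\pm i l^{(2)}\tau}$ on the $e^{(3)}$-contributions, matching the stated matrix $M_\e^{(\tau),\stiff}$ up to Hermiticity.

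For the soft component I would proceed analogously, now with the single edge $e^{(2)}$ carrying the weight $a_{e^{(2)}}(\e)=a_2\e$. The prefactor simplifies from $a_e(\e)/\e$ to $a_2$ and the argument of the trigonometric functions becomes $k\e l^{(2)}/(a_2\e)=kl^{(2)}/a_2$, which kills the $\e$-dependence and yields precisely the stated expression for $M_\e^{(\tau),\soft}$; the off-diagonal phases $e^{\pm il^{(2)}\tau}$ come directly from the soft Datta--Das Sarma weights (which, reading \eqref{eq:1-weights} restricted to the soft edge, are $w_{V_1}(e^{(2)})=w_{V_2}(e^{(2)})=1$) combined with $e^{i\sigma_m(e^{(2)})l^{(2)}\tau}$. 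Once both component $M$-matrices are computed, \eqref{eq:additivity} is immediate from Proposition \ref{thm:additivity}.

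The only non-trivial step is the bookkeeping of orientations $\sigma_m(e)$ and the pairing with the unimodular weights in \eqref{eq:1-weights}: these depend on the specific ordering and orientation of edges chosen during flattening, and must be tracked carefully in order to reproduce the precise exponentials $e^{\pm i(l^{(1)}+l^{(3)})\tau}$ and $e^{\pm il^{(2)}\tau}$ appearing in \eqref{D2N_1}. Once this bookkeeping is done — following the same template as in \cite{Yorzh3,CherKis,Physics} — the formulas reduce to direct substitution in \eqref{Eq_Weyl_Func_Delta}, so the lemma is essentially a corollary of Theorems \ref{Gelfand-graph} and \ref{thm:M-matrix} combined with Proposition \ref{thm:additivity}.
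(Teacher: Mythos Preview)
Your proposal is correct and follows exactly the approach the paper indicates: the paper simply states that the lemma is yielded by Theorem~\ref{thm:M-matrix} and Proposition~\ref{thm:additivity}, and your argument is precisely a careful unpacking of that substitution using the weights \eqref{eq:1-weights}. The additional detail you supply (checking $N(\mathbb G)=N(\mathbb G_\stiff)=N(\mathbb G_\soft)=2$ and tracking the phases via $\sigma_m(e)$ and the Datta--Das Sarma weights) is exactly the bookkeeping the paper leaves implicit.
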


For simplicity, we will henceforce assume when treating this example, that $a_2=1$.

\subsection*{(2) A case of connected soft component}
The periodic graph considered, its periodicity cell and the result of Gelfand transform is shown in Fig. \ref{fig:2}. It represents a ``dual'' situation to the one of Example (1), exhibiting a globally connected soft component.
\begin{figure}[h!]
\begin{center}
\includegraphics[scale=0.7]{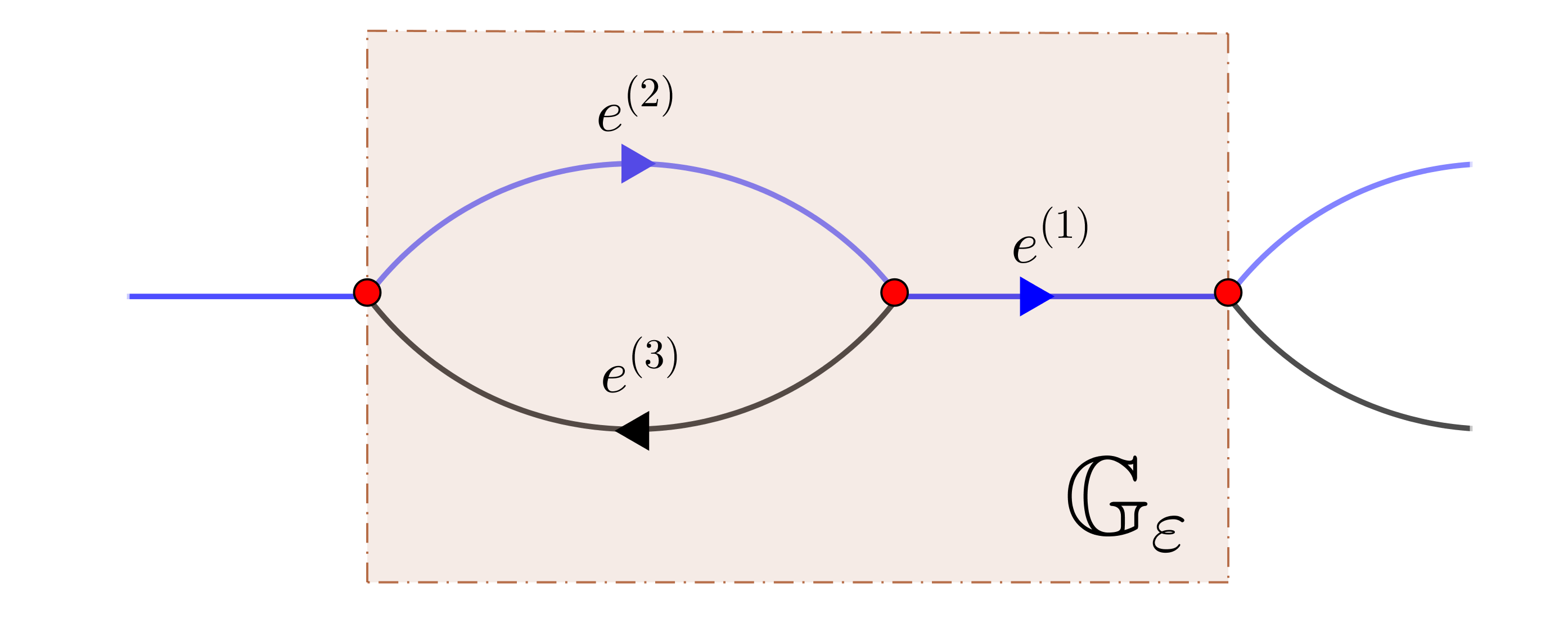}
\includegraphics[scale=0.7]{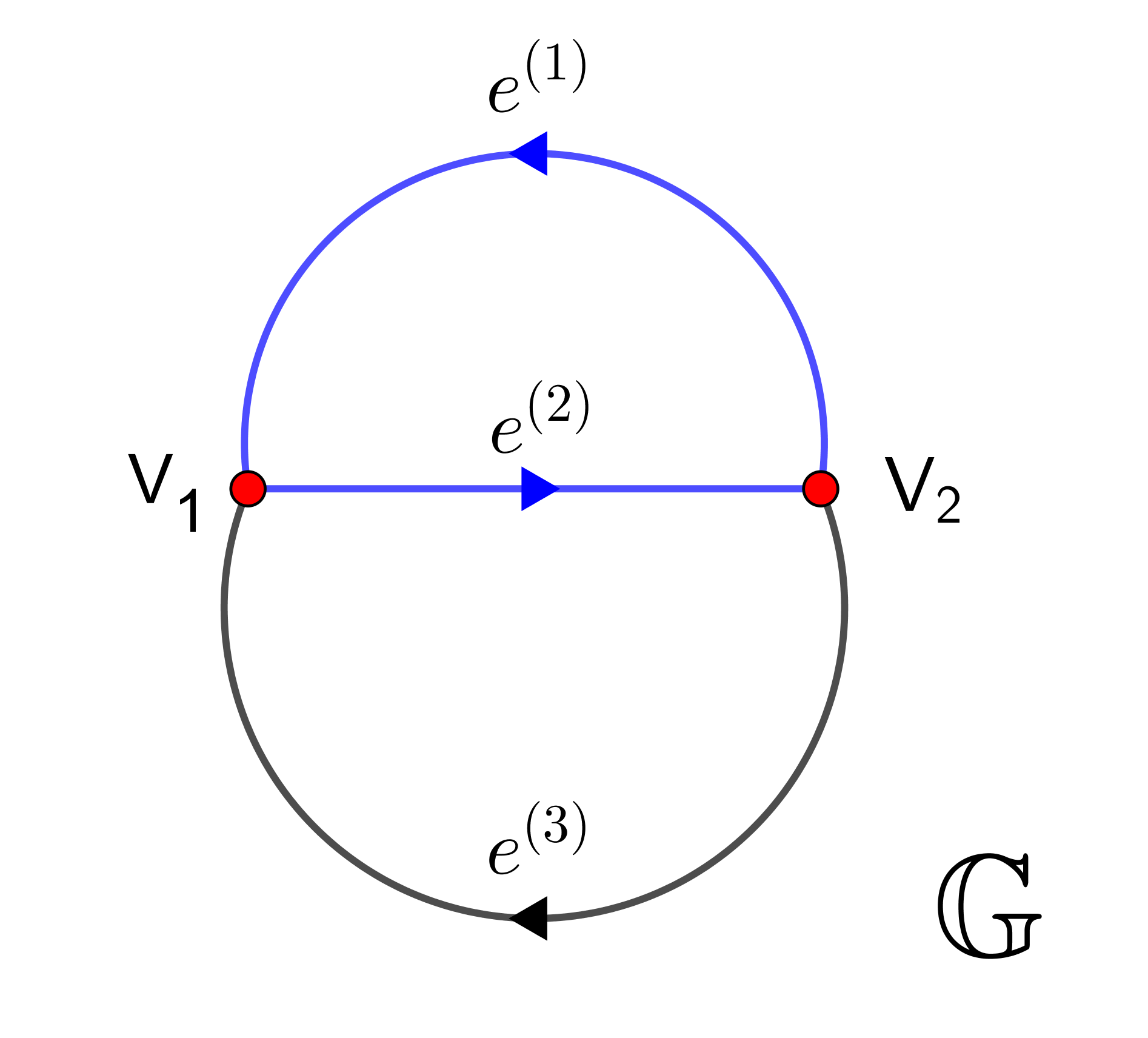}
\end{center}
\caption{{\scshape Example (2).} {\small $\mathbb G_{\text{per}}$ with $\mathbb G_\e$ outlined on the left; the graph $\mathbb G$ after Gelfand transform on the right. The soft component is drawn in blue.}\label{fig:2}}
\end{figure}
The boundary space $\mathcal H$ pertaining to the graph $\mathbb G$ is chosen as $\mathcal H=\mathbb C^2$. The unimodular list functions $w_{V_1}$ and $w_{V_2}$ are chosen as in \eqref{eq:1-weights}.

Theorem \ref{thm:M-matrix} and Proposition \ref{thm:additivity} yield the following statement.

\begin{lemma} Let the maximal operators $A_{\max}$, $A_{\max}^{\stiff (\soft)}$ and boundary operators $\Gamma_j$, $\Gamma^{\stiff (\soft)}_j$ ($j=1,2$) be chosen as in Section \ref{triple_section}.
Then
\begin{equation}\label{D2N_2}\begin{gathered}
M_\e^{(\tau),\stiff}=\dfrac k\e\begin{pmatrix}
-a_3  \cot \dfrac{k \e l^{(3)}}{a_3}
& a_3 e^{i l^{(2)} \tau} \csc \dfrac{k \e l^{(3)}}{a_3}\\[1.1em]
a_3 e^{-i l^{(2)} \tau} \csc \dfrac{k \e l^{(3)}}{a_3}&
-a_3  \cot \dfrac{k \e l^{(3)}}{a_3}
\end{pmatrix},\\[0.9em]
M_\e^{(\tau),\soft}=k\begin{pmatrix}
-a_1 \cot \dfrac{k l^{(1)}}{a_1}-a_2 \cot \dfrac{k l^{(2)}}{a_2}& a_1 e^{-i (l^{(1)}+l^{(3)}) \tau} \csc \dfrac{k  l^{(1)}}{a_1}+a_2 e^{i l^{(2)} \tau} \csc \dfrac{k l^{(2)}}{a_2}\\[1.1em]
a_1 e^{i (l^{(1)}+l^{(3)}) \tau} \csc \dfrac{k l^{(1)}}{a_1}+a_2 e^{-i l^{(2)} \tau} \csc \dfrac{k l^{(2)}}{a_2}&-a_1  \cot \dfrac{k l^{(1)}}{a_1}-a_2  \cot \dfrac{k l^{(2)}}{a_2}
\end{pmatrix},
\end{gathered}
\end{equation}
and \eqref{eq:additivity} holds.
\end{lemma}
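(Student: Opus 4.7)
The plan is to apply Theorem \ref{thm:M-matrix} separately to the two subgraphs $\mathbb G_\stiff$ and $\mathbb G_\soft$, and then invoke Proposition \ref{thm:additivity} to obtain \eqref{eq:additivity}. The hypothesis $N(\mathbb G)=N(\mathbb G_\stiff)=N(\mathbb G_\soft)=2$ needed for Proposition \ref{thm:additivity} is verified at a glance from Fig.~\ref{fig:2}, where both $V_1$ and $V_2$ appear as endpoints of both soft and stiff edges, so no auxiliary reduction via \cite[Appendix A]{Physics} is required.

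First I would handle $\mathbb G_\stiff$. This subgraph consists of the single edge $e^{(3)}$ with endpoints $V_1$ and $V_2$, with stiff weight $a_{e^{(3)}}(\varepsilon)\equiv a_3$, and with Datta--Das Sarma weights $w_{V_1}(e^{(3)})=e^{i\tau(l^{(2)}+l^{(3)})}$, $w_{V_2}(e^{(3)})=1$ as in \eqref{eq:1-weights}. The diagonal entries of $M_\e^{(\tau),\stiff}$ are immediate from the first case of \eqref{Eq_Weyl_Func_Delta}: each one reduces to the single cotangent term $-(k/\varepsilon)a_3\cot(k\varepsilon l^{(3)}/a_3)$. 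For the off-diagonal $(1,2)$ entry, the factor $\overline{w_{V_2}(e^{(3)})}w_{V_1}(e^{(3)})=e^{i\tau(l^{(2)}+l^{(3)})}$ combines with $e^{i\sigma_2(e^{(3)})l^{(3)}\tau}$ where the orientation implied by the flattening procedure of Section \ref{Gelfand_section} yields $\sigma_2(e^{(3)})=-1$. The resulting exponential is $e^{il^{(2)}\tau}$, reproducing the entry displayed in \eqref{D2N_2}; the $(2,1)$ entry is the complex conjugate. This determines $M_\e^{(\tau),\stiff}$ as claimed.

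Second, I would treat $\mathbb G_\soft$, which consists of the two parallel edges $e^{(1)}$ and $e^{(2)}$ joining $V_1$ and $V_2$, with soft weights $a_{e^{(j)}}(\varepsilon)=a_j\varepsilon$ for $j=1,2$ (so that $a_{e^{(j)}}(\varepsilon)/\varepsilon=a_j$ and $k\varepsilon l^{(j)}/a_{e^{(j)}}(\varepsilon)=kl^{(j)}/a_j$, consistent with the $\varepsilon$-independent expressions on the soft component noted after \eqref{weightedK1}). The diagonal entries then sum the two cotangents over $j\in\{1,2\}$, while the off-diagonal $(1,2)$ entry receives one contribution from each edge. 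Using the weights $w_{V_1}(e^{(1)})=w_{V_1}(e^{(2)})=1$, $w_{V_2}(e^{(1)})=e^{i\tau l^{(3)}}$, $w_{V_2}(e^{(2)})=1$, and the orientations $\sigma_2(e^{(1)})=-1$, $\sigma_2(e^{(2)})=+1$ inherited from the flattened realisation, the two contributions yield the sum
\[
a_1\,e^{-i(l^{(1)}+l^{(3)})\tau}\csc\frac{kl^{(1)}}{a_1}+a_2\,e^{il^{(2)}\tau}\csc\frac{kl^{(2)}}{a_2},
\]
which matches the $(1,2)$ entry of $M_\e^{(\tau),\soft}$ in \eqref{D2N_2}; the $(2,1)$ entry follows by taking conjugates (or by symmetric application of Theorem \ref{thm:M-matrix}).

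The only non-routine aspect is the bookkeeping for the orientations $\sigma_m(e)$ after the flattening described in Section \ref{Gelfand_section}; I would fix these once at the outset by referring to Fig.~\ref{fig:2}. With the two subgraph $M$-matrices in hand, Proposition \ref{thm:additivity} yields \eqref{eq:additivity} directly, completing the proof.
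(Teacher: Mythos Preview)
Your proposal is correct and follows precisely the approach indicated in the paper, which simply states that Theorem~\ref{thm:M-matrix} and Proposition~\ref{thm:additivity} yield the lemma. You have merely spelled out the bookkeeping with the Datta--Das Sarma weights and the orientations $\sigma_m(e)$ that the paper leaves implicit; the computations check out against the displayed matrices in \eqref{D2N_2}.
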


\section{Asymptotic analysis of a sandwiched resolvent}

\label{analysis_sec}

In the present section, we proceed with the analysis in the general setting of graphs periodic along one axis. This section can be seen as containing the crucial bit of analysis from the point of view of attaining the main results of our study. For this reason, we start with a discussion which should motivate what follows.

In our setup, $A_\e^{(\tau)}$  acts in the Hilbert space $H=H_{\soft}\oplus H_{\stiff}$, where $H_{\soft}=L^2(\mathbb G_\soft)$ and $H_{\stiff}=L^2(\mathbb G_\stiff)$. Denote $P_{\soft}$ to be the orthogonal projection from $H$ onto $H_{\soft}$; $P_{\stiff}$ is projecting $H$ onto $H_{\stiff}$. Thus, $I=P_{\soft}+P_{\stiff}$.

Instead of the resolvent $(A_\e^{(\tau)}-z)^{-1}$, consider the sandwich $R_\e^{(\tau)}(z):=P_{\soft}(A_\e^{(\tau)}-z)^{-1} P_{\soft}$. Assume for the sake of argument that $R_\e^{(\tau)}(z)$ has a limit, as $\varepsilon\to0,$ in the uniform operator topology for $z$ in an open domain of $\mathbb C$. Further assume that for a reason yet unknown the resolvent $(A_\e^{(\tau)}-z)^{-1}$ also admits such limit. Then clearly
\begin{equation}
P_\soft (A^{(\tau)}_{\text{eff}}-z)^{-1}\bigr|_{H_\soft}=R_0^{(\tau)}(z),\quad z\in D\subset \mathbb C,
\label{R_def}
\end{equation}
where $R^{(\tau)}_0(z)$ and $A^{(\tau)}_{\text{eff}}$ are the limits introduced above. The powerful idea of simplifying the required analysis by passing over the resolvent ``sandwiched'' by orthogonal projections onto a carefully chosen subspace stems from the pioneering work of Lax and Phillips \cite{LP}, where the resulting sandwiched operator is shown to be the resolvent of a dissipative operator. This idea was later successfully extended to the case of generalised resolvents in \cite{DM}, as well as in \cite{AdamyanPavlov} with the scattering theory in mind.

The function  $R^{(\tau)}_0(z)$ defined by (\ref{R_def}) is a generalised resolvent, whereas $A^{(\tau)}_{\text{eff}}$ is its out-of-space self-adjoint extension (or \emph{Strauss dilation} \cite{Strauss}, as we will refer to it below). By a theorem of Neumark \cite{Naimark1940} (cf. \cite{Naimark1943}) this dilation is defined uniquely up to a unitary transformation of a special form, which leaves the subspace $H_\soft$ intact, provided that a minimality condition holds. This minimality condition is written as
$$
H=\bigvee_{\Im z\not =0}(A^{(\tau)}_{\text{eff}}-z)^{-1} H_\soft.
$$
This can be read as follows: one has minimality, provided that there are no eigenmodes in the effective media modelled by the operator $A^{(\tau)}_\varepsilon,$ and therefore in the medium modelled by the operator $A^{(\tau)}_{\text{eff}}$ as well, such that they never ``enter'' the soft component of the medium. A quick glance at our setup helps one immediately convince oneself that this must be true. But then it would follow that the effective medium is completely determined, up to a unitary transformation, by $R_0^{(\tau)}(z).$ We must admit that the Neumark-Strauss general theory is not directly applicable in our setting. Part of the reason for this is that $R_\e^{(\tau)}(z)$ in general does not converge (it will be shown below to admit an asymptotic expansion instead). Even in Examples (0) and (2), where one can obtain a limit proper, one still needs to prove that the resolvents $(A_\e^{(\tau)}-z)^{-1}$ converge as well. Therefore, in what follows we only use the general theory presented above as a guide. We manage to compute the required asymptotics of the resolvents, thus eliminating the non-uniqueness due to the unitary transformation mentioned above.

Inspired by the above general theory, we base our analysis on establishing the asymptotics of $R_\e^{(\tau)}(z).$ If one writes down the boundary-value problem defining $R_\e^{(\tau)}(z)$ (as we do below), one realises that it is an ODE with piecewise constant symbol independent of $\e$, so that its study does not involve homogenisation in the usual sense. Instead, one faces $\e$- and $z$-dependent boundary conditions, effectively reducing the original problem to a much simpler task of the asymptotic analysis of these boundary conditions as $\varepsilon\to0.$

In this section, we will start the asymptotic analysis of the sandwich $R_\e^{(\tau)}(z)$. This is based on the Kre\u\i n formula applied to generalised resolvents of the class considered. We start by deriving a convenient representation for $R_\e^{(\tau)}(z)$.

We assume throughout that $z\in \mathbb C$ is separated from the spectrum of the original operator family, more precisely, we assume that $z\in K_\sigma$, where
$$
K_\sigma:=\bigl\{z\in \mathbb C |\ z\in K \text{ a compact set in } \mathbb C,\ \text{dist}(z, \mathbb R)\geq \sigma>0\bigr\}.
$$
After we have established the operator-norm asymptotics of $(A_\e^{(\tau)}-z)^{-1}$ for $z\in K_\sigma,$ the result is extended to a compact set the distance of which to the spectrum of the leading order of the asymptotics is bounded below by $\sigma.$

From \eqref{eq:resolvent} and the material of Section \ref{triple_section} it follows that for all
$z\in K_\sigma$,
\begin{equation}
\label{eq:resolvent1}
(A_\e^{(\tau)}-z)^{-1}=(A_\infty^{(\tau)}-z)^{-1}-  \gamma(z)\bigl(M^{(\tau)}(z)\bigr)^{-1}\Gamma_1 (A_\infty^{(\tau)}-z)^{-1},
\end{equation}
where we have abbreviated $M_\e^{(\tau)}$ as $M^{(\tau)}$ and the decoupled operator $A_\infty^{(\tau)}$ is the restriction of $A_{\max}^{(\tau)}$ to the domain $\dom(A_\infty^{(\tau)})=\dom(A_{\max}^{(\tau)})\cap \ker \Gamma_0$. From the definition of $\Gamma_0$ it is clear that this operator is nothing but the Dirichlet decoupling associated with $A_{\max}^{(\tau)}$, {\it i.e.}  $A_\infty^{(\tau)}=\oplus_{j=1}^n A_{\infty}^{(\tau),(j)}$, where for each $j=1,\dots, n=n(\mathbb G)$ the self-adjoint operator $A_{\infty}^{(\tau),(j)}$ is defined on the edge $e^{(j)}$ of the graph $\mathbb G$, i.e., on the Hilbert space $L^2(0, l^{(j)})$, by the differential expression defining $A_\e^{(\tau)}$ on this same edge. The domain of $A_{\infty}^{(\tau),(j)}$ is set by the following:
$$
\dom(A_{\infty}^{(\tau),(j)})=\bigl\{ u\in W^{2,2}(0,l^{(j)}) \text{ such that } u(0)=u(l^{(j)})=0\bigr\}.
$$
Therefore, for the self-adjoint restrictions $A_\infty^{(\tau),\stiff(\soft)}$ of $A_{\max}^{\stiff(\soft)}$ one has the following orthogonal decomposition relative to the split $H=H_\soft \oplus H_\stiff$:
$$
A_\infty^{(\tau)}=A_\infty^{(\tau),\stiff}\oplus A_\infty^{(\tau),\soft}, \text{ where }
\dom A_\infty^{(\tau),\stiff(\soft)}:=\dom A_{\max}^{\stiff (\soft)}\cap \ker \Gamma_0^{\stiff(\soft)}.
$$

One arrives at the following lemma.
\begin{lemma}\label{lemma:gen_resolvent}
One has
\begin{equation}\label{eq:quasi-Krein1}
P_\soft(A_\e^{(\tau)}-z)^{-1}P_\soft=(A_{\infty}^{(\tau),\soft}-z)^{-1}-
 \gamma^\soft(z)\bigl(M^{(\tau)}_\soft(z)-B^{(\tau)}(z)\bigr)^{-1}\Gamma_1^\soft (A_{\infty}^{(\tau),\soft}-z)^{-1},\ \ z\in K_\sigma,
\end{equation}
where an abbreviation $M^{(\tau)}_\soft:=M^{(\tau),\soft}_\e$ is used,
$$
\begin{gathered}
B^{(\tau)}(z):=-M^{(\tau),\stiff}_\e,\qquad \gamma^\soft(z)=\bigl(\Gamma_0^\soft|_{\text{\rm ker\,}(A_{\max}^{\soft}-z)}\bigr)^{-1},
\end{gathered}
$$
and the right-hand side of (\ref{eq:quasi-Krein1}) is correctly defined for all $z\in K_\sigma,$ due to the analytic properties of the matrix functions $M^{(\tau)}_\soft$ and $B^{(\tau)}.$
\end{lemma}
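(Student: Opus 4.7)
The plan is to derive \eqref{eq:quasi-Krein1} by sandwiching the Kre\u\i n formula \eqref{eq:resolvent1} between two copies of $P_\soft$ and exploiting two structural facts: the orthogonal splitting of $A_\infty^{(\tau)}$ along $H=H_\stiff\oplus H_\soft$ stated above, and the additivity of the $M$-matrix (Proposition~\ref{thm:additivity}). For the first term on the right-hand side of \eqref{eq:resolvent1} the splitting $A_\infty^{(\tau)}=A_\infty^{(\tau),\stiff}\oplus A_\infty^{(\tau),\soft}$ makes $(A_\infty^{(\tau)}-z)^{-1}$ commute with $P_\soft$, so $P_\soft(A_\infty^{(\tau)}-z)^{-1}P_\soft=(A_\infty^{(\tau),\soft}-z)^{-1}$.

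For the second term I would separately identify each of the three factors $\gamma(z)$, $(M^{(\tau)})^{-1}$, $\Gamma_1(A_\infty^{(\tau)}-z)^{-1}$. Since the action of $A_{\max}^{(\tau)}$ is edge-local, $\ker(A_{\max}^{(\tau)}-z)=\ker(A_{\max}^\stiff-z)\oplus\ker(A_{\max}^\soft-z)$; picking $\phi\in\mathbb C^N$ and writing $\gamma(z)\phi=u^\soft\oplus u^\stiff$, the weighted continuity \eqref{domAmax} forces $w_V(e)u^\soft_e(V)=\phi_V$ at every $V$ and soft edge $e\sim V$, hence $u^\soft=\gamma^\soft(z)\phi$; this gives $P_\soft\gamma(z)=\gamma^\soft(z)$. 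Conversely, for any $f\in H_\soft$ the function $(A_\infty^{(\tau),\soft}-z)^{-1}f$ extended by zero across $\mathbb G_\stiff$ trivially satisfies \eqref{domAmax} (all common values at vertices are zero), so it belongs to $\dom(A_{\max}^{(\tau)})$; the contribution of the vanishing stiff edges to $\widehatdntau$ is zero, so $\Gamma_1(A_\infty^{(\tau)}-z)^{-1}P_\soft=\Gamma_1^\soft(A_\infty^{(\tau),\soft}-z)^{-1}$. Finally Proposition~\ref{thm:additivity} gives $M^{(\tau)}(z)=M_\soft^{(\tau)}(z)+M_\e^{(\tau),\stiff}(z)=M_\soft^{(\tau)}(z)-B^{(\tau)}(z)$, and substituting these three identities into \eqref{eq:resolvent1} reproduces precisely \eqref{eq:quasi-Krein1}. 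Invertibility of $M_\soft^{(\tau)}-B^{(\tau)}=M^{(\tau)}$ on $K_\sigma$ is then inherited from the self-adjointness of $A_\e^{(\tau)}$ (whose spectrum is real) and the analyticity of $z\mapsto M^{(\tau)}(z)^{-1}$ off $\mathbb R$.

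The main obstacle, modest but genuine, is the extension-by-zero argument in step two. At interface vertices where soft and stiff edges meet, one must check that the zero extension preserves the weighted continuity \eqref{domAmax} in the presence of the Datta--Das Sarma weights $w_V(e)$, and that $\Gamma_1$ applied to such extensions agrees with $\Gamma_1^\soft$ rather than an affine combination thereof. Under the standing assumption $N(\mathbb G)=N(\mathbb G_\soft)=N(\mathbb G_\stiff)$ (which, as noted after \eqref{eq:triple}, can be arranged by \cite[Appendix A]{Physics}), this verification is routine but has to be carried out to justify the identification of the boundary operators.
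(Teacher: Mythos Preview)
Your proposal is correct and follows essentially the same route as the paper: sandwich the Kre\u\i n formula \eqref{eq:resolvent1} by $P_\soft$, use the block decomposition of $A_\infty^{(\tau)}$ for the first term, and for the second term combine $P_\soft\gamma(z)=\gamma^\soft(z)$ with the additivity $M^{(\tau)}=M^{(\tau)}_\soft-B^{(\tau)}$.

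The only point of departure is how you treat the right-hand factor $\Gamma_1(A_\infty^{(\tau)}-z)^{-1}P_\soft$. You verify directly, via extension by zero across $\mathbb G_\stiff$, that this equals $\Gamma_1^\soft(A_\infty^{(\tau),\soft}-z)^{-1}$; the paper instead takes the adjoint of $P_\soft\gamma(\bar z)=\gamma^\soft(\bar z)$ to obtain $\gamma^*(\bar z)P_\soft=(\gamma^\soft(\bar z))^*$ and then invokes the general identity $(\gamma^\soft(\bar z))^*=\Gamma_1^\soft(A_\infty^{(\tau),\soft}-z)^{-1}$ from \cite{DM}. The paper's detour through the adjoint sidesteps the interface verification you flag as the ``main obstacle'' (that weighted continuity and $\Gamma_1$ behave well under zero-extension), trading that check for a single citation. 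Your version is slightly more self-contained; the paper's is slightly more economical. Both are sound.
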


\begin{proof}
It suffices to prove the claim in relation to the second summand on the right hand side of \eqref{eq:resolvent1}. This follows immediately from the identity $P_\soft \gamma(z)=\gamma^\soft(z)$, which in its turn is a consequence of the fact that $\gamma^\soft(z)$ and $\gamma(z)$ solve the boundary value problems $A_{\max}^\soft u=zu$ and $A_{\max}u=zu$ subject to boundary conditions $\Gamma_0^\soft u$ and $\Gamma_0 u$, respectively. Since $(\mathcal H, \Gamma_0^\soft, \Gamma_1^\soft)$ is a boundary triple for the maximal operator $A_{\max}^\soft$ (see Section \ref{triple_section}), one can apply the identity
$$
(\gamma^\soft)^*(\bar z)=\Gamma_1^\soft (A_{\infty}^{(\tau),\soft}-z)^{-1}
$$
(see \cite{DM}), which completes the proof.
\end{proof}

Lemma \ref{lemma:gen_resolvent} shows that the influence of the stiff component on the generalised resolvent $R_\e^{(\tau)}(z)$ is encoded in the $z$-dependent boundary conditions. The same lemma allows us to view $R_\e^{(\tau)}(z)$ at each point $z$ as the resolvent (computed at the same value $z$) of an anti-dissipative almost solvable extension of $A_{\max}^\soft$ defined by the parameterising matrix $B^{(\tau)}(z)$, cf. \cite{Strauss}. Precisely, having restricted all the operators appearing on both sides of \eqref{eq:quasi-Krein1} to the space $H_\soft$, one ascertains that the right-hand side of this formula represents the generalised resolvent $(A_{B^{(\tau)}(z)}-z)^{-1}$, {\it i.e.} the solution operator of the boundary-value problem
$$
A_{\max}^{\soft}u-zu=f, \quad f\in H_\soft
$$
with the following $z$-dependent boundary conditions described by the matrix $B^{(\tau)}(z)$ (note, that this matrix only depends on the stiff component of the medium):
$$
u\in \dom A_{\max}^{\soft}; \quad \Gamma_1^\soft u = B^{(\tau)}(z)\Gamma_0^\soft u.
$$

\begin{corollary}\label{cor:Strauss} One has the following representation for $R_\e^{(\tau)}$:
$$
R_\e^{(\tau)}(z)\equiv P_\soft(A_\e^{(\tau)}-z)^{-1}P_\soft=(A_{B^{(\tau)}(z)}-z)^{-1},
$$
where the operator on the left hand side is treated as an operator in $H_\soft$.
\end{corollary}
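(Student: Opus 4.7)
The plan is to deduce the corollary directly from Lemma \ref{lemma:gen_resolvent} by recognising its right-hand side as the Kre\u\i n formula applied to a $z$-dependent extension of $A_{\max}^\soft$. Concretely, I would first invoke Proposition \ref{prop:Krein} for the maximal operator $A_{\max}^\soft$ with the boundary triple $(\mathcal H, \Gamma_0^\soft, \Gamma_1^\soft)$ constructed in Section \ref{triple_section}, taking as the parameterising bounded operator the matrix $B^{(\tau)}(z)=-M_\e^{(\tau),\stiff}(z)$. This yields
\begin{equation*}
(A_{B^{(\tau)}(z)}-z)^{-1}=(A_\infty^{(\tau),\soft}-z)^{-1}+ \gamma^\soft(z)\bigl(B^{(\tau)}(z)-M^{(\tau)}_\soft(z)\bigr)^{-1}\Gamma_1^\soft (A_\infty^{(\tau),\soft}-z)^{-1},
\end{equation*}
valid on $\rho(A_{B^{(\tau)}(z)})\cap \rho(A_\infty^{(\tau),\soft})$, where I have used the identity $(\gamma^\soft)^*(\bar z)=\Gamma_1^\soft(A_\infty^{(\tau),\soft}-z)^{-1}$ already cited in the proof of Lemma \ref{lemma:gen_resolvent}. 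The comparison with \eqref{eq:quasi-Krein1} is then the trivial sign identity $(B-M)^{-1}=-(M-B)^{-1}$, which matches the two expressions term by term.

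Before concluding, I would verify the hypotheses needed to read the right-hand side above as a (generalised) resolvent in $H_\soft$. Two points are relevant: (a) the boundary space for the soft triple must coincide with $\mathbb C^N$ so that $B^{(\tau)}(z)$ makes sense as a map $\mathcal H^\soft\to\mathcal H^\soft$; this is the identification $N(\mathbb G)=N(\mathbb G_\soft)=N(\mathbb G_\stiff)$ in force throughout Section \ref{analysis_sec} (and, as noted in Section \ref{triple_section}, either automatic in the three worked examples or reducible to this case by \cite[Appendix~A]{Physics}); (b) one should check that $-B^{(\tau)}(z)=M_\e^{(\tau),\stiff}(z)$ is an operator-valued $R$-function, so that the object on the left-hand side is a genuine generalised resolvent in the sense of \eqref{eq:out-of-space}. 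This is immediate: being the Weyl--Titchmarsh $M$-matrix of the self-adjoint operator $A_{\max}^\stiff$ relative to the triple $(\mathcal H, \Gamma_0^\stiff, \Gamma_1^\stiff)$, it is analytic off the real axis with non-negative imaginary part in $\mathbb C_+$, which is precisely the $R$-function property required in \eqref{eq:out-of-space}.

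Finally, the interpretation of $(A_{B^{(\tau)}(z)}-z)^{-1}$ as the solution operator of the boundary-value problem
\begin{equation*}
A_{\max}^{\soft}u-zu=f,\quad u\in\dom A_{\max}^\soft,\quad \Gamma_1^\soft u=B^{(\tau)}(z)\Gamma_0^\soft u,
\end{equation*}
is a direct transcription of the definition of almost solvable extensions from Definition \ref{Def_BoundTrip}, with $B$ now allowed to depend on $z$ as discussed around \eqref{gen_prob}. Since the domain of the left-hand side in \eqref{eq:quasi-Krein1} has already been reduced to $H_\soft$ (both projections act on $H_\soft$-valued data, and the range of the correction term lies in $H_\soft$ because $\gamma^\soft(z)$ takes values in $\ker(A_{\max}^\soft-z)\subset H_\soft$), the equality of the two expressions furnishes the claimed identification in $H_\soft$.

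The only step that might require a moment's care is the consistent handling of the $z$-dependence of $B^{(\tau)}(z)$ in Proposition \ref{prop:Krein}: one has to ensure that the invertibility of $B^{(\tau)}(z)-M^{(\tau)}_\soft(z)$ is preserved for $z\in K_\sigma$. This, however, follows at once from \eqref{eq:additivity}, which rewrites $M^{(\tau)}_\soft-B^{(\tau)}=M^{(\tau)}_\soft+M^{(\tau),\stiff}_\e=M^{(\tau)}$, the total $M$-matrix of the self-adjoint operator $A_\e^{(\tau)}$; its invertibility off the real axis is part of the Weyl theory of Section \ref{triple_section} and is precisely the footnoted statement in Proposition \ref{prop:Krein}. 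No further estimates are needed at this stage.
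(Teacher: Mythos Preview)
Your proposal is correct and follows precisely the route the paper takes: the corollary is obtained by recognising the right-hand side of \eqref{eq:quasi-Krein1} as the Kre\u\i n formula of Proposition~\ref{prop:Krein} applied to $A_{\max}^{\soft}$ with the triple $(\mathcal H,\Gamma_0^\soft,\Gamma_1^\soft)$ and parameterising matrix $B^{(\tau)}(z)$, the sign match being $(B-M)^{-1}=-(M-B)^{-1}$. Your additional checks (the $R$-function property of $M_\e^{(\tau),\stiff}$ and the invertibility of $M^{(\tau)}_\soft-B^{(\tau)}=M^{(\tau)}$ on $K_\sigma$ via \eqref{eq:additivity}) are exactly the points the paper leaves implicit in the discussion preceding the corollary.
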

The problem of finding an operator asymptotics for $P_\soft(A_\e^{(\tau)}-z)^{-1}P_\soft$, as $\e\to0,$ is thus effectively reduced to the problem of finding the same for the matrix-function $B^{(\tau)}(z)$.

The following theorem proves crucial in the analysis to follow.

\begin{theorem}\label{thm:crucial}
Let $(\mathcal H, \Gamma_0^\soft, \Gamma_1^\soft)$ be a boundary triple\footnote{We do not assume here that this is necessarily {\it the same} triple as introduced for the named operator in Section \ref{triple_section}.} for $A_{\max}^{\soft}$.
Assume that for two bounded operators $B^{(\tau)}(z)$ and $B^{(\tau)}_\eff(z)$ in $\mathcal H$ which can be both assumed  $\e$- and $z$-dependent, the estimate
$$
B^{(\tau)}(z)- B^{(\tau)}_\eff (z)= O(\e^r)
$$
holds in the sense of the uniform operator norm in $\mathcal H$ for some positive $r$ and uniformly with respect to $\tau\in[-\pi,\pi)$ and $z\in K_\sigma$.
Assume further that $B^{(\tau)}$ and $B^{(\tau)}_\eff$ are double-sided operator-valued $R$-functions, so that in particular$(B^{(\tau)}_\eff (z))^*=B^{(\tau)}_\eff (\bar z)$ and the same holds for $B^{(\tau)}.$

Let $A_{B^{(\tau)}}$ and $A_{B^{(\tau)}_\eff}$ be the anti-dissipative for $z\in{\mathbb C}_+$ (dissipative for $z\in{\mathbb C}_-$) restrictions of $A_{\max}^\soft$ corresponding to the matrices $B^{(\tau)}$ and $B^{(\tau)}_\eff$, respectively.
Then the corresponding generalised resolvents admit the estimate
$$
\bigl\| (A_{B^{(\tau)}}-z)^{-1}-(A_{B^{(\tau)}_\eff}-z)^{-1}\bigr\|_{H_\soft\to H_\soft}= O(\e^r)
$$
uniformly in $\tau$ and $z\in K_\sigma$.
\end{theorem}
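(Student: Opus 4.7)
The plan is to apply the Kre\u\i n resolvent formula of Proposition \ref{prop:Krein} to both $A_{B^{(\tau)}}$ and $A_{B^{(\tau)}_\eff}$ with respect to the boundary triple $(\mathcal H,\Gamma_0^\soft,\Gamma_1^\soft)$ of $A_{\max}^\soft$. Let $M^{(\tau)}_\soft(z)$ denote the corresponding $M$-matrix and $\gamma^\soft(z)$ the solution operator. Subtracting the two Kre\u\i n expansions causes the common Dirichlet term $(A_\infty^{(\tau),\soft}-z)^{-1}$ to cancel, and invoking the standard identity $\Gamma_1^\soft(A_\infty^{(\tau),\soft}-z)^{-1}=\gamma^\soft(\bar z)^*$ already used in Lemma \ref{lemma:gen_resolvent}, one obtains
\[
(A_{B^{(\tau)}}-z)^{-1}-(A_{B^{(\tau)}_\eff}-z)^{-1}=\gamma^\soft(z)\bigl[(B^{(\tau)}-M^{(\tau)}_\soft)^{-1}-(B^{(\tau)}_\eff-M^{(\tau)}_\soft)^{-1}\bigr]\gamma^\soft(\bar z)^*.
\]
The problem is thereby reduced to the analysis of a single finite-dimensional object sandwiched by $\gamma^\soft(z)$ and $\gamma^\soft(\bar z)^*$.

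Next, I would apply the elementary resolvent-type identity $X^{-1}-Y^{-1}=X^{-1}(Y-X)Y^{-1}$ with $X:=B^{(\tau)}-M^{(\tau)}_\soft$ and $Y:=B^{(\tau)}_\eff-M^{(\tau)}_\soft$, so that $Y-X=B^{(\tau)}_\eff-B^{(\tau)}$ is by hypothesis $O(\e^r)$ in operator norm on $\mathcal H$, uniformly in $\tau\in[-\pi,\pi)$ and $z\in K_\sigma$. It therefore suffices to establish that the three remaining factors, namely the two inverses $(B^{(\tau)}-M^{(\tau)}_\soft)^{-1}$, $(B^{(\tau)}_\eff-M^{(\tau)}_\soft)^{-1}$, and the solution operator $\gamma^\soft(z)$ itself, admit operator-norm bounds uniform in $\tau$ and $z\in K_\sigma$. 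The bound on $\gamma^\soft(z)$ is mild: since the soft-component symbol $-a_e^2(d/dx+i\tau)^2$ is $\e$-independent and the boundary space $\mathcal H$ is finite-dimensional, $\gamma^\soft(z)$ can be written down explicitly on each edge and is manifestly bounded in $\tau$ and $z\in K_\sigma$.

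The main obstacle is the uniform estimate of the two inverses, and this is where the double-sided $R$-function hypothesis enters decisively. Consider $\Im z>0$ (the case $\Im z<0$ being symmetric). By the hypothesis $\Im z\cdot\Im B^{(\tau)}(z)\le 0$, while $M^{(\tau)}_\soft$ is itself a Nevanlinna--Herglotz function, so $\Im M^{(\tau)}_\soft(z)\ge 0$; hence $\Im\bigl(M^{(\tau)}_\soft(z)-B^{(\tau)}(z)\bigr)\ge \Im M^{(\tau)}_\soft(z)$. The classical Kre\u\i n formula $\Im M^{(\tau)}_\soft(z)=(\Im z)\,\gamma^\soft(\bar z)^*\gamma^\soft(z)$, coupled with the fact that $\gamma^\soft(z)$ is a bijection from $\mathcal H$ onto $\ker(A_{\max}^\soft-z)$ whose norm and whose inverse norm are uniformly controlled on $K_\sigma$ (the edge-wise Dirichlet spectra on $\mathbb G_\soft$ being $\e$-independent and discrete in $\mathbb R_+$, hence separated from $K_\sigma$ by a positive distance), yields a lower bound $\Im M^{(\tau)}_\soft(z)\ge c(\sigma) I$ uniform in $\tau\in[-\pi,\pi)$. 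Combining this with the elementary inequality $\|T^{-1}\|\le\|(\Im T)^{-1}\|$, valid for any bounded $T$ with strictly positive imaginary part, furnishes the required uniform bound on $\|(M^{(\tau)}_\soft-B^{(\tau)})^{-1}\|$; the argument for $B^{(\tau)}_\eff$ is identical. Multiplying the three uniform bounds and the $O(\e^r)$ estimate for $B^{(\tau)}_\eff-B^{(\tau)}$ gives the claim.
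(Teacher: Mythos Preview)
Your argument is correct and follows the same skeleton as the paper's proof: apply the Kre\u\i n formula to both generalised resolvents, cancel the Dirichlet term, and use the second resolvent identity to sandwich $B^{(\tau)}_\eff-B^{(\tau)}=O(\e^r)$ between bounded factors. The difference lies in how the uniform boundedness of the outer factors is obtained.

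The paper does not bound $(M^{(\tau)}_\soft-B^{(\tau)})^{-1}$ and $\gamma^\soft$ separately. Instead it groups them, observing via the Kre\u\i n formula that
\[
\bigl(M^{(\tau)}_\soft(z)-B^{(\tau)}(z)\bigr)^{-1}\gamma^\soft(\bar z)^{*}=\Gamma_0^\soft\bigl(A_{B^{(\tau)}(z)}-z\bigr)^{-1},
\]
and then argues that the generalised resolvent on the right is uniformly bounded from $H_\soft$ into $\dom A_{\max}^\soft$ (equipped with the graph norm), after which the triple property bounds $\Gamma_0^\soft$. The adjoint factor $\gamma^\soft(z)(M^{(\tau)}_\soft-B^{(\tau)}_\eff)^{-1}$ is handled by taking adjoints and invoking $(B^{(\tau)}_\eff(z))^{*}=B^{(\tau)}_\eff(\bar z)$. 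Your route is more direct: you estimate $(M^{(\tau)}_\soft-B^{(\tau)})^{-1}$ on $\mathcal H$ alone via the Herglotz lower bound $\Im M^{(\tau)}_\soft(z)\ge c(\sigma)I$ combined with $\Im B^{(\tau)}\le 0$ in $\mathbb C_+$, and deal with $\gamma^\soft$ independently. Both approaches rely on exactly the same hypothesis (the $R$-function property of $-B^{(\tau)}$), and neither requires finite-dimensionality of $\mathcal H$ in any essential way; the paper's version is arguably more robust in that it never needs the explicit lower bound on $\gamma^\soft(z)^{*}\gamma^\soft(z)$, while yours is more self-contained. One minor slip: the identity you quote should read $\Im M^{(\tau)}_\soft(z)=(\Im z)\,\gamma^\soft(z)^{*}\gamma^\soft(z)$ rather than $\gamma^\soft(\bar z)^{*}\gamma^\soft(z)$, but this does not affect the positivity argument.
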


\begin{proof}
We use Corollary \ref{cor:Strauss} and then representation \eqref{eq:quasi-Krein1} of Lemma \ref{lemma:gen_resolvent} for both generalised resolvents.
The first summands on the right hand side of Kre\u\i n formula \eqref{eq:quasi-Krein1} cancel out; it remains to be seen that the difference of the second ones admits the estimate claimed. The second Hilbert identity yields:
\begin{multline}
(A_{B^{(\tau)}(z)}-z)^{-1}-(A_{B^{(\tau)}_\eff(z)}-z)^{-1}\\[0.3em]
=\gamma^\soft(z)\bigl(M^{(\tau)}_\soft(z)-B^{(\tau)}_\eff(z)\bigr)^{-1}(B^{(\tau)}(z)-B^{(\tau)}_\eff (z))\bigl( M^{(\tau)}_\soft(z)-B^{(\tau)}(z)\bigr)^{-1}(\gamma^\soft(\bar z))^*.
\end{multline}

On the other hand, using again the Kre\u\i n formula, where $\Gamma_0^\soft$ has been applied to both sides, we obtain
$$
\bigl(M^{(\tau)}_\soft(z)-B^{(\tau)}(z)\bigr)^{-1}(\gamma^\soft(\bar z))^*=  \Gamma_0^\soft (A_{ B^{(\tau)}(z)}^{(\tau)}-z)^{-1}.
$$
The resolvent $(A_{B^{(\tau)}(z)}^{(\tau)}-z)^{-1}$ on the right hand side is uniformly bounded in $z\in K_\sigma$ as an operator from $L^2({\mathbb G}_\soft)$ to the domain of the operator $A_{B^{(\tau)}(z)}^{(\tau)}$ equipped with its graph norm and therefore to $\dom A_{\max}^\soft$ considered as a Hilbert space; the triple property (see Definition \ref{Def_BoundTrip}) is then used to ascertain the boundedness of the operator $\Gamma_0^\soft (A_{B^{(\tau)}(z)}^{(\tau)}-z)^{-1}$.

The boundedness of the operator $\gamma^\soft(z)\bigl(M^{(\tau)}_\soft(z)-B^{(\tau)}_\eff(z)\bigr)^{-1}$ is shown in a similar way by passing to the adjoint and using the fact that $(B^{(\tau)}_\eff (z))^*=B^{(\tau)}_\eff (\bar z)$ and $(M^{(\tau)} (z))^*= M^{(\tau)} (\bar z)$. It follows that
$$
(A_{B^{(\tau)}(z)}-z)^{-1}-(A_{B^{(\tau)}_\eff(z)}-z)^{-1}=T_1(z) (B^{(\tau)}(z)-B^{(\tau)}_\eff (z))T_2(z)
$$
with $T_2(z)$ uniformly bounded from $L^2({\mathbb G}_\soft)$ to $\mathcal H$ and $T_1(z)$ uniformly bounded from
$\mathcal H$ to $L^2({\mathbb G}_\soft)$. The assumption that $B^{(\tau)}(z)-B^{(\tau)}_\eff (z)= O(\e^r)
$ is then used to complete the proof.
\end{proof}

As we will see in Section 7 below (in particular, {\it cf.} Remark \ref{rem:for_start}), the strategy suggested by the latter theorem is of a general nature, allowing to obtain a norm-resolvent asymptotics not only for $R_\e^{(\tau)}$, but also for the original resolvent $(A_\e^{(\tau)}-z)^{-1}$. It must be noted however that this theorem alone does not yield the most efficient form of the asymptotics sought, see Sections 7 and 8 for further details.

In the remaining part of this section, we apply Theorem \ref{thm:crucial} to the three examples introduced in Section
\ref{section:examples} to obtain the leading terms in asymptotic expansion of $R_\e^{(\tau)}$.

In doing so and facilitated by the fact that the problem is essentially on ODE one, we adopt the approach based on an direct calculation, as we see this rather instructive. At the same time, we refer the reader to Section 7, which allows for a simplification of relevant calculations owing to certain arguments of the general theory developed there.

\subsection*{(0) A medium with both components disconnected}
Here $\mathbb G_\soft$ is a graph of precisely one edge $e^{(2)}$. The edge $e^{(2)}$ is identified with the segment $(0,l^{(2)})$.

Define the $\e$-independent operator $R_\eff^{(\tau)}$ to be the generalised resolvent solving the following boundary problem:
\begin{equation}\label{eq:domainhom0}
\begin{aligned}
-\left(\frac d{dx}+i\tau\right)^2 u-zu&=f, \quad f\in H_\soft=L^2(\mathbb G_\soft),\ u\in W^{2,2}(\mathbb G_\soft),\\[0.3em]
u|_0- \bxit u|_{l^{(2)}}&=0, \\[0.6em]
\dtau u|_0- \bxit \dtau u|_{l^{(2)}}&= -z l^{(1)} u|_0,
\end{aligned}
\end{equation}
where
\[
\xi^{(\tau)}:=\exp(i l^{(1)} \tau), \qquad
\dtau u:=\biggl(\frac d {dx}+i\tau\biggr)u.
\]

\begin{lemma}\label{lemma:gen_res_0}
The generalized resolvent $R_\e^{(\tau)}(z):=P_\soft (A_\e^{(\tau)}-z)^{-1} P_\soft$ admits the following estimate in the uniform operator norm topology:
$$
R_\e^{(\tau)}(z)-R_\eff^{(\tau)}(z)=O(\e^2).
$$
This estimate is uniform in $\tau\in[-\pi,\pi)$ and $z\in K_\sigma$.
\end{lemma}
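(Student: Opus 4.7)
My plan is to combine Theorem~\ref{thm:crucial} with a direct Taylor expansion of the stiff $M$-matrix from Lemma~\ref{lemma:M_0}. By Corollary~\ref{cor:Strauss}, $R_\e^{(\tau)}(z)=(A_{B^{(\tau)}(z)}-z)^{-1}$ with $B^{(\tau)}(z)=-M^{(\tau),\stiff}_\e(z)$. Applying the elementary expansions $\cot x = x^{-1}-x/3+O(x^3)$ and $\csc x = x^{-1}+x/6+O(x^3)$ at $x=k\e l^{(1)}/a_1$ in~(\ref{D2N_0}) yields, uniformly in $z\in K_\sigma$ and $\tau\in[-\pi,\pi)$,
\[
B^{(\tau)}(z) = \left(\frac{a_1^2}{\e^2 l^{(1)}} + \frac{z l^{(1)}}{6}\right) P^{(\tau)} - \frac{z l^{(1)}}{2}\,I + O(\e^2),\qquad P^{(\tau)}:=\begin{pmatrix}1 & -\bxit \\ -\xit & 1\end{pmatrix}.
\]
I take $B_\eff^{(\tau)}(z)$ to be the sum of the first two terms. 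Since $P^{(\tau)}$ is self-adjoint with spectrum $\{0,2\}$, one checks that $(B_\eff^{(\tau)}(z))^*=B_\eff^{(\tau)}(\bar z)$ and that $-B_\eff^{(\tau)}$ is an operator-valued $R$-function, so that Theorem~\ref{thm:crucial} (with $r=2$) is applicable and delivers
\[
\bigl\|R_\e^{(\tau)}(z)-\bigl(A_{B_\eff^{(\tau)}(z)}-z\bigr)^{-1}\bigr\|_{H_\soft\to H_\soft}=O(\e^2).
\]

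Next I compare $(A_{B_\eff^{(\tau)}(z)}-z)^{-1}$ with $R_\eff^{(\tau)}(z)$ directly. Writing the relation $\Gamma_1^\soft u = B_\eff^{(\tau)}(z)\Gamma_0^\soft u$ componentwise, with $\Gamma_0^\soft u=(u(0),u(l^{(2)}))^\top$ and $\Gamma_1^\soft u=(\dtau u(0),-\dtau u(l^{(2)}))^\top$, and using the orthogonal decomposition $\mathbb C^2=\ran P^{(\tau)}\oplus\ker P^{(\tau)}$, the system splits into a $\ran P^{(\tau)}$-component equation which, due to the $O(\e^{-2})$ coefficient in front of the rank-one part, forces $u(0)-\bxit u(l^{(2)})=O(\e^2)\cdot\bigl(\dtau u(0)+\tfrac{zl^{(1)}}{2}u(0)\bigr)$, and a $\ker P^{(\tau)}$-component equation which produces $\dtau u(0)-\bxit\dtau u(l^{(2)}) = -\tfrac{zl^{(1)}}{2}\bigl(u(0)+\bxit u(l^{(2)})\bigr)$. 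Substituting the first into the second recovers exactly the two boundary conditions of~(\ref{eq:domainhom0}) up to an $O(\e^2)$ perturbation in the boundary data. Consequently, for $f\in H_\soft$ the difference $w:=(A_{B_\eff^{(\tau)}(z)}-z)^{-1}f-R_\eff^{(\tau)}(z)f$ satisfies the homogeneous equation $(A_{\max}^\soft-z)w=0$ subject to boundary data of order $\e^2\|f\|$; since this equation has a two-dimensional solution space and, for $z\in K_\sigma$, the boundary-data-to-solution map is uniformly bounded, one concludes $\|w\|_{L^2}=O(\e^2)\|f\|$. Combined with the previous display, this proves the lemma.

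The main obstacle I anticipate lies in the rank-one blow-up of $B^{(\tau)}(z)$ at rate $\e^{-2}$. Although Theorem~\ref{thm:crucial} only requires the \emph{difference} $B^{(\tau)}-B_\eff^{(\tau)}$ to be $O(\e^r)$, its proof relies on uniform boundedness of certain resolvent-type factors $T_1(z)$, $T_2(z)$, which in turn depends on uniform separation of $K_\sigma$ from the spectra of the $\e$-dependent anti-dissipative extensions $A_{B^{(\tau)}}$ and $A_{B_\eff^{(\tau)}}$. Tracking this uniformity through the rank-one singular limit, and ensuring correspondingly that the boundary-perturbation bound in the last step is uniform in $\e$ and $\tau$, is the analytically delicate part of the argument; the algebraic Taylor expansion and the final two-dimensional ODE comparison are routine once the effective matrix $B_\eff^{(\tau)}$ has been identified.
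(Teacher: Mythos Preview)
Your approach is correct in outline but takes a genuinely different route from the paper. The paper (Appendix~A) proceeds by first diagonalising the leading part $B_0$ of $B^{(\tau)}$ via a unitary $X$, and then passing to a \emph{new} boundary triple $(\mathcal H,\widetilde\Gamma_0,\widetilde\Gamma_1)$ in which the singular $\Port$-direction of $\widehat\Gamma_0$ is swapped with $\widehat\Gamma_1$, namely $\widetilde\Gamma_0=\P\widehat\Gamma_0+\Port\widehat\Gamma_1$, $\widetilde\Gamma_1=\P\widehat\Gamma_1-\Port\widehat\Gamma_0$. In this triple an explicit computation shows $\widetilde B^{(\tau)}(z)=\diag\{-l^{(1)}z/2,\,0\}+O(\e^2)$, so both $\widetilde B^{(\tau)}$ and its limit $\widetilde B_\eff^{(\tau)}$ are \emph{uniformly bounded}, and a single application of Theorem~\ref{thm:crucial} yields exactly the boundary conditions~\eqref{eq:domainhom0}. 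The $\e^{-2}$ blow-up has been absorbed into the triple change, so no separate singular-limit argument is needed.

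Your two-step strategy---apply Theorem~\ref{thm:crucial} in the original triple to reach an intermediate $\e$-dependent problem, then compare this with $R_\eff^{(\tau)}$ by an ad hoc ODE boundary-perturbation estimate---is viable here because the boundary space is two-dimensional, but the second step is precisely where the work hides. Your concern about the first step is actually misplaced: the uniform bounds on $T_1,T_2$ in the proof of Theorem~\ref{thm:crucial} follow from the fact that both generalised resolvents are compressions of resolvents of self-adjoint dilations, hence bounded by $1/\sigma$ on $K_\sigma$ regardless of how large $B_\eff^{(\tau)}$ is. The genuine delicacy is in your second step, where you need uniform invertibility of the $2\times 2$ boundary system for $w$; this amounts to $z\in K_\sigma$ being uniformly separated from the spectrum of the Strauss dilation of $R_\eff^{(\tau)}$, which holds but should be stated. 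The paper's triple-swap device avoids this separate argument entirely and, more importantly, scales cleanly to the harder Example~(1) (Appendix~B), where your direct approach would become substantially more involved.
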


Note that in the case considered $R_\e^{(\tau)}$ has the uniform limit $R_\eff^{(\tau)}$ as $\e\to 0$.

The \emph{proof} of Lemma \ref{lemma:gen_res_0} is in Appendix A.

\subsection*{(1) A case of connected stiff component}
Here $\mathbb G_\soft$ is a graph of precisely one edge $e^{(2)}$. The edge $e^{(2)}$ is identified with the segment $(0,l^{(2)})$.

Define the $\e$-dependent operator $R_\eff^{(\tau)}$ to be the generalised resolvent solving the following boundary problem:
\begin{equation}\label{eq:domainhom1}
\begin{aligned}
-\left(\frac d{dx}+i\tau\right)^2 u-zu&=f, \quad f\in H_\soft=L^2(\mathbb G_\soft),\ u\in W^{2,2}(\mathbb G_\soft),\\[0.5em]
u|_{0}+\frac{\bxit}{|\xit|}u|_{l^{(2)}}&=0,\\[0.6em]
\dtau u|_{0}+\frac {\bxit}{|\xit|}\dtau u|_{l^{(2)}}&=\left(\biggl(\frac{l^{(1)}}{a_1^2}+\frac{l^{(3)}}{a_3^2}\biggr)^{-1}\biggl(\frac{\tau}{\e}\biggr)^2-(l^{(1)}+l^{(3)})z\right)u|_{0}.
\end{aligned}
\end{equation}
Here\footnote{Note that, in order to simplify notation, in this example we re-define the function $\xi^{(\tau)}$ compared to Example (0), in hope that this will not cause confusion.}
\begin{equation}
\xit=-\frac{a_1^2}{l^{(1)}}\exp(i\tau(l^{(1)}+l^{(3)}))-\frac{a_3^2}{l^{(3)}}\exp(-i\tau l^{(2)}),\qquad
\dtau u:=\biggl(\frac d {dx}+i\tau\biggr) u.
\label{star_form}
\end{equation}

\begin{lemma}\label{lemma:gen_res_1}
The generalized resolvent $R_\e^{(\tau)}(z):=P_\soft (A_\e^{(\tau)}-z)^{-1} P_\soft$ admits the following estimate in the uniform operator norm topology:
$$
R_\e^{(\tau)}(z)-R_\eff^{(\tau)}(z)=O(\e^2).
$$
This estimate is uniform in $\tau\in[-\pi,\pi)$ and $z\in K_\sigma$.
\end{lemma}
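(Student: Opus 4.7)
My plan is to follow the same strategy as for Example (0) (Lemma \ref{lemma:gen_res_0}, Appendix A), adapted to handle the essential new feature that the stiff $M$-matrix now carries an $O(\e^{-2})$ singular part. The starting point is Corollary \ref{cor:Strauss}, which identifies $R_\e^{(\tau)}(z)$ with the resolvent of the anti-dissipative (for $\Im z>0$) extension of $A_{\max}^\soft$ parametrised by $B^{(\tau)}(z)=-M^{(\tau),\stiff}_\e(z)$. Substituting the Taylor expansions $\cot x=1/x-x/3+O(x^3)$, $\csc x=1/x+x/6+O(x^3)$ with $x=k\e l^{(j)}/a_j$ into \eqref{D2N_1} yields, uniformly in $\tau\in[-\pi,\pi)$ and $z\in K_\sigma$,
\[
-M^{(\tau),\stiff}_\e(z)=\frac{1}{\e^2}\mathcal{M}_0(\tau)+\mathcal{M}_1(\tau,z)+O(\e^2),
\]
with $\mathcal{M}_0(\tau)$ having diagonal entries $A+C$ and off-diagonal entries $\bxit,\xit$, where $A:=a_1^2/l^{(1)}$, $C:=a_3^2/l^{(3)}$ and $\xit$ is as in \eqref{star_form}; $\mathcal{M}_1(\tau,z)$ is an explicit Hermitian $2\times 2$ matrix carrying the $z$-dependence at leading order.

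The crux of the analysis lies in the spectral structure of $\mathcal{M}_0(\tau)$: its eigenvalues are $(A+C)\pm|\xit|$, with eigenvectors $v_{\pm}^{(\tau)}:=(1,\pm\xit/|\xit|)^\top/\sqrt{2}$. The \emph{large} eigenvalue $(A+C)+|\xit|$, scaling like $\e^{-2}$ after division, acts along $v_{+}^{(\tau)}$; this singular direction is orthogonal to the effective constraint vector in \eqref{eq:domainhom1} and, in the limit, forces the constraint $u|_0+(\bxit/|\xit|)u|_{l^{(2)}}=0$ upon $\Gamma_0^\soft u$. The \emph{small} eigenvalue satisfies $(A+C)-|\xit|=O(\tau^2)$ as $\tau\to 0$; after division by $\e^2$ and combination with the appropriate diagonal contributions of $\mathcal{M}_1(\tau,z)$, it reproduces the right-hand side of the second line of \eqref{eq:domainhom1}. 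Decomposing $\mathbb{C}^2=\mathrm{span}\,v_{+}^{(\tau)}\oplus\mathrm{span}\,v_{-}^{(\tau)}$ thus matches $-M^{(\tau),\stiff}_\e(z)$ block-by-block to the mixed constraint/Robin structure of the effective problem.

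The main obstacle is that Theorem \ref{thm:crucial} cannot be invoked directly: the difference $B^{(\tau)}(z)-B^{(\tau)}_\eff(z)$ in the natural operator norm on $\mathcal{H}=\mathbb{C}^2$ is \emph{not} $O(\e^2)$, since both matrices carry a singular block along $v_{+}^{(\tau)}$. The circumvention, in the spirit of Section 7 (cf.\ Remark \ref{rem:for_start}), is a $\tau$-dependent change of boundary triple on $\mathbb{G}_\soft$ that replaces the singular Robin block along $v_{+}^{(\tau)}$ by a Dirichlet-type (constraint) condition in that one-dimensional subspace. In the transformed triple both $B^{(\tau)}(z)$ and $B^{(\tau)}_\eff(z)$ are bounded matrix-functions whose difference is genuinely $O(\e^2)$ in operator norm, uniformly in $\tau\in[-\pi,\pi)$ and $z\in K_\sigma$; the hypotheses of Theorem \ref{thm:crucial} are then verified and yield the stated bound. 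A subsidiary difficulty is uniformity as $\tau\to 0$, where the small eigenvalue of $\mathcal{M}_0(\tau)$ vanishes and the $(\tau/\e)^2$-contribution to the effective Robin coefficient simultaneously collapses. Here the uniform separation $|\Im z|\ge\sigma>0$ on $K_\sigma$ guarantees that the Schur complements $M^{(\tau)}_\soft(z)-B^{(\tau)}(z)$ and $M^{(\tau)}_\soft(z)-B^{(\tau)}_\eff(z)$ remain uniformly invertible, preserving uniformity of the estimate down to and through $\tau=0$.
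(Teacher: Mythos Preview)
Your strategy is on the right track but contains a genuine gap. After the triple change that converts the singular Robin block along $v_{+}^{(\tau)}$ into a Dirichlet-type constraint (this is the passage to $(\mathcal H,\widetilde\Gamma_0,\widetilde\Gamma_1)$ in the paper's Appendix~B), the transformed $\widetilde B^{(\tau)}(z)$ is \emph{not} a bounded matrix-function of $\e$ uniformly in $\tau$: its $(1,1)$ entry, which lives along the $v_{-}^{(\tau)}$ direction, carries the ``small'' eigenvalue $(A+C-|\xit|)/\e^{2}$. Since $A+C-|\xit|$ is bounded \emph{below} away from zero for $\tau$ bounded away from~$0$, this entry is of order $\e^{-2}$ there. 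The effective Robin coefficient, containing $(\tau/\e)^{2}$, is likewise $O(\e^{-2})$, but with a different leading constant ($2(1-\cos\tau)\neq\tau^{2}$ globally). Hence in the once-transformed triple the difference $\widetilde B^{(\tau)}-\widetilde B^{(\tau)}_\eff$ is itself $O(\e^{-2})$ for $\tau$ away from $0$, and Theorem~\ref{thm:crucial} does not apply. Your assertion that ``both $B^{(\tau)}(z)$ and $B^{(\tau)}_\eff(z)$ are bounded matrix-functions whose difference is genuinely $O(\e^{2})$'' is therefore incorrect.

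You have also inverted the location of the uniformity difficulty: it is not at $\tau\to 0$ (where both the small eigenvalue and the $(\tau/\e)^{2}$ term vanish in a compatible way and everything is $O(1)$), but for $\tau$ bounded away from $0$. The paper resolves this by a \emph{second} triple change on the $v_{-}$ block (the passage to $(\mathcal H,\widetilde\Gamma'_0,\widetilde\Gamma'_1)$), which replaces the unbounded $(1,1)$ entry of $\widetilde B^{(\tau)}$ by its reciprocal $\delta(\tau,\e)$. This reciprocal is uniformly bounded, and the paper then verifies by direct expansion that $\delta(\tau,\e)$ differs from the reciprocal of the effective Robin coefficient by $O(\e^{2})$, uniformly in $\tau$ and $z\in K_\sigma$. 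Only after this second change can Theorem~\ref{thm:crucial} be invoked. Your proposal omits this step and the accompanying estimate, which is the analytic heart of the proof.
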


Note that in the case considered, unlike in Example (0), $R_\e^{(\tau)}$ only admits asymptotics in the uniform operator norm topology. The operator $R_\eff^{(\tau)}$ still depends on $\e$ in $(\tau/\e)^2$; this dependance cannot be dropped (cf. \cite{BirmanSuslina}, \cite{Friedlander}, \cite{Physics}). This is in fact to be expected. In particular, we point out the result of \cite{Friedlander} concerning the integrated density of states of a problem directly related to the model considered. In a nutshell, the integrated density of states of $\oplus\int_\tau A_\e^{(\tau)}$ has to vanish everywhere away from a discrete set of points, which can be described as the spectrum of the decoupled problem (in our setting, this is nothing but the point spectrum of $A_\e^{(-\pi)}$). In other words, the spectral bands of $\oplus\int_\tau A_\e^{(\tau)}$ are of a pseudogap nature,  
making the existence of a limit proper impossible.

The \emph{proof} of Lemma \ref{lemma:gen_res_1}  can be found in Appendix B.

\subsection*{(2) A case of connected soft component}
Here $\mathbb G_\soft$ is a graph comprising two edges $e^{(j)}$, $j=1,2$. Henceforth we write $u_j=u|_{e^{(j)}}$ and $f_j=f|_{e^{(j)}}$. The edges $e^{(j)}$ are identified with segments $[0,l^{(j)}]$, $j=1,2$.
Define the $\e$-independent operator $R_\eff^{(\tau)}$ to be the generalised resolvent solving the following boundary problem:
\begin{equation}\label{eq:domainhom2}
\begin{aligned}
&-a_j^2\left(\frac d{dx}+i\tau\right)^2 u_j-zu_j=f_j, \quad j=1,2,\ f\in H_\soft=L^2(\mathbb G_\soft),\ u\in W^{2,2}(\mathbb G_\soft),\\[0.5em]
&u_2|_0=\bxittwo  u_2|_{l^{(2)}}=\bxitone   u_1|_0=u_1|_{l^{(1)}},\\[0.9em]
&a_2^2 \dtau u_2|_0- a_2^2 \bxittwo  \dtau u_2|_{l^{(2)}}+ a_1^2 \bxitone   \dtau u_1|_0-a_1^2 \dtau u_1|_{l^{(1)}}=-z l^{(3)} u_2|_0
\end{aligned}
\end{equation}
where
\[
\xit_1=\exp(-i\tau (l^{(2)}+l^{(3)})),\quad\xit_2=\exp(-i\tau l^{(2)}),\quad
\dtau u_j:=\biggl(\frac d {dx}+i\tau\biggr) u_j,\ \ j=1,2.
\]

\begin{lemma}\label{lemma:gen_res_2}
The generalized resolvent $R_\e^{(\tau)}(z):=P_\soft (A_\e^{(\tau)}-z)^{-1} P_\soft$ admits the following estimate in the uniform operator norm topology:
$$
R_\e^{(\tau)}(z)-R_\eff^{(\tau)}(z)=O(\e^2).
$$
This estimate is uniform in $\tau\in[-\pi,\pi)$ and $z\in K_\sigma$.
\end{lemma}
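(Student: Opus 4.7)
The strategy parallels the proof of Lemma \ref{lemma:gen_res_1} in Appendix B, since both examples feature a boundary matrix $B^{(\tau)}(z)=-M_\e^{(\tau),\stiff}(z)$ that diverges as $\e\to 0$; what is new here is that $\mathbb G_\soft$ consists of two edges rather than one, so the boundary space for $A_{\max}^\soft$ is $\mathcal H=\mathbb C^2$, indexed by the shared vertices $V_1,V_2$. By Corollary \ref{cor:Strauss}, $R_\e^{(\tau)}(z)=(A_{B^{(\tau)}(z)}-z)^{-1}$, with $B^{(\tau)}(z)=-M_\e^{(\tau),\stiff}(z)$ read off from \eqref{D2N_2}. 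Taylor-expanding $\cot$ and $\csc$ about zero yields, uniformly in $z\in K_\sigma$ and $\tau\in[-\pi,\pi)$,
\[
B^{(\tau)}(z)=\frac{a_3^2}{\e^2 l^{(3)}}\,\mathcal P_\tau-\frac{z l^{(3)}}{2}P_\phi-\frac{z l^{(3)}}{6}P_\psi+O(\e^2),
\]
where $\mathcal P_\tau:=\bigl(\begin{smallmatrix}1 & -e^{il^{(2)}\tau}\\ -e^{-il^{(2)}\tau} & 1\end{smallmatrix}\bigr)$ is rank one, $P_\phi$ is the orthogonal projector onto $\ker\mathcal P_\tau=\mathrm{span}(e^{il^{(2)}\tau},1)^\top$, and $P_\psi$ is the projector onto the orthogonal direction $(1,-e^{-il^{(2)}\tau})^\top$, so that $\mathcal P_\tau=2P_\psi$.

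The next step is to translate the effective boundary conditions \eqref{eq:domainhom2} into $(\Gamma_0^\soft,\Gamma_1^\soft)$-language. The triple equality $u_2|_0=\bxittwo u_2|_{l^{(2)}}=\bxitone u_1|_0=u_1|_{l^{(1)}}$, together with the weighted continuity intrinsic to $\dom A_{\max}^\soft$ at the vertices $V_1,V_2$, reduces to the single linear constraint $P_\psi\Gamma_0^\soft u=0$, equivalent to $\Gamma_0^\soft u\in\ker\mathcal P_\tau$. The remaining Kirchhoff-type condition, after substituting the constraint, simplifies to $P_\phi\Gamma_1^\soft u=-\tfrac{z l^{(3)}}{2}P_\phi\Gamma_0^\soft u$. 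Collecting, the effective parameterising matrix is
\[
B_\eff^{(\tau)}(z)=\frac{a_3^2}{\e^2 l^{(3)}}\,\mathcal P_\tau-\frac{z l^{(3)}}{2}P_\phi,
\]
whose difference from $B^{(\tau)}(z)$ equals $-\tfrac{z l^{(3)}}{6}P_\psi+O(\e^2)$, the $O(1)$ residual being supported entirely on the singular direction $\mathrm{ran}\,P_\psi$.

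The conclusion then follows from the second Hilbert identity used in the proof of Theorem \ref{thm:crucial}:
\[
(A_{B^{(\tau)}}-z)^{-1}-(A_{B_\eff^{(\tau)}}-z)^{-1}=\gamma^\soft(B_\eff^{(\tau)}-M^{(\tau)}_\soft)^{-1}(B^{(\tau)}-B_\eff^{(\tau)})(B^{(\tau)}-M^{(\tau)}_\soft)^{-1}(\gamma^\soft(\bar z))^*.
\]
The inverses $(B^{(\tau)}-M^{(\tau)}_\soft)^{-1}$ and $(B_\eff^{(\tau)}-M^{(\tau)}_\soft)^{-1}$ are of order $O(\e^2)$ on $\mathrm{ran}\,P_\psi$ (owing to the $\e^{-2}$-singular leading term) and $O(1)$ on $\mathrm{ran}\,P_\phi$. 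Hence the $P_\psi$-supported $O(1)$ residual contributes an overall $O(\e^4)$ to the resolvent difference, while the $O(\e^2)$ bounded remainder contributes $O(\e^2)$ in the customary way. Summing, $R_\e^{(\tau)}-R_\eff^{(\tau)}=O(\e^2)$ uniformly in $\tau\in[-\pi,\pi)$ and $z\in K_\sigma$.

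The main obstacle is that Theorem \ref{thm:crucial} as stated does not apply directly: both parameterising matrices are unbounded on $\mathrm{ran}\,P_\psi$, and their difference is only $O(1)$, not $O(\e^r)$ for some $r>0$. The resolution is the anisotropic bookkeeping above, exploiting that the $O(1)$ residual lives precisely on the direction on which the inverses $(B-M_\soft)^{-1}$ are themselves suppressed by $\e^2$. This mirrors the handling of Example (1) in Appendix B, where the $(\tau/\e)^2$ term in the effective boundary condition plays the analogous role of the singular $P_\psi$-direction encountered here.
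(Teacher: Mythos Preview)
Your approach diverges from the paper's in a way that both over-complicates the argument and leaves a genuine gap.

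\textbf{What the paper does.} The paper's proof is essentially one line: the matrix $B^{(\tau)}(z)=-M_\e^{(\tau),\stiff}(z)$ in \eqref{D2N_2} has \emph{exactly the same structure} as the corresponding matrix in Example~(0), see \eqref{D2N_0}, up to the substitutions $a_1\mapsto a_3$, $l^{(1)}\mapsto l^{(3)}$, $\xit\mapsto\xit_2$. Hence the calculation of Appendix~A applies verbatim with these replacements. The fact that $\mathbb G_\soft$ now has two edges rather than one is irrelevant, because the entire $\e$-dependence enters through $B^{(\tau)}(z)$, which depends only on the stiff component. You model your argument on Appendix~B (Example~(1), connected stiff component) rather than Appendix~A (Example~(0), disconnected stiff component); but Example~(2) has a disconnected stiff component, so Appendix~A is the correct template. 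This is precisely the content of the Remark following Lemma~\ref{lemma:gen_res_2} on the least Steklov eigenvalue.

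\textbf{The gap.} You correctly translate \eqref{eq:domainhom2} into the pair of conditions $P_\psi\Gamma_0^\soft u=0$ and $P_\phi\Gamma_1^\soft u=-\tfrac{zl^{(3)}}{2}P_\phi\Gamma_0^\soft u$, but then assert that these are parameterised in the triple $(\mathcal H,\Gamma_0^\soft,\Gamma_1^\soft)$ by your $B_\eff^{(\tau)}(z)=\tfrac{a_3^2}{\e^2 l^{(3)}}\mathcal P_\tau-\tfrac{zl^{(3)}}{2}P_\phi$. That is false: the condition $\Gamma_1^\soft u=B_\eff^{(\tau)}\Gamma_0^\soft u$ with a large-but-finite coefficient on $P_\psi$ is \emph{not} the hard constraint $P_\psi\Gamma_0^\soft u=0$; the two define different operators. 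Consequently your second Hilbert identity argument only establishes $(A_{B^{(\tau)}}-z)^{-1}-(A_{B_\eff^{(\tau)}}-z)^{-1}=O(\e^2)$ for an $\e$-dependent intermediate problem, not the claimed estimate against the $\e$-independent $R_\eff^{(\tau)}$ of \eqref{eq:domainhom2}. You are still missing the step $(A_{B_\eff^{(\tau)}}-z)^{-1}-R_\eff^{(\tau)}=O(\e^2)$. The paper's device of passing to the triple \eqref{eq:projector-triple}, which swaps $\Gamma_0$ and $\Gamma_1$ on the singular direction, is exactly what converts the hard constraint into a \emph{bounded} parameterising matrix ($\widetilde B_\eff^{(\tau)}=\diag\{-zl^{(3)}/2,0\}$, cf.~\eqref{eq:neededfor7}), so that Theorem~\ref{thm:crucial} applies directly and no separate bridging step is needed. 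Your anisotropic bookkeeping on $(B-M_\soft)^{-1}$ is correct as far as it goes, and could be used to supply the missing step, but as written the proof is incomplete.
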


Note that in the case considered, like in Example (0), $R_\e^{(\tau)}$ has the uniform limit $R_\eff^{(\tau)}$ as $\e\to 0$.

The \emph{proof} of Lemma \ref{lemma:gen_res_2} is obtained by repeating the calculation of Appendix A with $\xit$ replaced by $\xit_2$, $a_1$ replaced by $a_3$, and $l^{(1)}$ replaced by $l^{(3)}$ in the arguments of $\cot,$ $\csc.$  Indeed, this follows from the comparison of the expressions for $M_\e^{(\tau),\stiff}$ in \eqref{D2N_2} and \eqref{D2N_0}.

\begin{remark}
Note that in all examples considered the problem for $R_\eff^{(\tau)}$ can be viewed as the problem on a graph $\widetilde {\mathbb G}_\soft$ which is obtained by contracting the original graph $\mathbb G$ in the sense of \cite{Tutte} along all the edges comprising its stiff component $\mathbb G_\stiff$. The contraction mentioned above can be viewed as a process of removing every edge $e\in \mathbb G_\stiff$ by sending its length to zero. The left and right boundary vertices of $e$ are identified to form a new vertex. The $z$-dependent matching condition at this vertex admits the form of the so-called $\delta$-type (see, e.g., \cite{Yorzh3}) with a coupling constant depending on $z$ (cf. \cite{KuchmentZeng}).
\end{remark}

\begin{remark}
Following from the analysis contained in Appendices A, B, the main difference between Example (1) and Examples (0), (2) lies in the behaviour of the least eigenvalue of the matrix $B^{(\tau)}(0)$. The named eigenvalue can be interpreted as the least Steklov eigenvalue of $A_{\max}^\stiff$, i.e., the least (by absolute value) $\kappa$ such that the problem
$$
\begin{aligned}
A_{\max}^{\stiff}u&=0, \quad u\in W^{2,2}(\mathbb G_\stiff),\\[0.4em]
\Gamma_1^\stiff u&=\kappa \Gamma_0^\stiff u.
\end{aligned}
$$
admits a non-trivial solution.

It is easily seen that the least Steklov eigenvalue in the case of connected $\mathbb G_\stiff$  is identically zero if and only if the stiff component of ${\mathbb G}_{\text{\rm per}}$ is disconnected.
\end{remark}

\section{Asymptotic behaviour of the resolvent $(A_\e^{(\tau)}-z)^{-1}$}

\subsection{Derivation of the asymptotics}

In this section we prove that the original resolvent family $(A_\e^{(\tau)}-z)^{-1}$ under a mild additional assumption admits, along with the generalised resolvent $R_\e^{(\tau)}=P_\soft(A_\e^{(\tau)}-z)^{-1}P_\soft$, a uniform asymptotics. After this is established, we proceed with finding an explicit form of the latter.

\begin{theorem}\label{thm:NRA}
Let for $r\leq 2$ and a generalised resolvent $R_\eff^{(\tau)}(z)$ the estimate
$$
\bigl\| R_\e^{(\tau)}(z)-R_\eff^{(\tau)}(z)\bigr\|_{H_\soft\to H_\soft}= O(\e^r)
$$
hold uniformly in $\tau$ and $z\in K_\sigma$. Let further\footnote{We recall that by \cite{Strauss}, {\it cf.} Corollary \ref{cor:Strauss} and Theorem \ref{thm:crucial}, $R^{(\tau)}_\eff$ is  represented as $(A^{(\tau)}(z)-z)^{-1}$, where $A^{(\tau)}(z)$ is a ($z$-dependent) operator in $H_\soft$. } $R^{(\tau)}_\eff(z)\equiv (A^{(\tau)}(z)-z)^{-1}$ be such that $A^{(\tau)}(z)\subset A_{\max}^\soft$ for all $z\in \mathbb C_\pm$.

Then the resolvent $(A_\e^{(\tau)}-z)^{-1}$ admits the following asymptotics in the uniform operator-norm topology:
$$
(A_\e^{(\tau)}-z)^{-1}=\mathcal R^{(\tau)}_\eff + O(\e^r),
$$
where the operator $\mathcal R^{(\tau)}_\eff$, which is allowed to depend on $\e$, admits the following representation relative to the decomposition $H=H_\stiff\oplus H_\soft$:
\begin{equation}\label{eq:NRA}
\mathcal R^{(\tau)}_\eff=
\begin{pmatrix}
\mathcal R^{(\tau)}_\eff(z)&\ \ \Bigl(\mathfrak K_{\bar z}^{(\tau)}\bigl[R_\eff^{(\tau)}(\bar z)-(A_\infty^{(\tau),\soft}-\bar z)^{-1}\bigr]\Bigr)^*\Pi_\stiff^*\\[0.9em] \Pi_\stiff\mathfrak{K}^{(\tau)}_z \bigl[R_\eff^{(\tau)}(z)-(A_\infty^{(\tau),\soft}-z)^{-1}\bigr] & \ \ \Pi_\stiff\mathfrak K_{z}^{(\tau)}\Bigl(\mathfrak K_{\bar z}^{(\tau)}\bigl[R_\eff^{(\tau)}(\bar z)-(A_\infty^{(\tau),\soft}-\bar z)^{-1}\bigr]\Bigr)^*\Pi_\stiff^*
\end{pmatrix}.
\end{equation}
Here $\mathfrak{K}^{(\tau)}_z:=\Gamma_0^\soft|_{\mathfrak N_z}$, where $\mathfrak N_z:=\ker(A_{\max}^\soft-z)$, $z\in \mathbb C_\pm$, and $\Pi_\stiff:=\gamma^\stiff(0)$.
\end{theorem}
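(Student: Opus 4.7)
The plan is to apply Kre\u\i n's formula (Proposition \ref{prop:Krein} with $B=0$) to the full resolvent, decompose the result as a $2\times 2$ block operator matrix with respect to the orthogonal splitting $H = H_\soft\oplus H_\stiff$, and then approximate each of the four blocks using the hypothesis $R_\e^{(\tau)} = R_\eff^{(\tau)}+O(\e^r)$. Since $A_\infty^{(\tau)} = A_\infty^{(\tau),\soft}\oplus A_\infty^{(\tau),\stiff}$, $\gamma(z) = \gamma^\soft(z)\oplus\gamma^\stiff(z)$ (restricting the harmonic lift on $\mathbb G$ to either component produces the component harmonic lift), and $M^{(\tau)} = M_\soft^{(\tau)} + M_\stiff^{(\tau)}$ (Proposition \ref{thm:additivity}), each block is explicit: the (soft, soft) block equals $R_\e^{(\tau)}$ and is already controlled by the hypothesis, while the remaining three blocks all involve $\gamma^\stiff(z)$ or its adjoint and (in the (stiff, stiff) slot) the extra summand $(A_\infty^{(\tau),\stiff}-z)^{-1}$.

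To approximate these, two ODE ``smallness'' estimates on the stiff side are needed, both stemming from the $1/\e^2$ factor in the symbol of $A_\e^{(\tau)}$ on stiff edges: uniformly in $\tau\in[-\pi,\pi)$ and $z\in K_\sigma$,
\[
\bigl\|(A_\infty^{(\tau),\stiff}-z)^{-1}\bigr\|_{H_\stiff\to H_\stiff} = O(\e^2), \qquad \gamma^\stiff(z) = \Pi_\stiff + O(\e^2).
\]
The first follows since the Dirichlet eigenvalues on each stiff edge are of order $1/\e^2$; the second, since the eigenvalue equation $-(a_e^2/\e^2)(d/dx+i\tau)^2 u = zu$ on a stiff edge is an $O(\e^2)$ perturbation of the $z=0$ problem that defines $\Pi_\stiff = \gamma^\stiff(0)$. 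Using the identity $\mathfrak K_z^{(\tau)}\gamma^\soft(z) = I_{\mathcal H}$, the (stiff, soft) block is then rewritten as
\begin{equation*}
-\gamma^\stiff(z)(M^{(\tau)})^{-1}(\gamma^\soft(\bar z))^* = \gamma^\stiff(z)\mathfrak K_z^{(\tau)}\bigl[R_\e^{(\tau)}(z) - (A_\infty^{(\tau),\soft}-z)^{-1}\bigr],
\end{equation*}
after which the substitutions $R_\e^{(\tau)} = R_\eff^{(\tau)}+O(\e^r)$ and $\gamma^\stiff(z) = \Pi_\stiff + O(\e^2)$, combined with $r\leq 2$ and the uniform boundedness of $\mathfrak K_z^{(\tau)}$ on the finite-dimensional $\mathfrak N_z$, produce the stated expression. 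The (soft, stiff) block equals the adjoint of the (stiff, soft) block at $\bar z$ by self-adjointness of $A_\e^{(\tau)}$, yielding the adjoint formula in (\ref{eq:NRA}) via the $R$-function symmetries of $M_\soft^{(\tau)}$ and $B_\eff^{(\tau)}$; the (stiff, stiff) block is treated analogously, with $(A_\infty^{(\tau),\stiff}-z)^{-1}$ absorbed into the remainder by the first stiff estimate and $\mathfrak K\gamma^\soft = I$ inserted symmetrically on both sides of $(M^{(\tau)})^{-1}$.

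The main technical obstacle is propagating the operator-norm hypothesis through these identities with uniformity in $\tau$ and $z\in K_\sigma$. Here the crucial leverage is the extra assumption $A^{(\tau)}(z)\subset A_{\max}^\soft$, which allows Kre\u\i n's formula to be applied to $R_\eff^{(\tau)}$ itself, furnishing some operator-valued $R$-function $B_\eff^{(\tau)}$ on $\mathcal H$ with
\[
R_\eff^{(\tau)}(z) - (A_\infty^{(\tau),\soft}-z)^{-1} = -\gamma^\soft(z)\bigl(M_\soft^{(\tau)}(z) - B_\eff^{(\tau)}(z)\bigr)^{-1}(\gamma^\soft(\bar z))^*.
\]
Sandwiching the difference $R_\e^{(\tau)} - R_\eff^{(\tau)}$ between $\mathfrak K_z^{(\tau)}$ on the left and $(\mathfrak K_{\bar z}^{(\tau)})^*$ on the right (both uniformly bounded because the soft-side data is $\e$-independent and $K_\sigma$ stays at positive distance from $\sigma(A_\infty^{(\tau),\soft})$) converts the hypothesis into the intermediate estimate $(M^{(\tau)})^{-1} - (M_\soft^{(\tau)} - B_\eff^{(\tau)})^{-1} = O(\e^r)$ in the norm of $\mathcal H$, which in turn justifies the effective-for-actual replacement in each of the three non-trivial blocks and completes the argument.
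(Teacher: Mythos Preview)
Your proposal is correct and follows essentially the same route as the paper: block-decompose the Kre\u\i n formula, use the stiff-side estimates $\gamma^\stiff(z)=\Pi_\stiff+O(\e^2)$ and $\|(A_\infty^{(\tau),\stiff}-z)^{-1}\|=O(\e^2)$, regularise via $\mathfrak K_z^{(\tau)}\gamma^\soft(z)=I$, and substitute $R_\e^{(\tau)}\to R_\eff^{(\tau)}$ using boundedness of $\mathfrak K_z^{(\tau)}$ on $\mathfrak N_z$. The only minor difference is that the paper justifies the last substitution more directly---observing that $A^{(\tau)}(z)\subset A_{\max}^\soft$ forces $[R_\eff^{(\tau)}(z)-(A_\infty^{(\tau),\soft}-z)^{-1}]f\in\mathfrak N_z$ so $\mathfrak K_z^{(\tau)}$ applies---whereas your final paragraph takes the slightly longer detour through the intermediate estimate $(M^{(\tau)})^{-1}-(M_\soft^{(\tau)}-B_\eff^{(\tau)})^{-1}=O(\e^r)$; both are valid.
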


\begin{proof}
Consider the operator $\mathfrak{K}^{(\tau)}_z$.  By \cite{DM}, see also \cite{Ryzhov_later}, it is one-to-one and admits a bounded inverse $\gamma^\soft(z)$. Using the fact that the operator  $\mathfrak{K}^{(\tau)}_z$ is finite-dimensional, it follows that this operator is itself bounded
as an operator from $\mathfrak N_z$ equipped with the norm of $H_\soft$ to $\mathcal H$. A simple computation shows that its bound is uniform in $\tau$ and $z\in K_\sigma$.

Using the Kre\u\i n resolvent formula \eqref{eq:resolvent}, we regularise $R_\varepsilon^{(\tau)}$ and write
\begin{multline*}
P_\stiff (A_\e^{(\tau)} -z)^{-1}P_\soft = -\gamma^\stiff(z)\bigl(M^{(\tau)}_\soft(z)-B^{(\tau)}(z)\bigr)^{-1}\bigl(\gamma^\soft(\bar z)\bigr)^*\\[0.4em]
=-\gamma^\stiff (z)\Gamma_0^\soft \gamma^\soft(z)\bigl(M^{(\tau)}_\soft(z)-B^{(\tau)}(z)\bigr)^{-1} \bigl(\gamma^\soft(\bar z)\bigr)^*=
\gamma^\stiff (z) \Gamma_0^\soft\bigl[R_\e^{(\tau)}-(A_\infty^{(\tau),\soft}-z)^{-1}\bigr]\\[0.4em]
=\gamma^\stiff (z) \mathfrak{K}^{(\tau)}_z\bigl[R_\e^{(\tau)}-(A_\infty^{(\tau),\soft}-z)^{-1}\bigr].
\end{multline*}
Here in the second equality we take advantage of the fact that $\Gamma_0^\soft\gamma^\soft(z)=I.$
Furthermore, using the representation (see \cite{DM,Ryzhov_later})
\[
\gamma^\stiff(z)=\bigl(1-z (A_\infty^{(\tau),\stiff})^{-1}\bigr)^{-1}\Pi_\stiff
\]
together with the fact that the operator $(A_\infty^{(\tau),\stiff})^{-1}$ admits an obvious estimate
$
\bigl\|(A_\infty^{(\tau),\stiff})^{-1}\bigr\|=O(\e^2)
$
uniformly in $\tau$, we conclude that
$$
\gamma^\stiff(z)-\Pi_\stiff = O(\e^2).
$$
Since by assumption
$$
(A_{\max}^\soft -z)\bigl[R_\eff^{(\tau)}f-(A_\infty^{(\tau),\soft}-z)^{-1}f\bigr]=(A_{\max}^\soft -z)(A_z-z)^{-1}f-f=0,\quad f\in H_\soft,
$$
the following asymptotic formula holds:
\begin{equation}\label{eq:M12}
P_\stiff (A_\e^{(\tau)} -z)^{-1}P_\soft - \Pi_\stiff\mathfrak{K}^{(\tau)}_z\bigl[R_\e^{(\tau)}-(A_\infty^{(\tau),\soft}-z)^{-1}\bigr]=O(\e^r)
\end{equation}
uniformly in $\tau$ and $z\in K_\sigma$.

Passing over to the top right entry in \eqref{eq:NRA}, we write
\begin{align*}
P_\soft (A_\e^{(\tau)} -z)^{-1}P_\stiff&= -\gamma^\soft(z)\bigl(M^{(\tau)}_\soft(z)-B^{(\tau)}(z)\bigr)^{-1}(\gamma^\stiff)^*(\bar z)\\[0.4em]
&=\Bigl(\mathfrak K_{\bar z}^{(\tau)}\bigl[R_\e^{(\tau)}(\bar z)-(A_\infty^{(\tau),\soft}-\bar z)^{-1}\bigl]\Bigr)^*(\gamma^\stiff(\bar z))^*\\[0.4em]
&=\Bigl(\mathfrak K_{\bar z}^{(\tau)}\bigl[R_\e^{(\tau)}(\bar z)-(A_\infty^{(\tau),\soft}-\bar z)^{-1}\bigr]\Bigr)^*\Pi_\stiff^* (1-z (A_\infty^{(\tau),\stiff})^{-1})^{-1},
\end{align*}
from where the claim pertaining to the named entry follows by a virtually unchanged argument.
We remark that one could of course rewrite the right-hand side of the last expression as
$$
[R_\e^{(\tau)}(z)-(A_\infty^{(\tau),\soft}-z)^{-1}](\mathfrak K_{\bar z}^{(\tau)})^*(\gamma^\stiff(\bar z))^*
$$
so that its asymptotics is clearly the conjugate adjoint of the bottom left entry of the matrix in \eqref{eq:NRA}, but there is no value in doing so since $(\mathfrak K_{\bar z}^{(\tau)})^*$ admits no concise representation.

Finally, for the $(1,1)$ matrix element in \eqref{eq:NRA} we have
$$
P_\stiff (A_\e^{(\tau)} -z)^{-1}P_\stiff =
(A_\infty^{(\tau),\stiff}-z)^{-1}+\gamma^\stiff(z) \mathfrak K_{z}^{(\tau)}\Bigl(\mathfrak K_{\bar z}^{(\tau)}\bigl[R_\e^{(\tau)}(\bar z)-(A_\infty^{(\tau),\soft}-\bar z)^{-1}\bigr]\Bigr)^*\bigl(\gamma^\stiff(\bar z)\bigr)^*,
$$
which is used to complete the proof.
\end{proof}

\subsection{Assumptions and remarks}

We remark that the operator $\mathcal R_\eff^{(\tau)}$ cannot in general be claimed to be a resolvent of a self-adjoint operator acting in $H_\eff=\clos(\ran \Pi_\stiff) \oplus H_\soft$. If it were, one would be able to immediately invoke the argument of \cite{Naimark1940,Naimark1943}. As it stands however, one has to opt for the construction to follow.

Henceforth in the present Section, we adopt the following additional assumptions.
\subsection*{Assumptions}
\begin{itemize}
  \item[(i)] As in Theorem \ref{thm:NRA}, $\ran R_\eff^{(\tau)}(z)\subset \dom A_{\max}^\soft$ and moreover, $\Gamma_0^\soft \ran R_\eff^{(\tau)}(z)\subset \P \mathcal H$ for all $z\in \mathbb C_\pm$ and some orthogonal projection $\P$ in $\mathcal H$. The projection $\Port$ is introduced as the orthogonal projection complementary ot $\P.$
  \item[(ii)] At every value of $\tau$ and $z\in K_\sigma$, the operator $A_z^{(\tau)}$ such that $(A_z^{(\tau)}-z)^{-1}=R_\eff^{(\tau)}$ is an ($z$-dependent) almost solvable extension of the operator $\breve{A}_{\min}^{\soft}:=(\breve{A}_{\max}^{\soft})^*$, where $\breve{A}_{\max}^{\soft}$ is the restriction of the operator $A_{\max}^\soft$ to the domain
  $\dom \breve{A}_{\max}^{\soft}:=\dom A_{\max}^\soft \cap\ker(\Port\Gamma_0)$, relative to the triple $\{\breve{\mathcal{H}},\breve{\Gamma}_0^\soft,\breve{\Gamma}_1^\soft\}$. Here
      \begin{equation}\label{eq:truncated_triple}
      \breve{\mathcal{H}}:=\P\mathcal H, \quad \breve{\Gamma}_0^\soft:=\P \Gamma_0^\soft, \quad \breve{\Gamma}_1^\soft:=\P \Gamma_1^\soft.
      \end{equation}
        In what follows we equip all the objects pertaining to this triple with a breve on top.
  \item[(iii)] The parameterising matrix $B^{(\tau)}_\eff(z)$ of $A_z^{(\tau)}$ in terms of Section \ref{triple_section}, i.e., the matrix such that $A_z^{(\tau)}$ is a restriction of $\breve{A}_{\max}^{\soft}$ to the domain $\dom A_z^{(\tau)}=\{u\in\dom \breve{A}_{\max}^{\soft}: \breve{\Gamma}_1^\soft u = B^{(\tau)}_\eff(z)\breve{\Gamma}_0^\soft u\}$, is linear in $z$.
\end{itemize}

\begin{remark}\label{rem:assumptions}
1. The notation $\P,\Port$ refers in fact to the same projections as introduced in Appendices A and B, allowing us to keep the same symbols here.

2. The assumption (i) contains no further restrictions compared to those imposed by Theorem \ref{thm:NRA}. Indeed, the case $\P=I$ is not excluded.

3. The assumptions (ii) and (iii) are non-restrictive. Indeed, Theorem \ref{thm:crucial} allows to interpret the problem of establishing the asymptotic behaviour of $R_\e^{(\tau)}$ as that of finding the non-decreasing in $\e$ terms of the asymptotic expansion of $M_\e^{(\tau),\stiff}$. On the other hand, the latter is (up to multiplication by $1/\e^2$) a Dirichlet-to-Neumann map of a uniformly elliptic problem evaluated at the point $\e^2 z$. This allows us to utilise a standard expansion, which is obtained by a straightforward computation:
\begin{equation}\label{eq:standard_expansion}
M_\e^{(\tau),\stiff}(z)=\e^{-2}\bigl(M_0^{(\tau)}+\e^2 z M_1^{(\tau)}\bigr)+O(\e^2)
\end{equation}
for all $z\in K_\sigma$ and all $\tau$. Theorem \ref{thm:crucial} then permits us to drop the error term $O(\e^2)$ in the last expansion, leaving a linear operator-function in $z$, as required ({\it cf.} \cite{Physics,Friedlander,BirmanSuslina}). In the process, one ends up with an almost solvable extension of the operator $A_{\max}^\soft$. After this is done, one follows the argument of Appendix A, introducing the triple \eqref{eq:projector-triple}. A straightforward application of the Schur-Frobenius formula leads to an asymptotics of the type \eqref{eq:neededfor7}. Ultimately, one ``truncates'' the triple to the one defined in \eqref{eq:truncated_triple} to find oneself in the setup described by Assumptions (i)--(iii) above.
\end{remark}

\subsection{Strauss dilation}

Theorem \ref{thm:crucial} allows us to keep only the first two terms in the expansion (\ref{eq:standard_expansion}) for  $B^{(\tau)}(z)\equiv -M_\e^{(\tau),\stiff}(z),$ which are affine in $z.$ This, in turn, allows us to explicitly construct the Strauss dilation \cite{Strauss} of the generalised resolvent  defined by these leading-order terms of $B^{(\tau)}(z)$,  ({\it cf.} \cite{KuchmentZeng}). We then compute the resolvent of the latter in view of comparing it with the expression \eqref{eq:NRA}.

\begin{definition}\label{defn:gen_dilation}
Assume (i)--(iii) above. Consider the Hilbert space $\mathfrak H=H_\soft\oplus H^{(1)}$, where $H^{(1)}$ is an auxiliary Hilbert space. Let $\Pi$ be a bounded and boundedly invertible operator from $\breve{\mathcal H}$ to $H^{(1)}.$
Let
$$
\dom \mathcal A:=\Bigl\{(u,u^{(1)})^\top\in H_\soft\oplus H^{(1)}: u\in \dom \breve{A}_{\max}^\soft, u^{(1)}=\Pi\breve{\Gamma}_0^\soft u\Bigr\}.
$$
Clearly, $\dom \mathcal A$ thus defined is dense in $\mathfrak H$. We introduce a linear operator $\mathcal A$ on this domain by setting
\begin{equation}\label{eq:strauss_op}
  \mathcal A \binom{u}{u^{(1)}}:=\binom{\breve{A}_{\max}^{\soft} u}{-(\Pi^*)^{-1} \breve{\Gamma}_1^\soft u + \mathcal B u^{(1)}},
\end{equation}
where $\mathcal B$ is assumed to be a bounded operator in $H^{(1)}$.
\end{definition}

We have the following statement.
\begin{lemma}\label{lemma:self-adjointness}
 The operator $\mathcal A$ is symmetric on the domain $\dom \mathcal A$ if and only if $\mathcal B$ is self-adjoint in $H^{(1)}$.
\end{lemma}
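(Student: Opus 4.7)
The plan is a direct computation of the sesquilinear form
\[
Q(U,V):=\bigl\langle \mathcal A U,V\bigr\rangle_{\mathfrak H}-\bigl\langle U,\mathcal A V\bigr\rangle_{\mathfrak H}, \qquad U=\binom{u}{u^{(1)}},\ V=\binom{v}{v^{(1)}}\in\dom\mathcal A,
\]
and showing that the boundary contributions cancel identically, leaving only a term that encodes the symmetry (equivalently, self-adjointness, since $\mathcal B$ is bounded and everywhere defined) of $\mathcal B$.

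First, I would split $Q(U,V)$ according to the decomposition $\mathfrak H=H_\soft\oplus H^{(1)}$ and apply the Green identity \eqref{Green_identity} for the boundary triple $\{\breve{\mathcal H},\breve\Gamma_0^\soft,\breve\Gamma_1^\soft\}$ (which, by assumption (ii) and the construction \eqref{eq:truncated_triple}, is a boundary triple for $\breve A_{\max}^\soft$). This yields
\[
\bigl\langle \breve A_{\max}^\soft u,v\bigr\rangle_{H_\soft}-\bigl\langle u,\breve A_{\max}^\soft v\bigr\rangle_{H_\soft}=\bigl\langle \breve\Gamma_1^\soft u,\breve\Gamma_0^\soft v\bigr\rangle_{\breve{\mathcal H}}-\bigl\langle \breve\Gamma_0^\soft u,\breve\Gamma_1^\soft v\bigr\rangle_{\breve{\mathcal H}}.
\]

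Next, I would treat the $H^{(1)}$-component. The constraint $u^{(1)}=\Pi\breve\Gamma_0^\soft u$ (and similarly for $v$) together with the identity $((\Pi^*)^{-1})^*=\Pi^{-1}$ lets one rewrite
\[
-\bigl\langle (\Pi^*)^{-1}\breve\Gamma_1^\soft u,v^{(1)}\bigr\rangle_{H^{(1)}}=-\bigl\langle \breve\Gamma_1^\soft u,\Pi^{-1}\Pi\breve\Gamma_0^\soft v\bigr\rangle_{\breve{\mathcal H}}=-\bigl\langle \breve\Gamma_1^\soft u,\breve\Gamma_0^\soft v\bigr\rangle_{\breve{\mathcal H}},
\]
and analogously for the conjugate term with $u$ and $v$ swapped. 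Inserting these expressions shows that the four boundary-valued contributions cancel in pairs, so that
\[
Q(U,V)=\bigl\langle \mathcal B u^{(1)},v^{(1)}\bigr\rangle_{H^{(1)}}-\bigl\langle u^{(1)},\mathcal B v^{(1)}\bigr\rangle_{H^{(1)}}.
\]

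Finally, to conclude, I would note that as $u,v$ range over $\dom\breve A_{\max}^\soft$ the pair $(u^{(1)},v^{(1)})=(\Pi\breve\Gamma_0^\soft u,\Pi\breve\Gamma_0^\soft v)$ ranges over all of $H^{(1)}\times H^{(1)}$, owing to the surjectivity of $\breve\Gamma_0^\soft$ onto $\breve{\mathcal H}$ (part of the defining property of a boundary triple) and the bounded invertibility of $\Pi\colon\breve{\mathcal H}\to H^{(1)}$. Hence $Q\equiv 0$ on $\dom\mathcal A\times\dom\mathcal A$ if and only if $\mathcal B$ is symmetric on $H^{(1)}$; since $\mathcal B$ is bounded and everywhere defined, symmetry coincides with self-adjointness.

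No step here is genuinely hard; the only point that requires care is bookkeeping of the adjoint of $(\Pi^*)^{-1}$ between $\breve{\mathcal H}$ and $H^{(1)}$, which is precisely what makes the cross-terms cancel.
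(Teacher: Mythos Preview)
Your proposal is correct and follows essentially the same route as the paper's own proof: both compute the boundary form using the Green identity for the triple $(\breve{\mathcal H},\breve\Gamma_0^\soft,\breve\Gamma_1^\soft)$, exploit the domain constraint $u^{(1)}=\Pi\breve\Gamma_0^\soft u$ together with $((\Pi^*)^{-1})^*=\Pi^{-1}$ to cancel the cross-terms, and reduce to $\langle \mathcal B u^{(1)},v^{(1)}\rangle-\langle u^{(1)},\mathcal B v^{(1)}\rangle$. If anything, you are slightly more explicit than the paper in spelling out the surjectivity argument needed for the ``only if'' direction and in noting that symmetry of a bounded everywhere-defined operator coincides with self-adjointness.
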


\begin{proof}
On the one hand, for all $(u,u^{(1)})^\top\in \dom \mathcal A$ and for $(v,v^{(1)})^\top\in \dom \mathcal A$ one has
\begin{multline*}
 \left \langle \mathcal A \binom{u}{u^{(1)}},\binom{v}{v^{(1)}}\right\rangle =\langle \breve{A}_{\max}^\soft u,v\rangle -
  \langle (\Pi^*)^{-1} \breve{\Gamma}_1^\soft u,v^{(1)}\rangle + \langle \mathcal B \Pi \breve{\Gamma}_0^\soft u,v^{(1)} \rangle\\
  =\langle  u,\breve{A}_{\max}^\soft v\rangle +\langle \breve{\Gamma}_1^\soft u, \breve{\Gamma}_0^\soft v \rangle
  -\langle \breve{\Gamma}_0^\soft u, \breve{\Gamma}_1^\soft v \rangle-
  \langle (\Pi^*)^{-1} \breve{\Gamma}_1^\soft u,\Pi \breve{\Gamma}_0^\soft v\rangle + \langle \mathcal B \Pi \breve{\Gamma}_0^\soft u,\Pi \breve{\Gamma}_0^\soft v \rangle\\[0.4em]
  =\bigl\langle  u,\breve{A}_{\max}^\soft v\bigr\rangle -\bigl\langle \breve{\Gamma}_0^\soft u, \breve{\Gamma}_1^\soft v\bigr\rangle+ \bigl\langle \Pi \breve{\Gamma}_0^\soft u, {\mathcal B}^*\Pi \breve{\Gamma}_0^\soft v\bigr\rangle,
\end{multline*}
where the triple property for $\breve{A}_{\max}^\soft$ has been used.

On the other hand,
\begin{equation*}
  \left\langle \binom{u}{u^{(1)}},\mathcal A \binom{v}{v^{(1)}}\right\rangle
  =\langle  u,\breve{A}_{\max}^\soft v\rangle -
  \langle \breve{\Gamma}_0^\soft u, \breve{\Gamma}_1^\soft v\rangle+
 \bigl \langle  \Pi \breve{\Gamma}_0^\soft u,\mathcal B\Pi \breve{\Gamma}_0^\soft v\bigr\rangle,
\end{equation*}
and the claim follows by comparison.
\end{proof}

In fact, one can claim self-adjointness of $\mathcal A$ iff $\mathcal B$ is self-adjoint in $H^{(1)}$. This follows from the next Theorem via an explicit construction of the resolvent $(\mathcal A -z)^{-1}$.

\begin{theorem}\label{thm:dilation_resolvent}
Assume $\mathcal B=\mathcal B^*$. Then $\mathcal A$ is self-adjoint, and its resolvent $(\mathcal A -z)^{-1}$ is defined at all $z\in \mathbb C_\pm$ by the following expression, relative to the space decomposition $\mathfrak H=H_\soft \oplus H^{(1)}$ (cf. \eqref{eq:NRA}).
\begin{equation}\label{eq:NRA1}
(\mathcal A -z)^{-1}=
\begin{pmatrix}
  R(z) & \Bigl(\breve{\mathfrak{K}_{\overline {z}}}\bigl[R(\bar z)-(\breve{A}_\infty^\soft-\bar z)^{-1}\bigr]\Bigr)^* \Pi^* \\[0.6em]
   \Pi \breve{\mathfrak{K}}_z\bigl[R(z)-(\breve A_{\infty}^{\soft}-z)^{-1}\bigr] & \Pi\breve{\mathfrak{K}}_z\Bigl(\breve{\mathfrak{K}_{\overline{z}}}\bigl[R(\bar z)-(\breve{A}_\infty^\soft-\bar z)^{-1}\bigr]\Bigr)^* \Pi^*
\end{pmatrix}.
\end{equation}
Here $R(z)$ is a generalised resolvent in $H_\soft$ defined as $R(z)=(A_{B}-z)^{-1}$ with $B\equiv B(z):=\Pi^* (\mathcal B-z)\Pi$ relative to the triple $(\breve {\mathcal H},\breve{\Gamma}_0^\soft, \breve{\Gamma}_1^\soft)$; the bounded operator $\breve{\mathfrak K}_z$ is defined as $\breve{\mathfrak K}_z:=\breve{\Gamma}_0^\soft |_{\breve{\mathfrak N}_z}$, where $\breve{\mathfrak N}_z:=\ker (\breve{A}_{\max}^\soft-z)$. Finally, the operator $\breve A_{\infty}^{\soft}$ is the restriction of $\breve{A}_{\max}^\soft$ to the set $\dom \breve{A}_{\max}^\soft\cap\ker(\breve{\Gamma}_0^\soft).$
\end{theorem}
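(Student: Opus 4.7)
The plan is to solve the abstract resolvent equation $(\mathcal A - z)(u,u^{(1)})^\top = (f, f^{(1)})^\top$ explicitly by reducing the second component equation to a non-homogeneous boundary condition on $u$ and then applying the Kre\u\i n resolvent formula of Proposition \ref{prop:Krein} relative to the triple $(\breve{\mathcal H},\breve{\Gamma}_0^\soft,\breve{\Gamma}_1^\soft)$. Using the definition \eqref{eq:strauss_op} together with the domain constraint $u^{(1)}=\Pi\breve{\Gamma}_0^\soft u$, the system becomes
\begin{equation*}
\breve{A}_{\max}^{\soft}u - zu = f, \qquad -(\Pi^*)^{-1}\breve{\Gamma}_1^\soft u + (\mathcal B - z)\Pi \breve{\Gamma}_0^\soft u = f^{(1)}.
\end{equation*}
Multiplying the second equation through by $\Pi^*$ converts it into the $z$-dependent boundary condition $\breve{\Gamma}_1^\soft u = B(z)\breve{\Gamma}_0^\soft u - \Pi^* f^{(1)}$ with $B(z)=\Pi^*(\mathcal B-z)\Pi$, which is precisely the parameterising matrix prescribed in the statement.

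Next I would split $u = (\breve A_\infty^\soft - z)^{-1}f + v$ with $v\in\breve{\mathfrak N}_z$, use the defining property $\breve{\Gamma}_1^\soft v = \breve M(z)\breve{\Gamma}_0^\soft v$ of the $M$-function, and use the triple property $\breve{\Gamma}_1^\soft(\breve A_\infty^\soft - z)^{-1} = \breve\gamma^*(\bar z)$ to obtain
\begin{equation*}
\breve{\Gamma}_0^\soft v = \bigl(B(z) - \breve M(z)\bigr)^{-1}\bigl(\breve\gamma^*(\bar z)f + \Pi^* f^{(1)}\bigr).
\end{equation*}
Substituting back and collecting terms gives $u = R(z)f + \breve\gamma(z)\bigl(B(z)-\breve M(z)\bigr)^{-1}\Pi^* f^{(1)}$, the first summand being the generalised resolvent predicted in the theorem by the standard Kre\u\i n formula. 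For $u^{(1)} = \Pi\breve{\Gamma}_0^\soft u$ I would apply $\breve{\Gamma}_0^\soft$ to this identity, using that $\breve{\Gamma}_0^\soft(\breve A_\infty^\soft-z)^{-1}=0$ and $\breve{\Gamma}_0^\soft\breve\gamma(z)=I$, to obtain the analogous explicit formula.

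To match the block-matrix entries of \eqref{eq:NRA1}, the key observation is that the self-adjointness of $\mathcal B$ (and linearity of $B(\cdot)$) forces $B(z)^* = B(\bar z)$ and $\breve M(z)^* = \breve M(\bar z)$, so
\begin{equation*}
\bigl(\breve{\mathfrak K}_{\bar z}\bigl[R(\bar z) - (\breve A_\infty^\soft - \bar z)^{-1}\bigr]\bigr)^* = \breve\gamma(z)\bigl(B(z)-\breve M(z)\bigr)^{-1},
\end{equation*}
which is obtained by applying $\breve{\mathfrak K}_{\bar z}$ to the Kre\u\i n formula for $R(\bar z)$ and using that $\breve{\mathfrak K}_{\bar z}\breve\gamma(\bar z) = I$ and $\breve{\mathfrak K}_{\bar z}(\breve A_\infty^\soft-\bar z)^{-1}=0$. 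This identity is what allows each of the four blocks in the derived formula for $(\mathcal A - z)^{-1}$ to be rewritten in the symmetric form displayed in \eqref{eq:NRA1}; the bottom-right block in particular becomes $\Pi(B(z)-\breve M(z))^{-1}\Pi^*$ after telescoping $\breve{\mathfrak K}_z\breve\gamma(z)=I$.

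Finally, self-adjointness is inherited: Lemma \ref{lemma:self-adjointness} gives symmetry of $\mathcal A$ when $\mathcal B=\mathcal B^*$, and the explicit construction above shows that $\operatorname{ran}(\mathcal A - z) = \mathfrak H$ for every $z\in\mathbb C_\pm$, which is the standard criterion for self-adjointness of a symmetric operator. I expect the main technical obstacle to be the careful bookkeeping in the identification of the four blocks, in particular ensuring that the manipulation with adjoints of $\breve{\mathfrak K}_z$ and $\breve\gamma(z)$ is legitimate; this rests on Assumption (iii) (linearity of $B^{(\tau)}_\eff$ in $z$) which makes $B(z)-\breve M(z)$ a genuine Kre\u\i n-type parameter and hence invertible on $\mathbb C_\pm$, thereby guaranteeing that the resolvent defined blockwise is bounded on the whole of $\mathfrak H$.
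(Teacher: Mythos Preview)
Your argument is correct and follows the same overall strategy as the paper: rewrite the second component of the resolvent system as the inhomogeneous boundary condition $\breve\Gamma_1^\soft u = B(z)\breve\Gamma_0^\soft u - \Pi^*f^{(1)}$, solve via the Kre\u\i n formula, and then identify the blocks through the adjoint identity $(\breve{\mathfrak K}_{\bar z}[R(\bar z)-(\breve A_\infty^\soft-\bar z)^{-1}])^* = \breve\gamma(z)(B(z)-\breve M(z))^{-1}$, which both you and the paper rely on. The only organisational difference is that the paper treats the two columns separately --- for $f^{(1)}=0$ it reads off $u=R(z)f$ directly, while for $f=0$ it introduces an auxiliary $v_f\in\dom\breve A_\infty^\soft$ with $\breve\Gamma_1^\soft v_f=\Pi^*f^{(1)}$ and writes $u=[R(z)-(\breve A_\infty^\soft-z)^{-1}](\breve A_\infty^\soft-z)v_f$ --- whereas you handle both at once via the decomposition $u=(\breve A_\infty^\soft-z)^{-1}f+v$ with $v\in\breve{\mathfrak N}_z$; these are equivalent manoeuvres. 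One minor correction: the invertibility of $B(z)-\breve M(z)$ for $z\in\mathbb C_\pm$ does not rest on Assumption~(iii), which pertains to $B_\eff^{(\tau)}$ in a different part of the argument. It follows here directly from $\mathcal B=\mathcal B^*$, since then $\Im B(z)=-(\Im z)\,\Pi^*\Pi$ is negative (resp.\ positive) definite on $\mathbb C_+$ (resp.\ $\mathbb C_-$) while $\Im\breve M(z)$ has the opposite sign.
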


\begin{proof}
We start by considering the problem
$$
(\mathcal A - z) \binom{u}{u^{(1)}}=\binom{f}{0},
$$
which is rewritten as
\begin{align*}
\breve{A}_{\max}^\soft u-zu &=f,\\[0.4em]
-(\Pi^*)^{-1}\breve{\Gamma}_1^\soft u + \mathcal B u^{(1)}-z u^{(1)}&=0.
\end{align*}
The second equation admits the form
$$
\breve{\Gamma}_1^\soft u=\Pi^* (\mathcal B -z) \Pi \breve{\Gamma}_0^\soft u,
$$
and thus for $B(z):=\Pi^* (\mathcal B -z)\Pi$ one has $u=R(z)f$.

Since $(u,u^{(1)})^\top\in \dom \mathcal A,$ one also obtains from the same calculation that $u^{(1)}=\Pi \breve{\Gamma}_0^\soft u$ and therefore
$$
u^{(1)}=\Pi \breve{\Gamma}_0^\soft R(z) f=\Pi \breve{\mathfrak K}_z\bigl[R(z)-(\breve A_{\infty}^{\soft}-z)^{-1}\bigr],
$$
thus completing the proof for the first column of the matrix representation \eqref{eq:NRA1}.

We proceed with establishing the second column. Considering the problem
\begin{equation}\label{eq:solving_second_column}
(\mathcal A - z) \binom{u}{u^{(1)}}=\binom{0}{f^{(1)}},
\end{equation}
we rewrite it as
\begin{align*}
\breve{A}_{\max}^\soft u-zu &=0,\\[0.4em]
-(\Pi^*)^{-1}\breve{\Gamma}_1^\soft u + \mathcal B u^{(1)}-z u^{(1)}&=f^{(1)}.
\end{align*}
The second equation admits the form
$$
\breve{\Gamma}_1^\soft u=\Pi^* (\mathcal B -z) \Pi \breve{\Gamma}_0^\soft u-\Pi^*f^{(1)}.
$$
Pick a function $v_f\in\dom \breve A_{\max}^\soft\cap\ker(\breve{\Gamma}_0^\soft)=\dom \breve A_\infty^\soft$ that satisfies $\breve{\Gamma}_1^\soft v_f=\Pi^* f^{(1)}$. Such a choice is possible due to the surjectivity property of the triple. We look for a solution to \eqref{eq:solving_second_column} such that its first component $u$ admits the form $u=v-v_f$. For $v,$ using the fact that by construction $v_f\in \dom \breve A_{\infty}^\soft,$ we then obtain $v=R(z)(\breve A_{\infty}^\soft-z)v_f,$ and therefore
$$
u=R(z)(\breve A_{\infty}^\soft-z)v_f-v_f.
$$
Letting $u_f:=(\breve A_{\infty}^\soft-z)v_f,$  this amounts to
$$
u=\bigl[R(z)-(\breve A_{\infty}^\soft-z)^{-1}\bigr]u_f.
$$
Using the Kre\u\i n formula, we have
$$
u=\bigl[R(z)-(\breve A_{\infty}^\soft-z)^{-1}\bigr]u_f=-\breve{\gamma}^\soft(z)\bigl(\breve{M}_\soft(z)-B(z)\bigr)^{-1}\breve{\Gamma}_1^\soft v_f =
-\breve{\gamma}^\soft(z)\bigl(\breve{M}_\soft(z)-B(z)\bigr)^{-1}\Pi^* f^{(1)},
$$
where $\breve{\gamma}(z)$ and $\breve{M}_\soft(z)$ are the solution operator and the $M$-matrix of $\breve A_{\max}^\soft$, pertaining to the triple $(\breve{\mathcal H},\breve{\Gamma}_0^\soft,\breve{\Gamma}_1^\soft)$, respectively. Proceeding as in the proof of Theorem \ref{thm:NRA}, we rewrite the latter expression as:
$$
u=\Bigl(\breve{\mathfrak{K}_{\overline{z}}}\bigl[R(\bar z)-(\breve{A}_\infty^\soft-\bar z)^{-1}\bigr]\Bigr)^* \Pi^* f^{(1)},
$$
and thus complete the proof of representation \eqref{eq:NRA1}. The fact that $\mathcal A$ is self-adjoint in $\mathfrak H$ now follows from Lemma \ref{lemma:self-adjointness}.
\end{proof}

\subsection{``Self-adjointness" of the asymptotics}

Introduce the truncated triple $(\breve{\mathcal H},\breve{\Gamma}_0^\stiff, \breve{\Gamma}_1^\stiff)$ for $\breve{A}_{\max}^\stiff$ by the formulae (cf. \eqref{eq:truncated_triple})
      \begin{equation}\label{eq:truncated_triple_stiff}
      \breve{\mathcal{H}}:=\P\mathcal H, \quad \breve{\Gamma}_0^\stiff:=\P \Gamma_0^\stiff, \quad \breve{\Gamma}_1^\stiff:=\P \Gamma_1^\stiff,
      \end{equation}
where $\breve{A}_{\max}^{\stiff}$ is the restriction of the operator $A_{\max}^\stiff$ to the domain $\dom \breve{A}_{\max}^{\stiff}:=\dom A_{\max}^\stiff \cap\ker(\Port\Gamma_0^\stiff).$

We have by the definition of solution operators: $\breve{\gamma}^\stiff (z)=\gamma^\stiff(z) \P$. Indeed, this follows from $\Gamma_0^\stiff \gamma^\stiff(z)\P=\P$. Then $\breve{\Pi}_\stiff=\Pi_\stiff\P,$ where, in line with the preceding notation, we have denoted $\breve{\Pi}_\stiff:=\breve{\gamma}^\stiff(0).$

\begin{theorem}\label{cor:NRA} Under the assumptions (i)--(iii), let $\bigl(B^{(\tau)}_\eff\bigr)'_z(0)=-(\breve{\Pi}_\stiff)^*\breve{\Pi}_\stiff$.
Then the asymptotics $\mathcal R^{(\tau)}_\eff$ of $(A_\e^{(\tau)}-z)^{-1}$ provided by Theorem \ref{thm:NRA} is the resolvent $(\mathcal{A}^{(\tau)}_\eff-z)^{-1}$ of a self-adjoint operator $\mathcal{A}^{(\tau)}_\eff$, introduced by Definition \ref{defn:gen_dilation} with $\Pi=\Pi^{(\tau)}$ chosen as $\Pi^{(\tau)}=\breve{\Pi}_\stiff$ and $\mathcal B=\mathcal{B}^{(\tau)}$ chosen so that
$$
B^{(\tau)}_\eff(0)=(\breve{\Pi}_\stiff)^*\mathcal{B}^{(\tau)}\breve{\Pi}_\stiff.
$$
\end{theorem}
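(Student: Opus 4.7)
The plan is to realise $\mathcal R^{(\tau)}_\eff$ from Theorem \ref{thm:NRA} as the resolvent formula \eqref{eq:NRA1} provided by Theorem \ref{thm:dilation_resolvent}, applied to the Strauss dilation $\mathcal A^{(\tau)}_\eff$ constructed via Definition \ref{defn:gen_dilation} with the prescribed data $\Pi = \breve{\Pi}_\stiff$ and $\mathcal B = \mathcal B^{(\tau)}$. The first step is to verify that $\mathcal B^{(\tau)}$ is self-adjoint, which by Lemma \ref{lemma:self-adjointness} guarantees self-adjointness of $\mathcal A^{(\tau)}_\eff$. I would extract this from the double-sided $R$-function property of $B^{(\tau)}_\eff$ (inherited via Theorem \ref{thm:crucial}) together with its affine dependence on $z$ from assumption (iii): equating coefficients in $(B^{(\tau)}_\eff(z))^* = B^{(\tau)}_\eff(\bar z)$ yields self-adjointness of both $B^{(\tau)}_\eff(0)$ and $(B^{(\tau)}_\eff)'_z(0)$ on $\breve{\mathcal H}$, and this transports to $\mathcal B^{(\tau)}$ through the defining relation, since $\breve{\Pi}_\stiff$ is boundedly invertible onto its range.

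The central computation is then to identify the generalised resolvent $R(z)$ on the right-hand side of \eqref{eq:NRA1} with $R^{(\tau)}_\eff(z)$. Under the prescribed choices one computes
$$
B(z) = (\breve{\Pi}_\stiff)^* \mathcal B^{(\tau)} \breve{\Pi}_\stiff - z (\breve{\Pi}_\stiff)^* \breve{\Pi}_\stiff = B^{(\tau)}_\eff(0) + z (B^{(\tau)}_\eff)'_z(0) = B^{(\tau)}_\eff(z),
$$
the last equality invoking linearity from assumption (iii). An application of Corollary \ref{cor:Strauss} to the effective problem within the framework of assumption (ii) then identifies $R(z)$ with $R^{(\tau)}_\eff(z)$, so in particular the soft-soft blocks of \eqref{eq:NRA} and \eqref{eq:NRA1} agree.

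For the remaining blocks I would first observe that $\breve{A}_\infty^\soft = A_\infty^{(\tau),\soft}$: unwinding definitions and using $\Port + \P = I$,
$$
\dom \breve{A}_\infty^\soft = \dom A_{\max}^\soft \cap \ker(\Port \Gamma_0^\soft) \cap \ker(\P \Gamma_0^\soft) = \dom A_{\max}^\soft \cap \ker \Gamma_0^\soft = \dom A_\infty^{(\tau),\soft}.
$$
Then by assumption (i), $\Gamma_0^\soft R^{(\tau)}_\eff(z) f \in \P \mathcal H$, so
$$
\breve{\mathfrak K}_z \bigl[R^{(\tau)}_\eff(z) - (\breve{A}_\infty^\soft - z)^{-1}\bigr] f = \P \Gamma_0^\soft R^{(\tau)}_\eff(z) f = \Gamma_0^\soft R^{(\tau)}_\eff(z) f = \mathfrak K^{(\tau)}_z \bigl[R^{(\tau)}_\eff(z) - (A_\infty^{(\tau),\soft} - z)^{-1}\bigr] f,
$$
and the outer factor $\breve{\Pi}_\stiff = \Pi_\stiff \P$ reduces to $\Pi_\stiff$ upon post-composition with a vector already lying in $\P \mathcal H$. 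The off-diagonal blocks of \eqref{eq:NRA1} therefore equal those of \eqref{eq:NRA}, and the bottom-right block is matched by iterating the same observation in both the direct and adjointed arguments.

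The principal obstacle, as I see it, is not analytic but notational: one must carefully track how the projections $\P$ and $\Port$ embedded in the truncated triple \eqref{eq:truncated_triple} cancel against the subspace restrictions imposed by assumption (i), and how the paired objects $(\breve{\mathfrak K}_z, \mathfrak K^{(\tau)}_z)$ and $(\breve{\Pi}_\stiff, \Pi_\stiff)$ are related on the relevant ranges. Once this bookkeeping is executed, the statement reduces to a direct algebraic comparison of two Kre\u\i n-type resolvent representations, requiring no analytic input beyond what has already been developed.
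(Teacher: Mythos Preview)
Your proposal is correct and follows essentially the same route as the paper: both arguments identify $\breve{A}_\infty^\soft$ with $A_\infty^{(\tau),\soft}$, use assumption (i) to show that $\breve{\mathfrak K}_z$ and $\mathfrak K_z^{(\tau)}$ (and likewise $\breve{\Pi}_\stiff$ and $\Pi_\stiff$) coincide on the relevant range, and then match the resulting expression against Theorem \ref{thm:dilation_resolvent} via the affine identity $B(z)=B^{(\tau)}_\eff(z)$. Your explicit verification of the self-adjointness of $\mathcal B^{(\tau)}$ is a welcome addition that the paper leaves implicit.
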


\begin{proof}

Under the assumptions one has
$$
(A_\infty^{(\tau),\soft}-z)^{-1}=(\breve{A}_\infty^{(\tau),\soft}-z)^{-1}; \quad \breve{\mathfrak{K}}_z^{(\tau)}=\breve{\Gamma}_0^\soft|_{\breve{\mathfrak{N}}_z}=\P\Gamma_0^\soft|_{\breve{\mathfrak{N}}_z}=\P\mathfrak{K}_z^{(\tau)}|_{\breve{\mathfrak{N}}_z}.
$$
Since at the same time $[R_\eff^{(\tau)}(\bar z)-(A_\infty^{(\tau),\soft}-\bar z)^{-1}]f\in \breve{\mathfrak{N}}_z$, one has
$$
\Pi_\stiff\mathfrak{K}^{(\tau)}_z [R_\eff^{(\tau)}(z)-(A_\infty^{(\tau),\soft}-z)^{-1}]=
\breve{\Pi}_\stiff\breve{\mathfrak{K}}^{(\tau)}_z [R_\eff^{(\tau)}(z)-(A_\infty^{(\tau),\soft}-z)^{-1}].
$$

Therefore, the representation \eqref{eq:NRA} provided by Theorem \ref{thm:NRA} admits the form
\begin{equation}\label{eq:NRA_breve}
\mathcal R^{(\tau)}_\eff=
\begin{pmatrix}
 \mathcal R^{(\tau)}_\eff(z) &\ \  \Bigl(\breve{\mathfrak {K}}_{\overline {z}}^{(\tau)}\bigl[R_\eff^{(\tau)}(\bar z)-(\breve{A}_\infty^{(\tau),\soft}-\bar z)^{-1}\bigr]\Bigr)^*\breve{\Pi}_\stiff^*
 \\[0.9em]
\breve{\Pi}_\stiff\breve{\mathfrak{K}}^{(\tau)}_z\bigl[R_\eff^{(\tau)}(z)-(\breve{A}_\infty^{(\tau),\soft}-z)^{-1}\bigr] &\ \   \breve{\Pi}_\stiff\breve{\mathfrak {K}}_{z}^{(\tau)}\Bigl(\breve{\mathfrak {K}}_{\overline {z}}^{(\tau)}\bigl[R_\eff^{(\tau)}(\bar z)-(\breve{A}_\infty^{(\tau),\soft}-\bar z)^{-1}\bigr]\Bigr)^*\breve{\Pi}_\stiff^*
\end{pmatrix}.
\end{equation}
Comparing this with the statement of Theorem \ref{thm:dilation_resolvent}, we note that $\mathcal R^{(\tau)}_\eff$ is the resolvent of the stated self-adjoint operator provided that $B_\eff^{(\tau)}=\Pi^*(\mathcal B^{(\tau)} -z)\Pi$.
\end{proof}

\begin{remark}\label{rem:for_start}
The above theorem opens up a multitude of ways to offer a norm-resolvent asymptotics for $(A_\e^{(\tau)}-z)^{-1}$. Seemingly the simplest would follow if one chose $\P=I$. In this case, Theorem \ref{thm:crucial} provides the following recipe for the asymptotics sought. One starts with $B_\e^{(\tau)}(z):=-M_\e^{(\tau),\stiff}(z)$. Expanding this Dirichlet-to-Neumann map in powers of $\e$, one gets (see \eqref{eq:standard_expansion})
$$
B_\e^{(\tau)}(z)=-\e^{-2} M_0^{(\tau)} - z M_1^{(\tau)}+O(\e^2),
$$
yielding
$$
B_\eff^{(\tau)}:=-\e^{-2} M_0^{(\tau)} - z M_1^{(\tau)},
$$
where $M_1^{(\tau)}$ can be obtained as $M_1^{(\tau)}=\bigl(M_\e^{(\tau),\stiff}\bigr)'_z(0)$, whereas $M_0^{(\tau)}=\e^2\Gamma_1^\stiff \Pi_\stiff$ does not depend on $\e$ due to the choice of $\Gamma_1^\stiff$, see \eqref{weightedK1} and Section \ref{triple_section}. On the other hand, \cite{Ryzhov_later} provides us with the representation $\bigl(M_\e^{(\tau),\stiff}\bigr)'_z(0)=\Pi_\stiff^*\Pi_\stiff$, and hence
$$
B_\eff^{(\tau)}=-\e^{-2} M_0^{(\tau)}  - z\Pi_\stiff^*\Pi_\stiff,
$$
as required for the applicability of Theorem \ref{cor:NRA} after making the suitable choice of $\mathcal B^{(\tau)}:$
\[
\mathcal B^{(\tau)}=-\e^2(\Pi_\stiff^*)^{-1} M_0^{(\tau)}(\Pi_\stiff)^{-1}.
\]

However, as seen from Lemmata \ref{lemma:gen_res_0}, \ref{lemma:gen_res_2}, this strategy is not the most effective from the point of view of the form of the final result, as it requires the asymptotics $\mathcal R^{(\tau)}_\eff$ to depend on $\e$ even when it proves possible to obtain a uniform limit for the generalised resolvents $R_\e^{(\tau)}.$
\end{remark}

Motivated by the just mentioned results, and also by the classical elliptic argument of \cite{BirmanSuslina}, \cite{Friedlander} ({\it cf.} Remark \ref{rem:assumptions}, (3)), one arrives at $R^{(\tau)}_\eff$ as described by Assumptions (i)--(iii) with a non-trivial projection $\P$. In order to better understand this case, consider an intermediate operator family $\mathfrak{A}_\e^{(\tau)}$ defined by the same differential expression as $A_\e^{(\tau)}$ on the domain $\dom \mathfrak{A}_\e^{(\tau)}$:
$$
\dom \mathfrak{A}_\e^{(\tau)}:=\bigl\{u\in \dom {A}_{\max}: \Port \Gamma_0 u =0, \P \Gamma_1 u = 0\bigr\}
$$
({\it cf.} \cite{CherKis}, where this intermediate family was postulated -- in fact, as a matter of an ``educated guesswork''). It is easily seen that
for the corresponding $M$-matrix $\mathfrak{M}^{(\tau)}_\e$ one has $\mathfrak{M}^{(\tau)}_\e(z)=\P M^{(\tau)}_\e(z)\P$, where $M^{(\tau)}_\e$ is as above the $M-$matrix of $A_{\max}$ relative to the triple $(\mathcal H, \Gamma_0, \Gamma_1)$ and $\mathfrak{M}^{(\tau)}_\e$ is the $M$-matrix of the operator $\mathfrak{A}_{\max}$ relative to the triple $(\P\mathcal H, \P{\Gamma}_0, \P{\Gamma}_1)$. Here $\mathfrak{A}_{\max}$ is defined as the restriction of $A_{\max}$ to the domain $\dom \mathfrak{A}_{\max}:=\dom {A}_{\max}\cap\ker(\Port\Gamma_0).$

The analysis of Appendices A, B (see \cite{Physics} for the general case) is then invoked to show that
\begin{equation}\label{eq:two_terms}
B_\eff^{(\tau)}(z)=-\P M_\e^{(\tau), \stiff}(0)\P- z\breve{\Pi}_\stiff^*\breve{\Pi}_\stiff=-\mathfrak{M}^{(\tau),\stiff}_\e(0)-z \bigl(\mathfrak{M}^{(\tau),\stiff}_\e\bigr)'(0),
\end{equation}
where $\mathfrak{M}^{(\tau),\stiff}_\e$ is the $M-$matrix of the stiff component of the media, introduced for the operator $\mathfrak{A}^{(\tau)}_\e$ as in Section \ref{triple_section}.

By an application of Theorem \ref{thm:crucial} one then has
that the operator family $(\mathfrak{A}_\e^{(\tau)}-z)^{-1}$ admits the same asymptotics provided by Theorem \ref{cor:NRA} as the family $({A}_\e^{(\tau)}-z)^{-1}$. This leads to the possibility to treat critical-contrast periodic media as a particular example of the so-called \emph{folded media}, see {\it e.g.} \cite{Leonhardt, Milton_et_al}. This subject is beyond the scope of the present paper though and will be treated elsewhere.

\begin{remark}
  We remark that the decomposition \eqref{eq:two_terms} yields a useful equivalent definition for the operator $\mathcal A$ (see Definition \ref{defn:gen_dilation}). Indeed, under the assumption that (\ref{eq:two_terms}) holds and using the general fact that $M_\e^{(\tau), \stiff}(0)=\Gamma_1^\stiff \Pi_\stiff,$ one computes $\mathcal B^{(\tau)}=-(\Pi_\stiff^*)^{-1}\Gamma_1^\stiff$, thus arriving at the following expression for the action of $\mathcal A$ (cf. \eqref{eq:strauss_op}):
  \begin{equation}\label{eq:strauss_op_mod}
  \mathcal A \binom{u}{u^{(1)}}:=\binom{\breve{A}_{\max}^{\soft} u}{-(\Pi^*)^{-1} \breve{\Gamma}_1^\soft u - (\Pi^*)^{-1} \breve{\Gamma}_1^\stiff u^{(1)}}.
\end{equation}
Here, as per Theorem \ref{cor:NRA}, $\Pi=\breve{\Pi}_\stiff$ and therefore $u^{(1)}=\breve{\Pi}_\stiff \breve{\Gamma}_0^\soft u\in \ker \breve{A}_{\max}^\stiff$, which ascertains the correctness of \eqref{eq:strauss_op_mod}.
\end{remark}

\section{Examples: homogenised operators}

\label{examples_section}

In this section, we apply results of Section 7 to the three examples introduced in Section
\ref{section:examples} to obtain the leading terms in asymptotic expansion of $(A_\e^{(\tau)}-z)^{-1}$.

\subsection*{(0) A medium with both components disconnected}
Let $H_\hom=H_\soft\oplus \mathbb C^1$. For all values $\tau\in[-\pi, \pi),$ consider a self-adjoint operator $\mathcal A_{\rm hom}^{(\tau)}$ on the space $H_\hom,$ defined as follows. Let the domain $\dom \mathcal A_{\hom}^{(\tau)}$ be defined as
\begin{equation*}\label{eq:ex0_domain}
\dom \mathcal A_{\hom}^{(\tau)}=\Bigl\{(u,\beta)^\top\in H_\hom:\ u\in W^{2,2}(0,l^{(2)}), u(0)=\bxit u(l^{(2)})=\beta/\sqrt{l^{(1)}}\Bigr\}.
\end{equation*}

On $\dom \mathcal A_{\hom}^{(\tau)}$ the action of the operator is set by
$$
\mathcal A_{\hom}^{(\tau)}\binom{u}{\beta}=
\left(\begin{array}{c}\biggl(\dfrac{1}{\rm i}\dfrac{d}{dx}+\tau\biggr)^2\\[0.7em]
-\dfrac{1}{\sqrt{l^{(1)}}}
\bigl(\partial^{(\tau)} u\bigr|_0 - \bxit\partial^{(\tau)} u\bigr|_{l^{(2)}}\bigr)
\end{array}\right).
$$
We recall that
\[
\xi^{(\tau)}:=\exp(i l^{(1)} \tau),\qquad
\dtau u:=\biggl(\frac d {dx}+i\tau\biggr) u.
\]

\begin{theorem}\label{thm:ex0}
The resolvent $(A_\e^{(\tau)}-z)^{-1}$ admits the following estimate in the uniform operator norm topology:
$$
(A_\e^{(\tau)}-z)^{-1}-\Psi^* (\mathcal A_{\hom}^{(\tau)}-z)^{-1}\Psi=O(\e^2),
$$
where $\Psi$ is a partial isometry from $H$ to $H_\hom$.
This estimate is uniform in $\tau\in[-\pi,\pi)$ and $z\in K_\sigma$.
\end{theorem}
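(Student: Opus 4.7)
The plan is to realise $\mathcal A_\hom^{(\tau)}$ as the concrete Strauss dilation provided by Theorem \ref{cor:NRA}, applied to the generalised resolvent $R_\eff^{(\tau)}$ of Lemma \ref{lemma:gen_res_0}, and then to translate the matrix asymptotics \eqref{eq:NRA_breve} into the claimed estimate through an explicit partial isometry $\Psi$.

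I would first check the assumptions (i)--(iii) of Section 7 for $R_\eff^{(\tau)}$ as defined in \eqref{eq:domainhom0}. The continuity-type constraint $u|_0=\bxit u|_{l^{(2)}}$ forces $\Gamma_0^\soft u\in\breve{\mathcal H}$, where $\breve{\mathcal H}\subset\mathbb C^2$ is the one-dimensional span of $v:=(\bxit,1)^\top/\sqrt 2$ -- precisely the kernel of the $\e^{-2}$-leading term of $-M_\e^{(\tau),\stiff}$ read off from \eqref{D2N_0}; let $\P$ be the associated orthogonal projection. The derivative boundary condition in \eqref{eq:domainhom0} is affine in $z$, so the parametrising matrix $B^{(\tau)}_\eff(z)$ in the truncated triple $(\breve{\mathcal H},\breve\Gamma_0^\soft,\breve\Gamma_1^\soft)$ of \eqref{eq:truncated_triple} is affine in $z$; an expansion of $-M_\e^{(\tau),\stiff}(z)$ in $\e$ (cf.\ Remark \ref{rem:assumptions}) then yields $B^{(\tau)}_\eff(0)=0$ (reflecting the fact that $\mathbb G_\stiff$ is disconnected) and, acting on $v$, $(B^{(\tau)}_\eff)'(0)v = -(l^{(1)}/2)v$.

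I would then verify the hypothesis $(B^{(\tau)}_\eff)'(0)=-(\breve\Pi_\stiff)^*\breve\Pi_\stiff$ required by Theorem \ref{cor:NRA}. The unique solution on $[0,l^{(1)}]$ of $(d/dx+i\tau)^2 u=0$ with boundary data $v$ is $\phi_0(x):=e^{-i\tau x}/\sqrt 2$ (the linear coefficient in the general solution $u(x)=e^{-i\tau x}(A+Bx)$ vanishes for these endpoint values), whence $\breve\Pi_\stiff v=\phi_0$ and $\|\phi_0\|_{L^2(0,l^{(1)})}^2=l^{(1)}/2$, matching the scalar computed in the previous step. Theorem \ref{cor:NRA} then delivers a self-adjoint operator $\mathcal A^{(\tau)}_\eff$ on $H_\soft\oplus H^{(1)}$, with $H^{(1)}:=\ran\breve\Pi_\stiff=\mathbb C\phi_0$ and $\mathcal B^{(\tau)}=0$ (forced by $B^{(\tau)}_\eff(0)=0$), whose resolvent coincides with the right-hand side $\mathcal R^{(\tau)}_\eff$ of \eqref{eq:NRA_breve}.

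To obtain the stated form, I would identify $H^{(1)}$ unitarily with $\mathbb C^1$ via $U:\beta\mapsto\beta\phi$ for a phase-normalised unit vector $\phi\in H^{(1)}$ (phase chosen so that $\breve\Pi_\stiff v=\bxit\sqrt{l^{(1)}/2}\,\phi$, as dictated by the form of the coupling in \eqref{eq:ex0_domain}). Under this identification the constraint $u^{(1)}=\breve\Pi_\stiff\breve\Gamma_0^\soft u$ in Definition \ref{defn:gen_dilation} reduces to $u(0)=\bxit u(l^{(2)})=\beta/\sqrt{l^{(1)}}$, and the $H^{(1)}$-component $-(\breve\Pi_\stiff^*)^{-1}\breve\Gamma_1^\soft u$ of \eqref{eq:strauss_op} becomes the stated scalar $-(\dtau u|_0-\bxit\dtau u|_{l^{(2)}})/\sqrt{l^{(1)}}$. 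The partial isometry $\Psi:H\to H_\hom$ is then defined by $\Psi(u,w):=(u,\langle w,\phi\rangle_{L^2(0,l^{(1)})})$, with adjoint $\Psi^*(u,\beta)=(u,\beta\phi)$; since all entries of $\mathcal R^{(\tau)}_\eff$ in \eqref{eq:NRA_breve} factor through $\breve\Pi_\stiff$ (so that the stiff part of their range sits in $\mathbb C\phi$), one has $\Psi^*(\mathcal A_\hom^{(\tau)}-z)^{-1}\Psi=\mathcal R^{(\tau)}_\eff$ as operators on $H$, and the $O(\e^2)$ bound follows from Theorem \ref{thm:NRA} combined with Lemma \ref{lemma:gen_res_0} (which supplies $r=2$). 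The main obstacle throughout is the consistent bookkeeping of the two one-dimensional identifications $\breve{\mathcal H}\cong\mathbb C$ and $H^{(1)}\cong\mathbb C^1$, together with the Bloch/Datta--Das Sarma phase factors $\xit,\bxit$ whose precise form is dictated by the orientation convention for the edges of $\widehat{\mathbb G}_\e$ used in Lemma \ref{lemma:M_0}; once these choices are made consistently, the domain and action of $\mathcal A_\hom^{(\tau)}$ follow mechanically from Definition \ref{defn:gen_dilation} and Theorem \ref{cor:NRA}.
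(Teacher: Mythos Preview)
Your proposal is correct and follows essentially the same route as the paper: the paper's own proof is the one-line ``invoke Lemma \ref{lemma:gen_res_0} and Theorem \ref{cor:NRA}'', and what you have written is precisely a careful unpacking of that invocation --- verifying Assumptions (i)--(iii), checking that $(B^{(\tau)}_\eff)'(0)=-(\breve\Pi_\stiff)^*\breve\Pi_\stiff$ via the explicit harmonic extension on the stiff edge, identifying $H^{(1)}\cong\mathbb C^1$, and reading off $\dom\mathcal A_\hom^{(\tau)}$ and its action from Definition \ref{defn:gen_dilation}. Your cautionary remark about phase and orientation bookkeeping is well placed (your $v$ differs from the paper's $\psi^{(\tau)}$ in Appendix A by the unimodular factor $\bxit$, which is harmless for the projection $\P$ but must be tracked when matching the final formulae).
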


The \emph{proof} is carried out by invoking Lemma \ref{lemma:gen_res_0} and Theorem \ref{cor:NRA}.

\subsection*{(1) A case of connected stiff component}
Let $H_\hom=H_\soft\oplus \mathbb C^1$. For all values $\tau\in[-\pi, \pi),$ consider a self-adjoint operator $\mathcal A_{\rm hom}^{(\tau)}$ on the space $H_\hom,$ defined as follows. Let the domain $\dom \mathcal A_{\hom}^{(\tau)}$ be defined as
\begin{equation*}
\dom \mathcal A_{\hom}^{(\tau)}=\biggl\{(u,\beta)^\top\in H_\hom:\ u\in W^{2,2}(0,l^{(2)}), u|_{0}=-\frac {\bxit}{|\xit|}u|_{l^{(2)}}=\frac{\beta}{\sqrt{l^{(1)}+l^{(3)}}}\biggr\}.
\end{equation*}

On $\dom \mathcal A_{\hom}^{(\tau)}$ the action of the operator is set by
$$
\mathcal A_{\hom}^{(\tau)}\binom{u}{\beta}=
\left(\begin{array}{c}\biggl(\dfrac{1}{\rm i}\dfrac{d}{dx}+\tau\biggr)^2\\[0.8em]
-\dfrac{1}{\sqrt{l^{(1)}+l^{(3)}}}
\biggl(\partial^{(\tau)} u\bigr|_0 + \dfrac {\bxit}{|\xit|}u\bigr|_{l^{(2)}}\biggr)+\bigl(l^{(1)}+l^{(3)}\bigr)^{-1}\biggl(\dfrac {l^{(1)}}{a_1^2}+\dfrac {l^{(3)}}{a_3^2}\biggr)^{-1}\biggl(\dfrac{\tau}{\e}\biggr)^2 \beta
\end{array}\right).
$$
We recall that
\[
\xit=-\frac{a_1^2}{l^{(1)}}\exp\bigl(i\tau(l^{(1)}+l^{(3)})\bigr)-\frac{a_3^2}{l^{(3)}}\exp(-i\tau l^{(2)}),\qquad
\dtau u:=\biggl(\frac d {dx}+i\tau\biggr) u.
\]

\begin{theorem}\label{thm:ex1}
The resolvent $(A_\e^{(\tau)}-z)^{-1}$ admits the following estimate in the uniform operator norm topology:
$$
(A_\e^{(\tau)}-z)^{-1}-\Psi^* (\mathcal A_{\hom}^{(\tau)}-z)^{-1}\Psi=O(\e^2),
$$
where $\Psi$ is a partial isometry  from $H$ to $H_\hom$.
This estimate is uniform in $\tau\in[-\pi,\pi)$ and $z\in K_\sigma$.
\end{theorem}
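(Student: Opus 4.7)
The proof proceeds in parallel with that of Theorem \ref{thm:ex0}. My plan is to combine Lemma \ref{lemma:gen_res_1} with Theorem \ref{thm:NRA} to lift the $O(\e^2)$ generalised-resolvent asymptotics to the full resolvent, and then to apply Theorem \ref{cor:NRA} to identify the resulting $\mathcal R_\eff^{(\tau)}$ with the resolvent of an explicit self-adjoint operator. The novelty relative to Example (0) is the non-zero least Steklov eigenvalue of $A_{\max}^\stiff$, arising from the connectedness of the stiff component within one period, which is responsible for the $(\tau/\e)^2$ term in the action of $\mathcal A_\hom^{(\tau)}$.

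First I would verify Assumptions (i)--(iii) of Section 7 for $R_\eff^{(\tau)}$ defined by \eqref{eq:domainhom1}. The first boundary condition $u|_{0}+(\bxit/|\xit|)u|_{l^{(2)}}=0$ confines $\Gamma_0^\soft u\in\mathbb C^2$ to a 1-dimensional subspace $\breve{\mathcal H}=\P\mathcal H$, yielding (i). The third line of \eqref{eq:domainhom1}, reduced to the truncated triple \eqref{eq:truncated_triple}, exhibits the parameterising matrix $B_\eff^{(\tau)}(z)$ as a scalar affine function of $z$, whose constant part encodes the $(\tau/\e)^2$-term and whose $z$-derivative at $0$ is proportional to $-(l^{(1)}+l^{(3)})$; this secures (ii)--(iii).

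To apply Theorem \ref{cor:NRA} I would verify its hypothesis $\bigl(B_\eff^{(\tau)}\bigr)'_z(0)=-(\breve{\Pi}_\stiff)^*\breve{\Pi}_\stiff$, which reduces to matching the coefficient of $z$ in $B_\eff^{(\tau)}$ against $\|\breve{\Pi}_\stiff\|_{\breve{\mathcal H}\to H_\stiff}^2$. Since at $z=0$ the (appropriately normalised) harmonic extension from $\breve{\mathcal H}$ onto the connected stiff component reduces to a single mode whose $L^2$-norm squared equals $l^{(1)}+l^{(3)}$ up to the normalisation constant dictated by the choice of unit generator for $\breve{\mathcal H}$, the identity follows once the constants are tracked. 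The operator $\mathcal B^{(\tau)}$ is then read off from $B_\eff^{(\tau)}(0)=(\breve{\Pi}_\stiff)^*\mathcal B^{(\tau)}\breve{\Pi}_\stiff$, yielding the scalar $\mathcal B^{(\tau)}=(l^{(1)}+l^{(3)})^{-1}\bigl(l^{(1)}/a_1^2+l^{(3)}/a_3^2\bigr)^{-1}(\tau/\e)^2$ which matches the coefficient of $\beta$ in the declared action of $\mathcal A_\hom^{(\tau)}$.

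Finally, I would unfold Definition \ref{defn:gen_dilation} (using form \eqref{eq:strauss_op_mod}) by isometrically identifying $H^{(1)}=\ran\breve{\Pi}_\stiff\subset H_\stiff$ with $\mathbb C^1$ via $u^{(1)}\leftrightarrow\beta$ with $\|u^{(1)}\|_{H_\stiff}=|\beta|$. Under this identification, the domain constraint $u^{(1)}=\breve{\Pi}_\stiff\breve{\Gamma}_0^\soft u$ combined with the $\P$-reduced first boundary condition yields the stated domain of $\mathcal A_\hom^{(\tau)}$ with the relation $u|_0=\beta/\sqrt{l^{(1)}+l^{(3)}}$, while the second component of the action \eqref{eq:strauss_op_mod} reproduces both the linear combination of $\dtau u|_0$, $\dtau u|_{l^{(2)}}$ and the term proportional to $(\tau/\e)^2\beta$. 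The partial isometry $\Psi:H\to H_\hom$ then acts as the identity on $H_\soft$ and as the composition of orthogonal projection onto $H^{(1)}$ with the above identification on $H_\stiff$. I expect the main technical obstacle to be the careful tracking of the normalisation constants (in particular the factor $\sqrt{2}$ arising from the unit generator of $\P\mathcal H$ and the factor $\sqrt{l^{(1)}+l^{(3)}}$ arising from the isometric identification of $H^{(1)}$ with $\mathbb C^1$) through the abstract-to-concrete conversion, so that the constants in $\mathcal A_\hom^{(\tau)}$ emerge exactly as stated.
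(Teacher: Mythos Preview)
Your proposal is correct and follows essentially the same route as the paper: invoke Lemma \ref{lemma:gen_res_1} to obtain the $O(\e^2)$ asymptotics for the generalised resolvent, then apply Theorems \ref{thm:NRA} and \ref{cor:NRA} to lift this to the full resolvent and identify the leading term as the resolvent of the displayed self-adjoint operator. The paper's own proof is a single line citing exactly these ingredients (plus Appendix A for the triple manipulations), and your expanded outline of how Assumptions (i)--(iii), the hypothesis of Theorem \ref{cor:NRA}, and the concrete identification of $\mathcal A_\hom^{(\tau)}$ via Definition \ref{defn:gen_dilation} are verified is the natural unpacking of that line.
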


The \emph{proof} is carried out by invoking Lemma \ref{lemma:gen_res_1} and Theorems \ref{thm:NRA}, \ref{cor:NRA}, see also Appendix A.

\begin{remark}
It can be shown that the term
$$
\mathfrak G=\bigl(l^{(1)}+l^{(3)}\bigr)^{-1}\biggl(\frac {l^{(1)}}{a_1^2}+\frac {l^{(3)}}{a_3^2}\biggr)^{-1}
$$
in the definition of $\mathcal{A}_\hom^{(\tau)}$ is precisely the spectral germ of \cite{BirmanSuslina}, introduced for the operator $A_\e^\stiff$. This self-adjoint operator is defined on $\mathbb{G}_\stiff$ by the same differential expression as $A_{\max}^\stiff$ on the domain
$$
\dom{A_\e^\stiff}=\{u\in \dom A_{\max}^\stiff : \Gamma_1^\stiff u =0\}
$$
and describes the original media with the soft component dropped. The approach developed in this paper therefore leads to a natural generalisation of the operator-theoretical approach of Birman and Suslina, which is inapplicable in non-strongly elliptic setting.

\end{remark}

\subsection*{(2) A case of connected soft component}
Let $H_\hom=H_\soft\oplus \mathbb C^1$. For all values $\tau\in[-\pi, \pi),$ consider a self-adjoint operator $\mathcal A_{\hom}^{(\tau)}$ on the space $H_\hom,$ defined as follows. Let the domain $\dom \mathcal A_{\hom}^{(\tau)}$ be defined as
\begin{multline*}
\dom \mathcal A_{\hom}^{(\tau)}=\biggl\{(u_1,u_2,\beta)^\top\in L^2[0,l^{(1)}]\oplus L^2(0,l^{(2)})\oplus \mathbb C^1:\\
u_j\in W^{2,2}(0,l^{(j)}), j=1,2;\quad u_2|_0=\bxittwo  u_2|_{l^{(2)}}=\bxitone   u_1|_0=u_1|_{l^{(1)}}=\frac{\beta}{\sqrt{l^{(3)}}}\biggr\}.
\end{multline*}

On $\dom \mathcal A_{\hom}^{(\tau)}$ the action of the operator is set by
$$
\mathcal A_{\hom}^{(\tau)}\begin{pmatrix}{u_1}\\ u_2\\ {\beta}\end{pmatrix}=
\left(\begin{array}{c}a_1^2\biggl(\dfrac{1}{\rm i}\dfrac{d}{dx}+\tau\biggr)^2\\[0.8em]
a_2^2\biggl(\dfrac{1}{\rm i}\dfrac{d}{dx}+\tau\biggr)^2\\[1.1em]
-\dfrac{1}{\sqrt{l^{(3)}}}
\Bigl(a_2^2 \dtau u_2|_0- a_2^2 \bxittwo  \dtau u_2|_{l^{(2)}}+ a_1^2 \bxitone   \dtau u_1|_0-a_1^2 \dtau u_1|_{l^{(1)}}\Bigr)
\end{array}\right).
$$
We recall that
\[
\xit_2=\exp(-i l^{(2)}\tau),\quad \xit_1=\exp\bigl(-i(l^{(2)}+l^{(3)})\tau\bigr),\qquad
\dtau u_j:=\biggl(\frac d {dx}+i\tau\biggr) u_j,\quad j=1,2.
\]

\begin{theorem}\label{thm:ex2}
The resolvent $(A_\e^{(\tau)}-z)^{-1}$ admits the following estimate in the uniform operator norm topology:
$$
(A_\e^{(\tau)}-z)^{-1}-\Psi^* (\mathcal A_{\hom}^{(\tau)}-z)^{-1}\Psi=O(\e^2),
$$
where $\Psi$ is a partial isometry from $H$ to $H_\hom$.
This estimate is uniform in $\tau\in[-\pi,\pi)$ and $z\in K_\sigma$.
\end{theorem}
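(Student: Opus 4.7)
The proof follows the blueprint already executed for Theorems \ref{thm:ex0} and \ref{thm:ex1}, with Lemma \ref{lemma:gen_res_2} taking the role previously played by Lemmata \ref{lemma:gen_res_0}, \ref{lemma:gen_res_1}. The first step is to verify the hypotheses of Theorem \ref{thm:NRA} for the effective generalised resolvent $R_\eff^{(\tau)}$ defined by \eqref{eq:domainhom2}: by inspection, the chain of equalities $u_2|_0=\bxittwo u_2|_{l^{(2)}}=\bxitone u_1|_0=u_1|_{l^{(1)}}$ is precisely the weighted continuity condition \eqref{domAmax} for the soft component with weights inherited from \eqref{eq:1-weights}, so that $\ran R_\eff^{(\tau)}(z)\subset\dom A_{\max}^\soft$. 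Theorem \ref{thm:NRA} then upgrades the sandwiched estimate of Lemma \ref{lemma:gen_res_2} to $(A_\e^{(\tau)}-z)^{-1}=\mathcal R_\eff^{(\tau)}+O(\e^2)$ uniformly in $\tau$ and $z\in K_\sigma$.

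Next I check the assumptions (i)--(iii) of Section 7. The projection $\P$ on $\mathcal H=\mathbb C^2$ is singled out as the orthogonal projection onto the one-dimensional subspace $\breve{\mathcal H}=\P\mathcal H$ consisting of the common boundary values at $V_1$ and $V_2$ identified through the chain above; this reflects the fact that in Example (2) the stiff component ${\mathbb G}_\stiff$ consists of the single edge $e^{(3)}$ connecting $V_1$ and $V_2$, so its contraction in the sense of the Remark at the end of Section \ref{analysis_sec} collapses both vertices into a single one. Assumptions (ii)--(iii) are then read off the comparison of \eqref{eq:domainhom2} with Lemma \ref{lemma:gen_resolvent} and Corollary \ref{cor:Strauss}: one obtains $B_\eff^{(\tau)}(z)=-\P M_\e^{(\tau),\stiff}(0)\P - z\breve{\Pi}_\stiff^*\breve{\Pi}_\stiff$ via \eqref{eq:two_terms}, where the first term vanishes because the stiff component is disconnected (equivalently, the least Steklov eigenvalue of $A_{\max}^\stiff$ is zero, cf.\ the final Remark of Section \ref{analysis_sec}). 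In particular, $\mathcal B^{(\tau)}=0$ in Definition \ref{defn:gen_dilation}, which explains the absence of a $(\tau/\e)^2$-term in the definition of $\mathcal A_\hom^{(\tau)}$, in contrast with Example (1).

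Having $B_\eff^{(\tau)}$ in the affine form with $(B_\eff^{(\tau)})'_z(0)=-\breve{\Pi}_\stiff^*\breve{\Pi}_\stiff$, I invoke Theorem \ref{cor:NRA} to identify $\mathcal R_\eff^{(\tau)}$ with the resolvent of a self-adjoint operator $\mathcal A_\eff^{(\tau)}$ acting on $\mathfrak H=H_\soft\oplus H^{(1)}$, where $H^{(1)}=\clos(\ran\breve{\Pi}_\stiff)\cong\mathbb C$ because $\breve{\Pi}_\stiff=\gamma^\stiff(0)\P$ has rank one (its range consists of the appropriately weighted constant functions on the single stiff edge of length $l^{(3)}$). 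A direct calculation of $\breve{\Pi}_\stiff$ then shows that the isometric factor identifying $H^{(1)}$ with $\mathbb C$ is precisely the scaling $\beta\mapsto\beta/\sqrt{l^{(3)}}$ appearing in the domain of $\mathcal A_\hom^{(\tau)}$. The partial isometry $\Psi:H\to H_\hom$ is defined as the identity on $H_\soft$ and, on $H_\stiff$, as the orthogonal projection onto $\ran\breve{\Pi}_\stiff$ composed with $\breve{\Pi}_\stiff^{-1}$; with this choice, the conjugation by $\Psi,\Psi^*$ is exactly the unitary identification between $\mathcal A_\eff^{(\tau)}$ and $\mathcal A_\hom^{(\tau)}$.

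The remaining (and principal) obstacle is to verify that the action \eqref{eq:strauss_op_mod} of $\mathcal A_\eff^{(\tau)}$, once transported to $H_\hom$ via $\Psi$, reproduces the boundary term $-(l^{(3)})^{-1/2}(a_2^2\dtau u_2|_0-a_2^2\bxittwo\dtau u_2|_{l^{(2)}}+a_1^2\bxitone\dtau u_1|_0-a_1^2\dtau u_1|_{l^{(1)}})$ in the definition of $\mathcal A_\hom^{(\tau)}$. This amounts to tracking the Datta--Das Sarma weights \eqref{eq:1-weights} through the compression $(\breve{\Pi}_\stiff^*)^{-1}\breve{\Gamma}_1^\soft$: the sum of the four co-normal derivatives weighted by the list of $w_V(e)$ at $V_1,V_2$ collapses, after multiplication by $(\breve{\Pi}_\stiff^*)^{-1}$, exactly into the stated expression, the factor $(l^{(3)})^{-1/2}$ being the image of the $\mathbb C$-component of the unit vector in $\breve{\mathcal H}$. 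Once this bookkeeping is completed, Theorems \ref{thm:NRA} and \ref{cor:NRA} yield the claimed $O(\e^2)$ estimate.
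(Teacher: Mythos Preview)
Your proposal is correct and follows essentially the same route as the paper, which simply states that the proof is carried out by invoking Lemma~\ref{lemma:gen_res_2} (the paper writes Lemma~\ref{lemma:gen_res_0}, almost certainly a typo, since the two lemmata share the same Appendix~A calculation) together with Theorem~\ref{cor:NRA}. You have merely unpacked what that one-line reference entails: verifying the hypothesis $\ran R_\eff^{(\tau)}\subset\dom A_{\max}^\soft$ of Theorem~\ref{thm:NRA}, identifying the rank-one projection $\P$ and noting that $\P M_\e^{(\tau),\stiff}(0)\P=0$ (whence $\mathcal B^{(\tau)}=0$), and tracking the Datta--Das Sarma weights through $(\breve\Pi_\stiff^*)^{-1}\breve\Gamma_1^\soft$ to recover the stated action of $\mathcal A_\hom^{(\tau)}$---all of which is implicit in the paper's terse citation.
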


The \emph{proof} is carried out by invoking Lemma \ref{lemma:gen_res_0} and Theorem \ref{cor:NRA}.

\section{Schur-Frobenius complement of the sandwiched resolvent on the soft component}

In this section we continue the study of the three examples, for which in Section \ref{examples_section} we constructed the resolvent asymptotics, in view to obtain equivalent time-dispersive formulations on the real line. In order to achieve this, we first introduce the orthogonal  projection  ${\mathfrak P}$  of $H_{\rm hom}$ onto $H_{\rm hom}\ominus H_{\rm soft},$ the latter space being ${\mathbb C}^1$ in all three cases. Following this, we determine the corresponding Schur-Frobenius complement ${\mathfrak P}(\mathcal A_{\hom}^{(\tau)}-z)^{-1}{\mathfrak P},$ see \cite[p.\,416]{Fuerer}.

\subsection{Examples (0) and (1)}
\label{Sch_F_sub}
Due to the fact that the soft component in each of these examples consists of only one edge, we shall consider Examples (0) and (1) of Section \ref{examples_section} simultaneously. To this end, we set
\begin{equation}
\Gamma_\tau\left(\begin{matrix}u\\[0.2em] \beta\end{matrix}\right)=-\dtau u\big\vert_0+\overline{w_\tau}\dtau u\big\vert_{l^{(2)}}+\biggl(\frac{\sigma\tau}{\varepsilon}\biggr)^2\frac{\beta}{\rho},
\label{Gamma}
\end{equation}
where $w_\tau,$ $\sigma$ and $\rho$ depend on the particular case, {\it cf.} Theorems \ref{thm:ex0}, \ref{thm:ex1}.
The problem of calculating ${\mathfrak P}(\mathcal A_{\hom}^{(\tau)}-z)^{-1}{\mathfrak P}$
consists in determining $\beta$ that solves
\begin{align}
-&\biggl(\frac{d}{dx}+{\rm i}\tau\biggr)^2u-zu=0,\label{diff_part}\\
&\left(\begin{matrix}u\\[0.2em] \beta\end{matrix}\right)\in\dom \mathcal A_{\hom}^{(\tau)},\quad \frac{1}{\rho}\,\Gamma_\tau\left(\begin{matrix}u\\[0.2em] \beta\end{matrix}\right)-z\beta=\delta\label{boundary_part}.
\end{align}
In order to exclude $u$ from (\ref{diff_part})--(\ref{boundary_part}), we represent it as a sum of two functions: one of them is a solution to the related inhomogeneous Dirichlet problem, while the other takes care of the boundary condition. More precisely, consider the solution $v$ to the problem
\begin{equation*}
-\biggl(\frac{d}{dx}+{\rm i}\tau\biggr)^2v=0,\qquad\qquad
v(0)=1,\ \ \ \ \ v(l^{(2)})=w_\tau,
\end{equation*}
{\it i.e.}
\begin{equation}
v(x)=\Bigl\{1+(l^{(2)})^{-1}\Bigl(w_\tau\exp({\rm i}\tau l^{(2)})-1\Bigr)x\Bigr\}\exp(-{\rm i}\tau x),\quad\quad x\in(0, l^{(2)}).
\label{function_v}
\end{equation}
The function
\[
\widetilde{u}:=u-\frac{\beta}{\rho}v
\]
satisfies
\begin{equation*}
-\biggl(\frac{d}{dx}+{\rm i}\tau\biggr)^2\widetilde{u}-z\widetilde{u}=\frac{z\beta}{\rho}v,\qquad
 \widetilde{u}(0)=\widetilde{u}(l^{(2)})=0.
\end{equation*}
Equivalently, one has
\begin{equation*}
\widetilde{u}=\frac{z\beta}{\rho}(A_{\rm D}-zI)^{-1}v,
\end{equation*}
where $A_{\rm D}$ is the Dirichlet operator in $L^2(0, l^{(2)})$ associated with the differential expression
\[
-\biggl(\frac{d}{dx}+{\rm i}\tau\biggr)^2.
\]
We now write the ``boundary''  part of the system (\ref{diff_part})--(\ref{boundary_part}) as
\begin{equation}
K(\tau, z)\beta-z\beta=\delta,
\label{K_eq}
\end{equation}
where
\begin{equation}
K(\tau, z):=\dfrac{1}{\rho^2}\left\{z\Gamma_\tau\left(\begin{matrix}
(A_{\rm D}-zI)^{-1}v\\[0.3em] 0\end{matrix}\right)+
\Gamma_\tau\left(\begin{matrix}v\\[0.2em] \rho\end{matrix}\right)\right\}.
\label{K_expr}
\end{equation}
Thus ${\mathfrak P}(\mathcal A_{\hom}^{(\tau)}-z)^{-1}{\mathfrak P}$ is the operator of multiplication in ${\mathbb C }^1$ by $(K(\tau, z)-z)^{-1}.$



The formula (\ref{K_expr}) shown, in particular, that the dispersion function $K$ is singular only at eigenvalues of the Dirichlet Laplacian on the soft component. It allows to compute $K$ in terms of the spectral decomposition of $A_{\rm D},$ {\it cf.} \cite{Zhikov2000}. In order to see this, we represent the action of the resolvent $(A_{\rm D}-zI)^{-1}$ as a series in terms of the normalised eigenfunctions
\begin{equation}
\varphi_j(x)=\sqrt{\frac{2}{l^{(2)}}}\exp(-{\rm i}\tau x)\sin\frac{\pi jx}{l^{(2)}},\qquad x\in(0,l^{(2)}),\qquad\qquad j=1,2,3,\dots,
\label{function_phi}
\end{equation}
of the operator $A_{\rm D},$ which yields
\begin{equation}
K(\tau, z):=\dfrac{1}{\rho^2}\left\{z\sum_{j=1}^\infty\dfrac{\langle v, \varphi_j\rangle}{\mu_j-z}\Gamma_\tau\left(\begin{matrix}
\varphi_j\\[0.3em] 0\end{matrix}\right)+
\Gamma_\tau\left(\begin{matrix}v\\[0.2em] \rho\end{matrix}\right)\right\}.
\label{K_general1}
\end{equation}
where $\mu_j=(\pi j/l^{(2)})^2,$ $j=1,2,3,\dots,$ are the corresponding eigenvalues and $v$ is defined in (\ref{function_v}). In each case we consider the problem (\ref{diff_part})--(\ref{boundary_part}), where operator $\Gamma_\tau$ depends on the specific example at hand.

\subsection{Example (2)}
Here we define
\begin{align*}
&\Gamma_\tau\left(\begin{matrix}u\\[0.2em] \beta\end{matrix}\right)=-a_1^2\Bigl(-\dtau u_1\big\vert_{l^{(1)}}+\overline{\xi_1^{(\tau)}}\dtau u_1\big\vert_0\Bigr)+a_2^2\Bigl(-\dtau u_2\big\vert_0+\overline{\xi_2^{(\tau)}}\dtau u_2\big\vert_{l^{(2)}}\Bigr),\\[0.4em]
&\xi_1^{(\tau)}:=\exp\bigl(-{\rm i}(l^{(2)}+l^{(3)})\tau\bigr),\quad \xi_2^{(\tau)}:=\exp(-{\rm i}l^{(2)}\tau),
\end{align*}
where $u_1$ and $u_2$  are the restrictions of the function $u$ to the edges $(0, l^{(1)})$ and $(0, l^{(2)}),$ respectively, and the resolvent problem for $A_{\rm hom}^{(\tau)}$ is given by ({\it cf.} (\ref{diff_part})--(\ref{boundary_part}))
\begin{align}
-&a_1^2\biggl(\frac{d}{dx}+{\rm i}\tau\biggr)^2u_1-zu_1=0,\label{new_1}\\[0.3em]
-&a_2^2\biggl(\frac{d}{dx}+{\rm i}\tau\biggr)^2u_2-zu_2=0,\label{new_2}\\[0.3em]
&\left(\begin{matrix}u\\[0.2em] \beta\end{matrix}\right)\in\dom \mathcal A_{\hom}^{(\tau)},\quad\frac{1}{\rho}\,\Gamma_\tau\left(\begin{matrix}u\\[0.2em] \beta\end{matrix}\right)-z\beta=\delta,\label{new_3}
\end{align}
where $\rho=\sqrt{l^{(3)}}.$ Following the strategy of Section \ref{Sch_F_sub}, we consider the functions $v_j,$ $j=1,2$ that satisfy appropriate Dirichlet problems:
\begin{equation*}
-\biggl(\frac{d}{dx}+{\rm i}\tau\biggr)^2v_1=0,\qquad\qquad
v_1(0)=\xi_1^{(\tau)},\ \ \ \ \ v_1(l^{(1)})=1,
\end{equation*}

\begin{equation*}
-\biggl(\frac{d}{dx}+{\rm i}\tau\biggr)^2v_2=0,\qquad\qquad
v_1(0)=1,\ \ \ \ \ v_2(l^{(2)})=\xi_2^{(\tau)},
\end{equation*}
{\it i.e.}
\begin{equation*}
v_1(x)=
\xi_1^{(\tau)}\Bigl\{1+(l^{(1)})^{-1}\bigl(\exp(i\tau)-1\bigr)x\Bigr\}\exp(-{\rm i}\tau x),\quad x\in(0, l^{(1)}),\qquad v_2(x)=\exp(-{\rm i}\tau x),
\quad x\in(0, l^{(2)}).\\
\end{equation*}

As in Section \ref{Sch_F_sub}, we infer that
\begin{equation*}
\widetilde{u}=\frac{z\beta}{\rho}\sum_{n=1}^2\chi^{(n)}(A_{\rm D}^{(n)}-zI)^{-1}v_n,
\end{equation*}
where $A_{\rm D}^{(n)},$ $n=1,2,$ are the Dirichlet operators in $L^2(0, l^{(n)}),$ $n=1,2$ associated with the differential expression
\[
-a_n^2\biggl(\frac{d}{dx}+{\rm i}\tau\biggr)^2,\quad n=1,2,
\]
and $\chi^{(n)},$ $n=1,2,$ are the characteristic functions of the edges $(0, l^{(n)}),$ $n=1,2.$
Therefore we can write the ``boundary''  part of the resolvent equation (\ref{new_1})--(\ref{new_3}) as
\begin{equation*}
K(\tau, z)\beta-z\beta=\delta,
\end{equation*}
where
\begin{align}
K(\tau, z)&:=\dfrac{1}{\rho^2}\sum\limits_{n=1}^2\left\{z\Gamma_\tau\left(\begin{matrix}
\chi^{(n)}(A_{\rm D}^{(j)}-zI)^{-1}v_n\\[0.3em] 0\end{matrix}\right)+
\Gamma_\tau\left(\begin{matrix}\chi^{(n)}v_n\\[0.2em] \rho\end{matrix}\right)\right\}\nonumber\\[0.4em]
&=\dfrac{1}{\rho^2}\sum\limits_{n=1}^2\left\{z\sum_{j=1}^\infty\dfrac{\langle v_n, \varphi_j^{(n)}\rangle}{\mu_j^{(n)}-z}\Gamma_\tau\left(\begin{matrix}
\chi^{(n)}\varphi_j^{(n)}\\[0.3em] 0\end{matrix}\right)+
\Gamma_\tau\left(\begin{matrix}\chi^{(n)}v_n\\[0.2em] \rho\end{matrix}\right)\right\}.\label{K_ex2}
\end{align}
Here ({\it cf.} (\ref{function_phi}))
\begin{equation*}
\varphi_j^{(n)}(x)=\sqrt{\frac{2}{l^{(n)}}}\exp(-{\rm i}\tau x)\sin\frac{\pi jx}{l^{(n)}},\qquad x\in(0,l^{(n)}),\qquad \mu_j^{(n)}=\biggl(\frac{\pi j}{l^{(n)}}\biggr),\qquad j=1,2,3,\dots,\qquad n=1,2.
\end{equation*}

\subsection{Expressions for the dispersion functions}

\begin{lemma} In each of the three examples introduced in Section \ref{section:examples}, the action of the Schur-Frobenius complement ${\mathfrak P}(\mathcal A_{\hom}^{(\tau)}-z)^{-1}{\mathfrak P}$
is represented as the operator of multiplication by $(K(\tau, z)-z)^{-1},$ where the dispersion function $K$ is given by the following formulae:
\begin{align}
&{\rm Example\ (0):}\qquad K(\tau, z)=\frac{2\sqrt{z}\bigl(\cos(l^{(2)}\sqrt{z})-\cos\tau\bigr)}{l^{(1)}\sin(l^{(2)}\sqrt{z})},\label{K_example0th}\\[0.6em]
&{\rm Example\ (1):}\qquad K(\tau, z)=\frac{1}{l^{(1)}+l^{(3)}}\Biggl\{2\sqrt{z}\biggl(
\cot(l^{(2)}\sqrt{z})
-\frac{\Re\theta(\tau)}{\sin(l^{(2)}\sqrt{z})}\biggr)+\biggl(\frac{\sigma\tau}{\varepsilon}\biggr)^2\Biggr\},\label{K_example1th}\\[0.7em]
&{\rm Example\ (2):}\qquad K(\tau, z)=\frac{2\sqrt{z}}{l^{(3)}}\Biggl\{a_1^2\frac{\cos(l^{(1)}\sqrt{z})-\cos\tau}{\sin(l^{(1)}\sqrt{z})}-a_2^2\tan\biggl(\frac{l^{(2)}\sqrt{z}}{2}\biggr)\Biggr\},\label{K_example2th}
 \end{align}
 where $\tau\in[-\pi,\pi),$ and
 \[
 \theta(\tau):=\biggl|\dfrac{a_1^2}{l^{(1)}}{\rm e}^{-i\tau}+\dfrac{a_3^2}{l^{(3)}}\biggr|^{-1}\left(\dfrac{a_1^2}{l^{(1)}}{\rm e}^{-i\tau}+\dfrac{a_3^2}{l^{(3)}}\right),\qquad \sigma^2:=\biggl(\dfrac{l^{(1)}}{a_1^2}+\dfrac{l^{(3)}}{a_3^2}\biggr)^{-1}.
 \]
\end{lemma}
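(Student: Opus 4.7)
The plan is to carry out, for each of the three examples, the direct BVP computation outlined in Section 8: solve the homogeneous equation on the soft component, use the non-trivial coupling conditions from $\dom\mathcal{A}_{\hom}^{(\tau)}$ to express all constants in terms of $\beta,$ and then read $K(\tau,z)$ off the scalar boundary equation $(1/\rho)\Gamma_\tau(u,\beta)-z\beta=\delta$ derived in Section \ref{Sch_F_sub}. (The series representation \eqref{K_general1}, \eqref{K_ex2} is a second, equivalent, route, useful mainly to exhibit that $K$ is singular precisely on the Dirichlet spectrum of the soft component.)

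For \emph{Example (0)}, the general solution of $-(\dtau)^2 u-zu=0$ on $(0,l^{(2)})$ is $u(x)=e^{-i\tau x}(A\cos(kx)+B\sin(kx)),$ $k=\sqrt{z}.$ The two conditions $u(0)=\bxit u(l^{(2)})=\beta/\sqrt{l^{(1)}}$ fix $A=\beta/\sqrt{l^{(1)}}$ and, using $l^{(1)}+l^{(2)}=1$ so that $\bxit e^{-i\tau l^{(2)}}=e^{-i\tau},$ determine $B=A(e^{i\tau}-\cos(kl^{(2)}))/\sin(kl^{(2)}).$ A short calculation then gives $\Gamma_\tau(u,\beta)=-\dtau u|_0+\bxit\dtau u|_{l^{(2)}}=2kA(\cos(kl^{(2)})-\cos\tau)/\sin(kl^{(2)})$ after the $e^{\pm i\tau}$ terms combine into $2\cos\tau$ via $\cos^2+\sin^2=1.$ Dividing by $\rho=\sqrt{l^{(1)}}$ recovers \eqref{K_example0th}.

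For \emph{Example (1)}, the structure is the same but with $u(0)=-(\bxit/|\xit|)u(l^{(2)})=\beta/\rho,$ $\rho=\sqrt{l^{(1)}+l^{(3)}},$ and an extra summand $(\sigma\tau/\varepsilon)^2\beta/\rho$ in $\Gamma_\tau.$ Factoring $\xit=-e^{-i\tau l^{(2)}}(a_1^2 l^{(1),-1}e^{i\tau}+a_3^2 l^{(3),-1})$ (using $l^{(1)}+l^{(2)}+l^{(3)}=1$) identifies $-\xit e^{i\tau l^{(2)}}/|\xit|=\overline{\theta(\tau)},$ so $B=A(\overline{\theta(\tau)}-\cos(kl^{(2)}))/\sin(kl^{(2)}).$ Computing $\overline{w_\tau}\dtau u|_{l^{(2)}}$ with $\overline{w_\tau}e^{-i\tau l^{(2)}}=\theta(\tau)$ and using $|\theta(\tau)|=1$ collapses the cross-terms to $kA(2\cos(kl^{(2)})-2\,\Re\theta(\tau))/\sin(kl^{(2)}).$ After dividing by $\rho^2=l^{(1)}+l^{(3)}$ and retaining the quadratic-in-$\tau$ term untouched, one obtains \eqref{K_example1th}.

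\emph{Example (2)} is handled by the same scheme but on the two-edge soft component: on each $e^{(j)}$ solve $-a_j^2(\dtau)^2 u_j-zu_j=0$ with $u_j(x)=e^{-i\tau x}(A_j\cos(k x/a_j)+B_j\sin(kx/a_j)),$ and enforce the four-fold identification $u_2|_0=\bxittwo u_2|_{l^{(2)}}=\bxitone u_1|_0=u_1|_{l^{(1)}}=\beta/\sqrt{l^{(3)}}$ to fix $A_j,B_j$ in terms of $\beta.$ The edge-$e^{(2)}$ contribution to $\Gamma_\tau$ collapses via $\bxittwo e^{-i\tau l^{(2)}}=e^{-i\tau l^{(2)}}\cdot e^{i\tau l^{(2)}}=1,$ yielding the $-a_2^2\tan(l^{(2)}\sqrt{z}/2)$ term through the half-angle identity $(1-\cos)/\sin=\tan(\cdot/2).$ The edge-$e^{(1)}$ contribution is treated exactly as in Example (0) with $l^{(1)}\leftrightarrow l^{(2)},$ $a_1\leftrightarrow 1,$ and with $\bxitone e^{-i\tau l^{(1)}}=e^{-i\tau}$ (using $l^{(1)}+l^{(2)}+l^{(3)}=1$), producing the cosine-difference term $a_1^2(\cos(l^{(1)}\sqrt{z})-\cos\tau)/\sin(l^{(1)}\sqrt{z}).$ Dividing by $\rho^2=l^{(3)}$ gives \eqref{K_example2th}.

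The main obstacle is the bookkeeping in Example (1): one must correctly identify $\theta(\tau)$ and $w_\tau=-\xit/|\xit|$ from the compact-but-oblique definition of $\xit,$ and verify that the off-diagonal $\theta$ and $\overline\theta$ contributions in $\Gamma_\tau$ conspire (thanks to $|\theta|=1$) to leave precisely $2\,\Re\theta.$ The other two examples are routine once Example (1) is in place, as they are the $\tau$-independent (Example (0)) and two-edge symmetric (Example (2)) analogues.
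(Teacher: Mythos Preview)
Your approach is correct and is genuinely different from the paper's. The paper proceeds via the decomposition $u=(\beta/\rho)v+\widetilde{u}$ set up in Section~9, expands $\widetilde{u}=(z\beta/\rho)(A_{\rm D}-zI)^{-1}v$ in the Dirichlet eigenbasis, and then evaluates the series representation \eqref{K_general1} (respectively \eqref{K_ex2}) using the classical identities $\sum_j((\pi j)^2-x^2)^{-1}=\tfrac12(x^{-2}-\cot x/x)$ and $\sum_j(-1)^j((\pi j)^2-x^2)^{-1}=\tfrac12(x^{-2}-1/(x\sin x))$. You bypass both the decomposition and the series: you write the general solution of the homogeneous equation $-(\dtau)^2u=zu$ on each soft edge directly as $e^{-i\tau x}(A\cos(kx)+B\sin(kx))$, fix $A,B$ from the domain constraints of $\mathcal A_{\hom}^{(\tau)}$, and compute $\Gamma_\tau$ by hand. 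The algebra in Examples~(0) and~(1) reduces cleanly via $\cos^2+\sin^2=1$ (and $|\theta|^2=1$ in Example~(1)), and Example~(2) falls out by the same calculation on two edges with the half-angle identity $(1-\cos)/\sin=\tan(\cdot/2)$.

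What each approach buys: yours is shorter and more elementary---it is the natural ODE route once one knows the answer is a closed-form trigonometric expression. The paper's series route is less direct computationally but makes the analytic structure of $K(\tau,z)$ transparent (simple poles precisely at the Dirichlet eigenvalues of the soft component), links the dispersion function to the Zhikov-type spectral expansions referenced in the text, and generalises more readily to soft components where the Dirichlet resolvent is not available in closed form.
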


The derivation of the above expressions is given in Appendix C.

\section{Effective macroscopic problems on the real line}


Here we shall interpret the Schur-Frobenius complements constructed in the previous section as a result of applying the Gelfand transform (see Section \ref{Gelfand_section}) to a one-dimensional homogeneous medium.
To this end, we unitarily immerse the $L^2$ space of functions of $t$ into the $L^2$ space of functions of $t$ and $x,$ corresponding to the stiff component of the original medium, by the formula
\[
\beta(t)\mapsto\beta(t)\frac{1}{\sqrt{\varepsilon L}}{\mathbbm 1}(x),
\]
where $L$ is the length of the stiff component, {\it i.e.} $L=l^{(1)}$ in Examples (0), $L=l^{(1)}+l^{(3)}$ in Example (1), $L=l^{(3)}$ in Example (2), and write the effective problem (\ref{K_eq})
in the form
\begin{equation}
K(\varepsilon t, z)\beta(t)\frac{1}{\sqrt{\varepsilon L}}{\mathbbm 1}(x)-z\beta(t)\frac{1}{\sqrt{\varepsilon L}}{\mathbbm 1}(x)=\delta(t)\frac{1}{\sqrt{\varepsilon L}}{\mathbbm 1}(x),\qquad t\in[-\pi/\varepsilon,\pi/\varepsilon), \quad  x\in(0,\varepsilon L),
\label{proj_eq}
\end{equation}
The solution operator for (\ref{proj_eq}), namely
\[
\delta(t)\frac{1}{\sqrt{\varepsilon L}}{\mathbbm 1}(x)\mapsto \beta(t)\frac{1}{\sqrt{\varepsilon L}}{\mathbbm 1}(x)\ \ {\rm such\ that\ (\ref{proj_eq})\  holds,}
\]
is the composition of a projection operator in $L^2\bigl((-\pi/\varepsilon,\pi/\varepsilon)\times(0,\varepsilon L)\bigr)$ onto constants in $x$ and multiplication by the function $\bigl(K(\varepsilon t, z)-z\bigr)^{-1},$ as follows:
\begin{equation}
\bigl(K(\varepsilon t, z)-z\bigr)^{-1}\biggl\langle\cdot, \frac{1}{\sqrt{\varepsilon L}}{\mathbbm 1}(x)\biggr\rangle \frac{1}{\sqrt{\varepsilon L}}{\mathbbm 1}(x),
\label{sol_proj}
\end{equation}
for all $z$ such that $K(\varepsilon t, z)-z$ is invertible, in particular, for $z\in K_\sigma.$
The sought representation on ${\mathbb R}$ is the Schur-Frobenius complement obtained by sandwiching the operator (\ref{sol_proj}) with the Gelfand transform
\[
{\mathcal G}F(x,t)=\sqrt{\frac{\varepsilon }{2\pi}}\sum_{n\in{\mathbb Z}}F(x+n\varepsilon)\exp\bigl(-{\rm i}(x+n\varepsilon)t\bigr),\qquad x\in(0,\varepsilon),\quad t\in[-\pi/\varepsilon, \pi/\varepsilon),\qquad F\in L^2({\mathbb R}),
\]
and its inverse
\[
{\mathcal G}^*u(x)=\sqrt{\frac{\varepsilon}{2\pi}}\int_{-\pi/\varepsilon}^{\pi/\varepsilon}u(x,t)\exp({\rm i}xt)dt,\quad x\in{\mathbb R}, \qquad u\in L^2\bigl((-\pi/\varepsilon,\pi/\varepsilon)\times(0,\varepsilon)\bigr),
\]
so that the overall operator is given by
\[
{\mathcal G}^*\Biggl\{\bigl(K(\varepsilon t, z)-z\bigr)^{-1}\biggl\langle{\mathcal G}\,\cdot, \frac{1}{\sqrt{\varepsilon L}}{\mathbbm 1}(x)\biggr\rangle \frac{1}{\sqrt{\varepsilon L}}{\mathbbm 1}(x)\Biggr\}.
\]
In constructing the above operator we assume that the operator given by (\ref{sol_proj}) has been extended by zero to the soft component of the medium.

This results in the mapping
\begin{align}
F\mapsto\Psi_K^\varepsilon F:=&\sqrt{\frac{\varepsilon}{2\pi}}\int_{-\pi/\varepsilon}^{\pi/\varepsilon}\bigl(K(\varepsilon t, z)-z\bigr)^{-1}\biggl\langle{\mathcal G}F, \frac{1}{\sqrt{\varepsilon L}}{\mathbbm 1}\biggr\rangle(t)\frac{1}{\sqrt{\varepsilon L}}{\mathbbm 1}(x)\exp({\rm i}tx)dt\nonumber\\[0.6em]
=&\frac{1}{L\sqrt{2\pi}}\int_{-\pi/\varepsilon}^{\pi/\varepsilon}\bigl(K(\varepsilon t, z)-z\bigr)^{-1}\widehat{F}(t)\exp({\rm i}tx)dt\label{Uform},
\end{align}
whose inverse
yields the required effective problem on ${\mathbb R}.$ Here
\[
\widehat{F}(t)=\frac{1}{\sqrt{2\pi}}\int_{-\infty}^\infty F(x)\exp(-{\rm i}x t)dx,\qquad t\in{\mathbb R},
\]
is the Fourier transform of the function $F.$

By applying Theorems \ref{thm:ex0}, \ref{thm:ex1}, \ref{thm:ex2}, we arrive at the following statement.
\begin{theorem}
The direct integral of Schur-Frobenius complements
\begin{equation}
\oplus\int_{\pi/\varepsilon}^{\pi/\varepsilon}P_{\rm stiff}(A_\varepsilon^{(t)}-z)^{-1}P_{\rm stiff}dt
\label{dir_int}
\end{equation}
is $O(\varepsilon^r)$-close, in the uniform operator-norm topology, to an operator unitary equivalent to the pseudo-differential operator defined by (\ref{Uform}).
Here $r=1$ in Example (1) and $r=2$ in Examples (0) and (2).

The direct integral (\ref{dir_int}) is the composition of the original resolvent family $(A_\varepsilon-z)^{-1}$ applied  to functions supported by the stiff component of ${\mathbb G}_{\rm per}$ and the orthogonal projection onto the same stiff component

\end{theorem}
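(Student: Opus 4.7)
The plan is to combine the fibrewise norm-resolvent asymptotics of Section \ref{examples_section} with the explicit computation of the Schur-Frobenius complement from the previous section, and then to transfer the result to $L^2(\mathbb R)$ via the inverse Gelfand transform.

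First, I would invoke Theorems \ref{thm:ex0}--\ref{thm:ex2} at each fibre $\tau = \e t$, writing
\[
(A_\e^{(\tau)} - z)^{-1} = \Psi^*\bigl(\mathcal A_{\hom}^{(\tau)} - z\bigr)^{-1}\Psi + O(\e^r)
\]
uniformly in $\tau \in [-\pi, \pi)$ and $z \in K_\sigma$, with $r = 2$ in Examples (0), (2) and $r = 1$ in Example (1). Sandwiching with $P_\stiff$ and using that the partial isometry $\Psi$, constructed via the Strauss dilation of Section 7, satisfies $\Psi P_\stiff = \mathfrak P\Psi$ (where $\mathfrak P$ is the orthogonal projection of $H_\hom$ onto its $\mathbb C^1$-factor complementary to $H_\soft$), I obtain
\[
P_\stiff(A_\e^{(\tau)}-z)^{-1}P_\stiff = \Psi^*\mathfrak P\bigl(\mathcal A_\hom^{(\tau)}-z\bigr)^{-1}\mathfrak P\Psi + O(\e^r).
\]
By the lemma of the previous section, $\mathfrak P(\mathcal A_\hom^{(\tau)}-z)^{-1}\mathfrak P$ is the operator of multiplication on $\mathbb C^1$ by $(K(\tau,z)-z)^{-1}$, with $K$ given by the appropriate one of \eqref{K_example0th}, \eqref{K_example1th}, \eqref{K_example2th}. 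Identifying $\Psi^*\mathfrak P$ with the normalised immersion $\beta \mapsto (\e L)^{-1/2}\beta\mathbbm{1}$ of $\mathbb C^1$ into $L^2(0,\e L)\hookrightarrow H_\stiff$, where $L$ is the total length of the stiff part of the rescaled cell $\mathbb G$, makes the fibrewise leading-order term coincide with the rank-one operator \eqref{sol_proj}.

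Second, I would take the direct integral over $t \in [-\pi/\e, \pi/\e)$. Since the operator norm of a direct integral coincides with the essential supremum of the fibre norms, the uniform $O(\e^r)$ bound propagates at once, giving
\[
\oplus\int_{-\pi/\e}^{\pi/\e}P_\stiff(A_\e^{(t)}-z)^{-1}P_\stiff\,dt = \oplus\int_{-\pi/\e}^{\pi/\e}\bigl(K(\e t,z)-z\bigr)^{-1}\Bigl\langle\cdot,\tfrac{1}{\sqrt{\e L}}\mathbbm{1}\Bigr\rangle\tfrac{1}{\sqrt{\e L}}\mathbbm{1}\,dt + O(\e^r).
\]
The inverse Gelfand transform $\mathcal G^*$ intertwines the direct integral of rank-one operators on the right-hand side with an operator on $L^2(\mathbb R)$ supported on the stiff component; a direct computation, identical to the one leading from \eqref{sol_proj} to \eqref{Uform}, identifies this image with the pseudo-differential operator $\Psi_K^\e$ of \eqref{Uform}. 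Hence the unitary equivalence asserted by the theorem is realised by $\mathcal G^*$ composed with the fibrewise immersion $\mathbb C^1 \hookrightarrow L^2(0,\e L)$. The final clarification in the statement of the theorem, that \eqref{dir_int} is the Gelfand image of $P_\stiff(A_\e - z)^{-1}P_\stiff$, follows at once from \eqref{vonNeumann} and the fact that $P_\stiff$ commutes with the Gelfand transform, since the stiff component of $\mathbb G_{\text{per}}$ is itself $\vec\ell$-periodic.

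The main non-trivial step is the identification of the partial isometry $\Psi$ appearing in Theorems \ref{thm:ex0}--\ref{thm:ex2} with the normalised immersion used in Section 9. Each of the fibrewise Strauss dilation and the macroscopic immersion is only determined up to a unitary transformation of $\mathbb C^1$; however, any such unitary drops out of the rank-one operator $(K-z)^{-1}\langle\cdot,\mathbbm{1}\rangle\mathbbm{1}$, which depends only on the image of $\mathbbm{1}$ in $H_\stiff$. Neumark--Strauss uniqueness of the minimal out-of-space dilation, together with the choice of triple in Section \ref{triple_section} and the explicit calculations of Appendices A, B, secures the required consistency; this is essentially a bookkeeping exercise, but it is the step on which the transition from the abstract ``auxiliary $\mathbb C^1$'' of Section 7 to the concrete constant function on the stiff cell in Section 9 hinges.
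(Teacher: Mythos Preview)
Your outline is essentially the argument the paper intends: the paper's own ``proof'' is literally the one sentence ``By applying Theorems \ref{thm:ex0}, \ref{thm:ex1}, \ref{thm:ex2}, we arrive at the following statement,'' and everything you write unpacks that sentence correctly---sandwich the fibrewise asymptotics, identify the Schur--Frobenius complement with multiplication by $(K(\tau,z)-z)^{-1}$ via the lemma of Section~9, take the direct integral, and conjugate by the Gelfand transform.

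One imprecision is worth flagging in your final paragraph. You say that a unitary on $\mathbb C^1$ ``drops out'' of the rank-one operator because the latter depends only on the image of $\mathbbm 1$ in $H_{\rm stiff}$. A phase on $\mathbb C^1$ indeed drops out, but that is not the discrepancy at hand: the Strauss partial isometry $\Psi^*\mathfrak P$ embeds $\mathbb C^1$ onto the span of $v^{(\tau)}:=\gamma^{\rm stiff}(0)\psi^{(\tau)}$, which for $\tau\neq 0$ is \emph{not} a constant function on the stiff edges, whereas the macroscopic immersion of Section~10 embeds onto $\mathrm{span}\{\mathbbm 1\}$. These are genuinely different one-dimensional subspaces of $H_{\rm stiff}$, so the two rank-one operators are not equal. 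What saves you is precisely the phrase ``unitary equivalent'' in the theorem: a $\tau$-measurable unitary $U^{(\tau)}$ on $H_{\rm stiff}$ sending $v^{(\tau)}/\|v^{(\tau)}\|$ to $\mathbbm 1/\sqrt{L}$ conjugates one rank-one operator into the other fibrewise, and its direct integral is the unitary realising the equivalence. Neumark--Strauss uniqueness is a correct way to package this, but the mechanism is conjugation by a fibred unitary on $H_{\rm stiff}$, not cancellation of a phase.
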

On the basis of the above theorem, we will now explicitly characterise the effective time-dispersive medium in each of the examples.

\subsection{Example (0)}
Notice that, by (\ref{Uform}), for $U:=\Psi_{K}^\varepsilon F$ one has
\begin{equation}
\frac{1}{2}\Bigl(U(x+\varepsilon)+U(x-\varepsilon)\Bigr)=\frac{1}{l^{(1)}\sqrt{2\pi}}\int_{-\pi/\varepsilon}^{\pi/\varepsilon}\frac{\cos(\varepsilon t)}{K(\varepsilon t, z)-z}\widehat{F}(t)\exp({\rm i}t x)dt,
\label{Ex0_U}
\end{equation}
and since in Example (0) we have, see (\ref{K_example0th}),
\[
K(\tau, z)=\frac{2\sqrt{z}}{l^{(1)}}\cot(l^{(2)}\sqrt{z})-\frac{2\sqrt{z}}{l^{(1)}\sin(l^{(2)}\sqrt{z})}\cos\tau,
\]
we obtain
\begin{multline*}
\frac{2\sqrt{z}}{l^{(1)}}\cot(l^{(2)}\sqrt{z})U(x)-\frac{1}{2}\Bigl(U(x+\varepsilon)+U(x-\varepsilon)\Bigr)\frac{2\sqrt{z}}{l^{(1)}\sin(l^{(2)}\sqrt{z})}-zU(x)\\[0.5em]
=\frac{1}{l^{(1)}\sqrt{2\pi}}\int_{-\pi/\varepsilon}^{\pi/\varepsilon}\widehat{F}(t)\exp({\rm i}t x)dt
\sim\frac{1}{l^{(1)}}F(x),\quad \varepsilon\to0.
\end{multline*}
It follows that the asymptotic form of the equation on the function $U$ is
\begin{equation*}
-\frac{\sqrt{z}}{\sin(l^{(2)}\sqrt{z})}\Delta_\varepsilon U-\biggl\{l^{(1)}z+2\sqrt{z}\tan\biggl(\frac{l^{(2)}\sqrt{z}}{2}\biggr)\biggr\}U=F,
\end{equation*}
where
\begin{equation}
\Delta_\varepsilon U:=U(\cdot+\varepsilon)+U(\cdot-\varepsilon)-2U,\qquad \varepsilon>0.
\label{LaplaceU}
\end{equation}
is the difference Laplace operator. Clearly, by a unitary rescaling of the independent variable we obtain an $\varepsilon$-independent limit problem.



\subsection{Example (1)}
\begin{lemma}
\label{10.2}
One has the estimate
\[
\Vert\Psi_K^\varepsilon-\Psi_K^0\Vert_{L^2({\mathbb R})\to L^2({\mathbb R})}=O(\varepsilon^2),\quad\varepsilon\to0,
\]
where ({\it cf.} \ref{Uform})
\[
\Psi_K^0:=\frac{1}{(l^{(1)}+l^{(3)})\sqrt{2\pi}}\int_{-\infty}^{\infty}\bigl(K(\varepsilon t, z)-z\bigr)^{-1}\widehat{F}(t)\exp({\rm i}tx)dt,
\]
with $K(\tau, z)$ defined by the formula (\ref{K_example1th}) for all values of $\tau.$
\end{lemma}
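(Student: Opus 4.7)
The plan is to recognise $\Psi_K^\varepsilon-\Psi_K^0$ as a Fourier multiplier error supported outside the ``Brillouin zone'' $\{|t|<\pi/\varepsilon\}$ and to estimate it directly via Plancherel's theorem.

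First I would note that by the defining formulas,
\[
(\Psi_K^\varepsilon-\Psi_K^0)F(x) = -\frac{1}{(l^{(1)}+l^{(3)})\sqrt{2\pi}}\int_{|t|\geq\pi/\varepsilon}\bigl(K(\varepsilon t, z)-z\bigr)^{-1}\widehat{F}(t)\exp({\rm i}tx)\,dt,
\]
so this operator is the inverse Fourier transform of multiplication of $\widehat F$ by the symbol $-(l^{(1)}+l^{(3)})^{-1}(K(\varepsilon t, z)-z)^{-1}\mathbbm{1}_{\{|t|\geq\pi/\varepsilon\}}(t)$. By Plancherel,
\[
\|\Psi_K^\varepsilon-\Psi_K^0\|_{L^2({\mathbb R})\to L^2({\mathbb R})}\leq\frac{1}{l^{(1)}+l^{(3)}}\sup_{|t|\geq\pi/\varepsilon}\bigl|K(\varepsilon t, z)-z\bigr|^{-1},
\]
reducing everything to a pointwise estimate on this symbol for large $|t|$.

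The second step is to exploit the explicit form (\ref{K_example1th}) with $\tau=\varepsilon t$, writing
\[
(l^{(1)}+l^{(3)})\bigl(K(\varepsilon t, z)-z\bigr) = \sigma^2 t^2 + g_\varepsilon(t, z),
\]
where
\[
g_\varepsilon(t, z):=2\sqrt{z}\biggl(\cot(l^{(2)}\sqrt{z})-\frac{\Re\theta(\varepsilon t)}{\sin(l^{(2)}\sqrt{z})}\biggr) - (l^{(1)}+l^{(3)})z.
\]
The unimodularity of $\theta$ on ${\mathbb R}$ yields $|\Re\theta(\varepsilon t)|\leq 1$, and since $K_\sigma$ is compact with $\mathrm{dist}(K_\sigma,{\mathbb R})\geq\sigma>0$ the branch $\sqrt{z}$ (with $\Im\sqrt{z}>0$) stays in a compact subset of the open upper half-plane, keeping both $\cot(l^{(2)}\sqrt{z})$ and $\csc(l^{(2)}\sqrt{z})$ bounded. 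Thus $|g_\varepsilon(t, z)|\leq C$ uniformly in $t\in{\mathbb R}$, $\varepsilon>0$ and $z\in K_\sigma$.

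Finally, choosing $\varepsilon$ so small that $\sigma^2(\pi/\varepsilon)^2\geq 2C$, for $|t|\geq\pi/\varepsilon$ the quadratic term dominates and
\[
|K(\varepsilon t, z)-z|\geq\frac{\sigma^2 t^2}{2(l^{(1)}+l^{(3)})}\geq\frac{\sigma^2\pi^2}{2(l^{(1)}+l^{(3)})\varepsilon^2};
\]
substituting back into the Plancherel bound yields the claimed $O(\varepsilon^2)$ estimate uniformly in $z\in K_\sigma$. The only non-routine ingredient is the uniform boundedness of $g_\varepsilon$ on $K_\sigma$, which rests on the standard observation that $|\sin(l^{(2)}\sqrt{z})|$ is bounded away from zero whenever $\Im\sqrt{z}$ is bounded below; no further obstacles are anticipated, since the argument ultimately reduces to a quadratic growth estimate on a scalar-valued Fourier symbol translated into a squared spatial bandwidth $\varepsilon^2$.
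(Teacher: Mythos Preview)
Your argument is correct and is precisely the ``standard'' Fourier-multiplier estimate that the paper alludes to (the paper gives no details beyond a reference to H\"ormander). The key points---that the difference is a multiplier supported on $\{|t|\ge\pi/\varepsilon\}$, that the symbol $(K(\varepsilon t,z)-z)^{-1}$ decays like $t^{-2}$ there because the $(\sigma t)^2$ term dominates the uniformly bounded remainder $g_\varepsilon$, and that Plancherel converts the sup-bound into an operator-norm bound---are all in order.
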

\begin{proof}
The proof is standard, see {\it e.g.} \cite{Hoermander}.
\end{proof}

 It follows from $\Re\theta(\tau)=\Re\theta(0)+O(\tau^2)=1+O(\tau^2)$ that
\[
K(\varepsilon t, z)=\widetilde{K}(t, z)+O(\varepsilon^2 t^2),\qquad  \widetilde{K}(t, z):=\frac{1}{l^{(1)}+l^{(3)}}\biggl\{
(\sigma t)^2-2\sqrt{z}\tan\biggl(\frac{l^{(2)}\sqrt{z}}{2}\biggr)
\biggr\},\qquad t\in[-\pi/\varepsilon, \pi/\varepsilon),
\]
from which we infer
\[
\bigl(K(\varepsilon t, z)-z\bigr)^{-1}=\bigl(\widetilde{K}(t, z)-z\bigr)^{-1}+O(\varepsilon^{2}),
\]
and hence we obtain the following statement.
\begin{lemma} The following estimate holds:
\[
\Vert\Psi_K^0-\Psi_{\widetilde{K}}^0\Vert_{L^2({\mathbb R})\to L^2({\mathbb R})}=O(\varepsilon^{2}),\quad \varepsilon\to0.
\]
\end{lemma}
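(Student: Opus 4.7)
The plan is to view both $\Psi_K^0$ and $\Psi_{\widetilde K}^0$ as Fourier multiplication operators on $L^2({\mathbb R})$, so that Plancherel's theorem reduces the estimate to a uniform $L^\infty$-bound on the difference of their symbols. Setting
\[
m_\varepsilon(t):=\frac{1}{l^{(1)}+l^{(3)}}\Bigl[\bigl(K(\varepsilon t,z)-z\bigr)^{-1}-\bigl(\widetilde K(t,z)-z\bigr)^{-1}\Bigr],
\]
one has $(\Psi_K^0-\Psi_{\widetilde K}^0)F={\mathcal F}^{-1}(m_\varepsilon\widehat F)$, hence $\Vert\Psi_K^0-\Psi_{\widetilde K}^0\Vert_{L^2\to L^2}=\Vert m_\varepsilon\Vert_{L^\infty({\mathbb R})}$. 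Rewriting $m_\varepsilon$ as a single fraction and invoking (\ref{K_example1th}) together with the identity $\cot x-\csc x=-\tan(x/2)$ and $\Re\theta(0)=1$, the numerator reduces to
\[
\widetilde K(t,z)-K(\varepsilon t,z)=\frac{2\sqrt z}{(l^{(1)}+l^{(3)})\sin(l^{(2)}\sqrt z)}\bigl(\Re\theta(\varepsilon t)-1\bigr),
\]
and the task becomes controlling this against the product of the two denominator factors uniformly in $t\in{\mathbb R}$ and $z\in K_\sigma$.

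A Taylor analysis of $\theta$ at the origin comes next. Writing $\theta(\tau)=A(\tau)/|A(\tau)|$ with $A(\tau):=(a_1^2/l^{(1)})e^{-{\rm i}\tau}+a_3^2/l^{(3)}$, I would check directly that $A'(0)\in{\rm i}{\mathbb R}$ while $|A(\tau)|^{-1}$ is smooth and real-valued, forcing $\theta'(0)$ to be purely imaginary; hence $\Re\theta(\tau)=1+O(\tau^2)$ as $\tau\to 0$. Combined with the trivial bound $|\Re\theta(\tau)-1|\leq 2$ coming from $|\theta|\equiv 1$, this yields $|\Re\theta(\tau)-1|\leq C\min(\tau^2,1)$ for all $\tau\in{\mathbb R}$, and therefore $|\widetilde K(t,z)-K(\varepsilon t,z)|\leq C\min(\varepsilon^2 t^2,1)$ uniformly in $z\in K_\sigma$.

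The denominator lower bound $|K(\varepsilon t,z)-z|,\,|\widetilde K(t,z)-z|\geq c(1+t^2)$ is then handled in two regimes. Both symbols decompose as $(\sigma t)^2/(l^{(1)}+l^{(3)})$ plus a correction bounded in $(\varepsilon,t)$ and depending continuously on $z\in K_\sigma$. For $|t|$ large, the real leading term dominates both this correction and $\Re z$, giving the $ct^2$ lower bound. For bounded $|t|$, a uniform positive lower bound follows by continuity and compactness from the fact that $K(\varepsilon t,z)-z$ cannot vanish for $z\in K_\sigma$, since the Schur-Frobenius complement identified in Section \ref{examples_section} is a bounded generalised resolvent. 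Splitting into the regions $|\varepsilon t|\leq 1$, where $|m_\varepsilon(t)|\leq C\varepsilon^2 t^2/(1+t^2)^2\leq C'\varepsilon^2$, and $|\varepsilon t|\geq 1$, where $|m_\varepsilon(t)|\leq C/t^4\leq C\varepsilon^4$, and taking the supremum in $t$ then produces $\Vert m_\varepsilon\Vert_\infty=O(\varepsilon^2)$ as required.

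The main obstacle is upgrading the pointwise-in-$z$ invertibility of $K(\varepsilon t,z)-z$ to a quantitative lower bound of order $1+t^2$ that is uniform in $z\in K_\sigma$ and $\varepsilon\in(0,\varepsilon_0]$; once this is in place, the rest is a routine Fourier multiplier argument in the spirit of Lemma \ref{10.2}.
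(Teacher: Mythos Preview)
Your argument is correct and follows the same route as the paper: both recognise the operators as Fourier multipliers, reduce to the pointwise bound $(K(\varepsilon t,z)-z)^{-1}-(\widetilde K(t,z)-z)^{-1}=O(\varepsilon^2)$, and derive the latter from $K(\varepsilon t,z)-\widetilde K(t,z)=O(\varepsilon^2 t^2)$ via $\Re\theta(\tau)=1+O(\tau^2)$. The paper leaves the compensating denominator growth implicit, whereas you spell out the $c(1+t^2)$ lower bound and the small/large $|t|$ split; this is a welcome elaboration rather than a different method.
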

Therefore, for $U:=\Psi_{\widetilde{K}}^0F,$
we obtain
\begin{equation}
-\sigma^2\,U''(x)
-\biggl\{\bigl(l^{(1)}+l^{(3)}\bigr)z+2\sqrt{z}\tan\biggl(\frac{l^{(2)}\sqrt{z}}{2}\biggr)\biggr\}U(x)=F(x),\qquad x\in{\mathbb R}.
\label{lim1}
\end{equation}


We summarise the results of this section in the following theorem.
\begin{theorem}
\label{10.4}
In the case of Example (1) the effective time-dispersive formulation on the real line is provided by
the formula (\ref{lim1}) with an error bound of order $O(\varepsilon^{2})$.
\end{theorem}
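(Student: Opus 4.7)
The strategy is to combine the two preceding lemmata to express $\Psi_K^\varepsilon$ as $\Psi_{\widetilde K}^0 + O(\varepsilon^2)$ in the uniform operator-norm topology on $L^2(\mathbb R)$, and then to identify $\Psi_{\widetilde K}^0$ as the solution operator of the ODE \eqref{lim1} by a direct computation in Fourier space.

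Chaining Lemma \ref{10.2} with the subsequent (unnumbered) lemma yields
\[
\bigl\|\Psi_K^\varepsilon - \Psi_{\widetilde K}^0\bigr\|_{L^2(\mathbb R)\to L^2(\mathbb R)} = O(\varepsilon^2),
\]
uniformly for $z \in K_\sigma$. The first of these lemmata extends the range of integration from $(-\pi/\varepsilon, \pi/\varepsilon)$ to all of $\mathbb R$, while the second replaces the exact dispersion function $K(\varepsilon t, z)$ by its small-frequency approximant $\widetilde K(t, z)$.

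Next, I will verify that $U := \Psi_{\widetilde K}^0 F$ solves \eqref{lim1}. Differentiating under the Fourier integral defining $U$, the application of $-\sigma^2\partial_x^2$ produces the factor $\sigma^2 t^2$ inside the integrand; subtracting $\{(l^{(1)}+l^{(3)})z + 2\sqrt z \tan(l^{(2)}\sqrt z/2)\}U$ adds the affine-in-$z$ correction. By the very definition of $\widetilde K$, these pieces combine to form $(l^{(1)}+l^{(3)})(\widetilde K(t, z) - z)$ in the numerator, which cancels the denominator $(\widetilde K(t,z) - z)^{-1}$ present in $U$; the Fourier inversion formula then returns $F(x)$, as required.

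The only nontrivial point is to ensure uniformity of the $O(\varepsilon^2)$ estimate in $z \in K_\sigma$. This rests on a uniform positive lower bound on $|\widetilde K(t,z) - z|$ for $(t, z) \in \mathbb R \times K_\sigma$, which follows from the fact that $z \mapsto 2\sqrt{z}\tan(l^{(2)}\sqrt z / 2)$ is an $R$-function of $z$ kept away from its real spectrum by the distance $\sigma$, together with the quadratic growth of $\widetilde K$ in $t$ that controls the tail beyond $|t| > \pi/\varepsilon$ via Plancherel. No genuine obstacle beyond this routine bookkeeping remains; the theorem follows by combining the operator-norm estimate above with the Fourier identification of $\Psi_{\widetilde K}^0$ as the inverse of the differential operator on the left-hand side of \eqref{lim1}.
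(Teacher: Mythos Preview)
Your proposal is correct and follows essentially the same route as the paper: chain Lemma~\ref{10.2} with the subsequent unnumbered lemma to obtain $\|\Psi_K^\varepsilon - \Psi_{\widetilde K}^0\| = O(\varepsilon^2)$, then identify $\Psi_{\widetilde K}^0$ with the solution operator of~\eqref{lim1} via Fourier inversion. The paper in fact states the theorem as a summary of precisely these preceding results; your additional remarks on uniformity in $z\in K_\sigma$ and the Fourier verification of~\eqref{lim1} simply spell out details the paper leaves implicit.
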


\begin{remark}
1. Note that in  Examples (0) and (2) the effective time-dispersive formulation is given by a difference equation, whereas in Example (1) --- by a differential one. The reason for this is the global connectedness of the stiff component ({\it cf.} \cite{Zhikov2000}) in Example (1), which leads, see (\ref{K_example2th}) to a nonuniform in $\varepsilon$ dependence of the kernel $K(\tau, z)$ on $\tau.$

2. We point out that even in the case of globally connected stiff component one could end up in a situation where Theorem \ref{10.4} yields a rate of convergence lower than $O(\e^2)$, see \cite{Physics} for further details. In this case, a ``corrected'' result can still be obtained, with the rate of convergence $O(\e^2)$, by replacing \eqref{lim1} by a differential equation with non-local perturbation.
\end{remark}

\subsection{Example (2)} By analogy with Example (0), we use the formula (\ref{Ex0_U}) and note that, in view of (\ref{K_example2th}), we have
\[
K(\tau, z)=\frac{2\sqrt{z}}{l^{(3)}}\Biggl\{a_1^2\tan(l^{(1)}\sqrt{z})-a_2^2\tan\biggl(\frac{l^{(2)}\sqrt{z}}{2}\biggr)\Biggr\}-\frac{2a_1^2\sqrt{z}}{l^{(3)}\sin(l^{(1)}\sqrt{z})}\cos\tau.
\]
It follows that the time-dispersive effective formulation on $U:=\Psi_{K}^\varepsilon F$ has the form
\[
-\frac{a_1^2\sqrt{z}}{\sin(l^{(1)}\sqrt{z})}\Delta_\varepsilon U-\Biggl\{l^{(3)}z+2\sqrt{z}\Biggl(a_1^2\tan\biggl(\frac{l^{(1)}\sqrt{z}}{2}\biggr)+a_2^2\tan\biggl(\frac{l^{(2)}\sqrt{z}}{2}\biggr)\Biggr)\Biggr\}U=F,
\]
where the difference Laplacian $\Delta_\varepsilon$ is defined by (\ref{LaplaceU}).

\begin{remark}
1. A version of Theorem \ref{10.4} seems to be impossible to obtain in Examples (0) and (2), due to the fact that  there is no suitable counterpart of Lemma \ref{10.2} available. In these cases the convergence to the described effective medium holds, albeit without explicit control of the order of the remainder term.

2. The effective formulation (\ref{lim1}) is precisely the one yielded by the approach of \cite{Zhikov2000}. We note that in the cited paper the stated result involves only two-scale convergence with no estimate on the error term. In contrast, our approach provides norm-resolvent convergence, with an order-explicit error estimate. 
\end{remark}

\section*{Appendix A. The proof of Lemma \ref{lemma:gen_res_0}}

We start by expanding the matrix $B^{(\tau)}(z)$ into power series with respect to small parameter $\e$. By Lemma \ref{lemma:M_0}, this matrix admits the form
$$
B^{(\tau)}(z)\equiv -M^{(\tau)}_\stiff(z)= \dfrac 1{\e}
\begin{pmatrix}
a_1 k \cot \dfrac{k l^{(1)} \e}{a_1}& -\dfrac{a_1 k}{\sin \dfrac{k \e l^{(1)}}{a_1} }e^{-i l^{(1)} \tau}\\[0.8em]
-\dfrac{a_1 k}{\sin \dfrac{k \e l^{(1)}}{a_1}} e^{i l^{(1)} \tau}& a_1 k \cot \dfrac{k l^{(1)} \e}{a_1}
\end{pmatrix},
$$
where $k=\sqrt{z}$ with the usual choice of branch $\Im k\geq 0$. Then,
$$
B^{(\tau)}(z)=B^{(\tau)}(0)+O(1)=\dfrac{a_1^2}{l^{(1)}\e^2} \begin{pmatrix} 1&-\bxit \\[0.7em]
                                                     -\xit & 1
                                                     \end{pmatrix}+
O(1)=:B_0+O(1)
$$
We remark that, since $B^{(\tau)}$ is proportional to the $M$-matrix of the operator $A_{\max}^\stiff$, it is meromorphic in $\mathbb C$ as a function of $z$, all its poles located at a distance of order $1/\e^2$ from the origin. It is therefore entire in $K_\sigma$. Moreover, it is Hermitian for real values of $z$ away from its poles.


The matrix $\e^2 B_0$ is a Hermitian matrix possessing two distinct eigenvalues, one of which is equal to zero at all values of $\tau$. The associated eigenvector $\psi^{(\tau)}$ is given by $\psi^{(\tau)}=(1/\sqrt{2})(1, \xit)^\top$. Introduce the orthogonal projection $\P$ in $\mathcal H:$
$$
\P=\bigl\langle \cdot,\psit\bigr\rangle \psit,
$$
and its orthogonal complement $\Port$ defined by the vector $\psit_\bot=(1/\sqrt{2})(1,-\xit)^\top$. We then pass over to an auxiliary triple $(\mathcal H, \widehat\Gamma_0,\widehat\Gamma_1)$ (where, in order to simplify the notation, we  drop the superscript ``soft'', in hope this does not lead to any misunderstanding)
that diagonalises the matrix $B_0$ and thus asymptotically diagonalises the matrix $B^{(\tau)}(z)$ as $\e\to 0$. To this end, consider the matrix $X$ defined by
$$
X=\frac 1{\sqrt{2}}\begin{pmatrix}
1&1\\[0.5em]
\xit&-\xit
\end{pmatrix}
$$
and define $\widehat \Gamma_0:=X^* \Gamma_0^\soft$, $\widehat \Gamma_1:=X^* \Gamma_1^\soft$. Since $X$ is unitary, $(\mathcal H, \widehat \Gamma_0, \widehat\Gamma_1)$ is also a triple for $A_{\max}^{\soft}$:
\begin{equation*}
\langle \widehat\Gamma_1 u, \widehat\Gamma_0 v \rangle-
\langle \widehat\Gamma_0 u, \widehat\Gamma_1 v \rangle=
\langle X^*\Gamma_1^\soft u, X^*\Gamma_0^\soft v \rangle-
\langle X^*\Gamma_0^\soft u, X^*\Gamma_1^\soft v \rangle=
\langle \Gamma_1^\soft u, \Gamma_0^\soft v \rangle-
\langle \Gamma_0^\soft u, \Gamma_1^\soft v \rangle
\end{equation*}
for any $u,v\in \dom A_{\max}^{\soft}$.

The $M$-matrix and the matrix $B^{(\tau)}$ thus converted to
$$
\widehat M^{(\tau)}_\soft(z)=X^* M^{(\tau)}_\soft(z)X, \quad \widehat B^{(\tau)}(z)=X^*  B^{(\tau)}(z)X,
$$
where
$$
\widehat B_0:= \widehat B^{(\tau)}(0)=\diag\left\{0,\dfrac{2a_1^2}{\e^2 l^{(1)}}\right\}.
$$
Thus, the behaviour of the matrix $\widehat B^{(\tau)}$ as $\e\to 0$ is drastically different in the two orthogonal components of $\mathcal H=\P\mathcal H\oplus \Port \mathcal H$: on $\P\mathcal H$
it is determined by the second, non-singular in $\e$, term of its asymptotic expansion, whereas on $\Port\mathcal H$ it is determined by the $O(1/\e^2)$ contribution in $\widehat B_0$. In order that Theorem \ref{thm:crucial} be applicable, one needs to balance these together. With this goal in mind, we consider a further change in the boundary triple. 

Namely, we pass over to the triple $(\mathcal H, \widetilde \Gamma_0, \widetilde \Gamma_1)$, where the boundary operators $\widetilde \Gamma_0$ and $\widetilde \Gamma_1$ are defined as follows:
\begin{equation}\label{eq:projector-triple}
\widetilde \Gamma_0:=\P \widehat \Gamma_0+\Port \widehat \Gamma_1;\quad \widetilde \Gamma_1:=\P \widehat \Gamma_1-\Port \widehat \Gamma_0.
\end{equation}
One checks that this is indeed a boundary triple:
\begin{multline*}
\bigl\langle \widetilde \Gamma_1 u, \widetilde \Gamma_0 v\bigr\rangle-\bigl\langle \widetilde \Gamma_0 u, \widetilde \Gamma_1 v\bigr\rangle\\[0.4em]
=\bigl\langle \P \widehat \Gamma_1 u, \P\widehat \Gamma_0 v\bigr\rangle- \bigl\langle \Port\widehat \Gamma_0 u, \Port \widehat \Gamma_1 v \bigr\rangle-
\bigl\langle \P\widehat \Gamma_0 u, \P\widehat \Gamma_1 v\bigr\rangle+\bigl\langle \Port\widehat \Gamma_1 u, \Port\widehat \Gamma_0 v\bigr\rangle\\[0.4em]
=\bigl\langle \P\widehat \Gamma_1 u+\Port \widehat \Gamma_1 u, \widehat \Gamma_0 v\bigr\rangle- \bigl\langle \Port \widehat \Gamma_0 u +\P \widehat \Gamma_0 u, \widehat \Gamma_1 v\bigr\rangle=
\bigl\langle \widehat \Gamma_1 u, \widehat \Gamma_0 v\bigr\rangle-\bigl\langle \widehat \Gamma_0 u, \widehat \Gamma_1 v\bigr\rangle
\end{multline*}
for all $u,v\in \dom A_{\max}^{\soft}$.

The calculation of $\widetilde B^{(\tau)}$ pertaining to the boundary triple $(\mathcal H, \widetilde \Gamma_0, \widetilde \Gamma_1)$ is based on the identities $\widehat \Gamma_1 u=\widehat B^{(\tau)}\widehat \Gamma_0 u$ and $\widetilde \Gamma_1 u=\widetilde B^{(\tau)}\widetilde \Gamma_0 u$ ($u\in \dom A_{\max}^{\soft}$) which must yield the same linear set $\dom A_{ B^{(\tau)}(z)}^{(\tau)}$:
\begin{multline*}
\widetilde \Gamma_1 u=\widetilde B^{(\tau)}\widetilde \Gamma_0 u \Leftrightarrow
\P \widehat \Gamma_1 u - \Port \widehat \Gamma_0 u = \widetilde B^{(\tau)} \P \widehat \Gamma_0 u + \widetilde B^{(\tau)} \Port \widehat \Gamma_1 u
\\[0.3em]
\Leftrightarrow
\bigl(\P-\widetilde B^{(\tau)}\Port\bigr)\widehat \Gamma_1 u =\bigl(\Port +\widetilde B^{(\tau)} \P\bigr)\widehat \Gamma_0 u
\\[0.3em]
\Leftrightarrow
\widehat \Gamma_1 u =\bigl(\P - \widetilde B^{(\tau)}\Port\bigr)^{-1}\bigl(\Port+\widetilde B^{(\tau)}\P\bigr) \widehat \Gamma_0u
\end{multline*}
This implies
\begin{equation}\label{eq:Btilde}
\widetilde B^{(\tau)}=\bigl(\P \widehat B^{(\tau)}-\Port\bigr)\bigl(\Port \widehat B^{(\tau)}+\P\bigr)^{-1}.
\end{equation}
An explicit calculation based on \eqref{eq:Btilde} immediately yields
\begin{equation}\label{eq:Beff02}
\widetilde B^{(\tau)}(z)=
\diag\left\{
\frac{a_1 k}{\e}\Biggl[\cot \frac {k \e l^{(1)}}{a_1}-\biggl(\sin \frac {k\e l^{(1)}}{a_1}\biggr)^{-1}\Biggr],-\frac{\e}{a_1 k}\Biggl[\cot \frac {k \e l^{(1)}}{a_1}+\biggl(\sin \frac {k\e l^{(1)}}{a_1}\biggr)^{-1}\Biggr]
\right\}\end{equation}
and hence
\begin{equation}\label{eq:neededfor7}
\widetilde B^{(\tau)}(z)= \begin{pmatrix}
-\dfrac {l^{(1)}}2z&0\\[0.7em]
0&0
\end{pmatrix} + O(\e^2)=:\widetilde B^{(\tau)}_\eff +O(\e^2)
\end{equation}
uniformly in $\tau$ (since the matrix $\widetilde B^{(\tau)}$ does not depend on $\tau$!) and $z\in K_\sigma$. Theorem
\ref{thm:crucial} is now applicable, yielding, by an explicit calculation,
$$
\begin{pmatrix}
\dtau u|_0-\bxit \dtau u|_{l^{(2)}}\\[0.5em]
-(u|_0-\bxit u|_{l^{(2)}})
\end{pmatrix}=
\widetilde B^{(\tau)}_\eff
\begin{pmatrix}
u|_0+ \bxit u|_{l^{(2)}}\\[0.5em]
\dtau u|_0+ \bxit \dtau u|_{l^{(2)}}
\end{pmatrix},
$$
and the claim follows.

\section*{Appendix B. The proof of Lemma \ref{lemma:gen_res_1}}

Expanding the matrix $B^{(\tau)}(z)$ (see \eqref{D2N_1}) into power series with respect to $\e$  yields:
$$
B^{(\tau)}(z)= B^{(\tau)}(0)+O(1)=\frac{1}{\e^2 } \begin{pmatrix} D&\bxit\\[0.5em]
                                                     \xit& D
                                                     \end{pmatrix}+
O(1)=:B_0+O(1),
$$
where we set
\[
\xit=-\frac{a_1^2}{l^{(1)}}\exp\bigl(i\tau(l^{(1)}+l^{(3)})\bigr)-\frac{a_3^2}{l^{(3)}}\exp\bigl(-i\tau l^{(2)}\bigr),\qquad
D=\frac{a_1^2}{l^{(1)}}+\frac{a_3^2}{l^{(3)}}.
\]

As in Appendix A, the matrix function $B^{(\tau)}$ is entire in $K_\sigma$ and also Hermitian for real values of $z$. The matrix $\e^2 B_0$ is a Hermitian matrix possessing two distinct eigenvalues, $\mu(\tau)=D-|\xit|$ and $\mu_\bot(\tau)=D+|\xit|$. The eigenvalue branch that is instrumental for our analysis, $\mu(\tau)$, is singled out by the condition $\mu(0)=0$. We diagonalise the matrix $B_0$ by considering the normalised eigenvectors $\psit=(1/\sqrt{2})(1,-\xit/|\xit|)^\top$ and $\psit_\bot=(1/\sqrt{2})(1,\xit/|\xit|)^\top$ corresponding to the eigenvalues $\mu(\tau)$ and $\mu_\bot(\tau)$, respectively, and the unitary affinity
$$
X=\dfrac 1{\sqrt{2}} \begin{pmatrix}
1&1\\[0.5em]
-\dfrac{\xit}{|\xit|}&\dfrac{\xit}{|\xit|}
\end{pmatrix}.
$$
We also set
$$
\P=\bigl\langle \cdot, \psit\bigr\rangle \psit; \quad
\Port=\bigl\langle \cdot, \psit_\bot\bigr\rangle \psit_\bot.
$$

Passing over to the triple $(\mathcal H, \widehat \Gamma_0,\widehat \Gamma_1)$, where as in Appendix A $\widehat \Gamma_0:=X^* \Gamma_0^\soft$, $\widehat \Gamma_1:=X^* \Gamma_1^\soft$,
the matrix $B^{(\tau)}$ is mapped to
$
\widehat B^{(\tau)}(z)=X^* B^{(\tau)}(z)X.
$
An explicit calculation yields
$$
\widehat B^{(t)}(z)=\dfrac 1{\e}
\begin{pmatrix}
\alpha - \Re \dfrac{\bxit \beta}{|\xit|}&-i \Im \dfrac{\bxit \beta}{|\xit|}\\[1.7em]
i\Im \dfrac{\bxit \beta}{|\xit|}&\alpha - \Re \dfrac{\bxit \beta}{|\xit|}
\end{pmatrix},
$$
where
\begin{align*}
\alpha&=a_1k \cot \dfrac{k\e l^{(1)}}{a_1}+a_3 k \cot \dfrac{k\e l^{(3)}}{a_3},\\[0.7em]
\beta&=-a_1k\exp\bigl(i\tau(l^{(1)}+l^{(3)})\bigr)\biggl(\sin \dfrac{k \e l^{(1)}}{a_1}\biggr)^{-1}-a_3k\exp\bigl(-i\tau l^{(2)}\bigr)\biggl(\sin \dfrac{k \e l^{(3)}}{a_3}\biggr)^{-1}.
\end{align*}

Precisely as in Appendix A, we pass over to the triple $(\mathcal H, \widetilde \Gamma_0, \widetilde \Gamma_1)$, where the boundary operators $\widetilde \Gamma_0$ and $\widetilde \Gamma_1$ are defined as follows:
$$
\widetilde \Gamma_0:=\P \widehat \Gamma_0+\Port \widehat \Gamma_1,\qquad\widetilde \Gamma_1:=\P \widehat \Gamma_1-\Port \widehat \Gamma_0.
$$
The argument leading to the representation  \eqref{eq:Btilde} remains unchanged, which allows to compute the matrix $\widetilde B^{(\tau)}(z)$ as:
$$
\widetilde B^{(\tau)}(z)=
\dfrac{\e}{\alpha + \Re\dfrac{\bxit \beta}{|\xit|}}
\begin{pmatrix}
\dfrac{\alpha^2-|\beta|^2}{\e^2}&-i \e^{-1}\Im \dfrac{\bxit \beta}{|\xit|}\\[1.5em]
i \e^{-1}\Im \dfrac{\bxit \beta}{|\xit|}&-1
\end{pmatrix}
$$
Expanding trigonometric functions into power series with respect to small variable $\e$, we obtain:
\begin{equation}\label{eq:denominator1}
\Re \dfrac{\bxit \beta}{|\xit|}= \dfrac 1{\e} |\xit|+\frac{1}{6}k^2\e|\xit|^{-1}\biggl(a_1^2+a_3^2+\biggl(\dfrac{a_1^2 l^{(3)}}{l^{(1)}}+\dfrac{a_3^2 l^{(1)}}{l^{(3)}}\biggr)\cos\tau\biggr)+O(\e^3),
\end{equation}
where $|\xit|$ is determined from
$$
|\xit|^2= \left( \dfrac{a_1^2}{l^{(1)}} \right)^2+\left( \dfrac{a_3^2}{l^{(3)}} \right)^2+2 \dfrac{a_1^2}{l^{(1)}}\dfrac{a_3^2}{l^{(3)}} \cos \tau.
$$
Although formally $\xit$ could be equal to zero at $\tau=\pm \pi$ (that is, at both endpoints of the domain of quasimomentum) iff $a_1^2/l^{(1)}=a_3^2/l^{(3)}$, one need not worry about this special case. This follows from the fact that the matrix $B^{(\tau)}(z)$ is analytic in $\tau$ and so are its eigenvalues and eigenvectors. Clearly, this also applies to its transformations $\widehat B^{(\tau)}$ and $\widetilde B^{(\tau)}$.

Since the power series expansion of $\alpha$ yields
\begin{equation}\label{eq:denominator2}
\alpha=\dfrac{a_1^2}{\e l^{(1)}}+\dfrac{a_3^2}{\e l^{(3)}}-\e z \dfrac{l^{(1)}+l^{(3)}}{3}+O(\e^3),
\end{equation}
the denominator $\alpha + \Re\dfrac{\bxit \beta}{|\xit|}$ admits an estimate of order $O(1/\e).$

Furthermore,
$$
\Im\dfrac{\bxit \beta}{|\xit|}=\dfrac{\sin\tau}{|\xit|}\Biggl( -\dfrac{a_1^2}{l^{(1)}}\dfrac{a_3 k}{\sin \dfrac{k\e l^{(3)}}{a_3}} +\dfrac{a_3^2}{l^{(3)}}\dfrac{a_1 k}{\sin \dfrac{k\e l^{(1)}}{a_1}}\Biggr)
= \dfrac{\sin\tau}{|\xit|}\left( \dfrac{z \e}{6} \left[\dfrac {a_3^2 l^{(1)}}{l^{(3)}}-\dfrac{a_1^2 l^{(3)}}{l^{(1)}}\right]\right)+ O(\e^3).
$$
It follows that all elements of $\widetilde B^{(\tau)}(z),$ except the element $(1,1),$ admit uniform (with respect to $\tau$ and $z\in K_\sigma$) estimates as $O(\e^2)$, which by an application of Theorem \ref{thm:crucial} allows us to drop them at an expense of $O(\e^2)$.

In order to simplify notation, we will therefore keep using the symbol $\widetilde B^{(\tau)}(z)$ for the matrix
\begin{equation}\label{eq:app_B_complex}
\biggl(\alpha + \Re\dfrac{\bxit \beta}{|\xit|}\biggr)^{-1}
\diag\left\{
\dfrac{\alpha^2-|\beta|^2}{\e^2},0\right\}.
\end{equation}
One has
\begin{multline}\label{eq:nominator}
\alpha^2-|\beta|^2=-(a_1^2+a_3^2)z+2 a_1 a_3 z \left( \cot\dfrac{k\e l^{(1)}}{a_1}\cot\dfrac{k\e l^{(3)}}{a_3}-
\cos \tau\biggl(\sin \dfrac{k\e l^{(1)}}{a_1}\biggr)^{-1}\biggl(\sin \dfrac{k\e l^{(3)}}{a_3}\biggr)^{-1} \right)\\[0.9em]
=\dfrac{2a_1^2 a_3^2}{l^{(1)}l^{(3)}\e^2}(1-\cos \tau)-(a_1^2+a_3^2)z-\dfrac z3 \left(\dfrac{a_1^2l^{(3)}}{l^{(1)}}+\dfrac{a_3^2l^{(1)}}{l^{(3)}}\right)(2+\cos \tau) +O(\e^2).
\end{multline}
Combining \eqref{eq:denominator1} and \eqref{eq:denominator2}, one gets
\begin{align}
\e\biggl(\alpha + \Re\dfrac{\bxit \beta}{|\xit|}\biggr)&=
\dfrac{a_1^2}{ l^{(1)}}+\dfrac{a_3^2}{ l^{(3)}}+\bigl|\xit\bigr|- \dfrac{l^{(1)}+l^{(3)}}{3}\e^2z\nonumber\\[0.6em]
&+\frac{1}{6}z\e^2\bigl|\xit\bigr|^{-1}(a_1^2+a_3^2)+\biggl(\dfrac{a_1^2 l^{(3)}}{l^{(1)}}+\dfrac{a_3^2 l^{(1)}}{l^{(3)}}\biggr)\cos\tau+O(\e^4).\label{eq:denominator}
\end{align}

The first summand on the right hand side of \eqref{eq:nominator} does not permit to apply Theorem \ref{thm:crucial} directly in the triple $(\mathcal H, \widetilde \Gamma_0,\widetilde \Gamma_1)$, as it is non-uniform in $\tau$. The reason for this is clear: in the ($\e$-dependent) regions in $\tau$ where it blows up as $\e\to 0$, the operator considered converges to the decoupling, the domain of which is defined by the condition $\widetilde\Gamma_0 u =0$. In order to obtain a uniform estimate we transform the triple so that the named decoupling corresponds to the condition $\widetilde \Gamma'_1 u=0$.

We redefine the triple as $(\mathcal H, \widetilde \Gamma'_0, \widetilde \Gamma'_1)$, where the boundary operators are
$$
\widetilde\Gamma'_0:=\mathcal{P}^\bot \widetilde \Gamma_0 + \mathcal{P}\widetilde \Gamma_1,
\qquad
\widetilde\Gamma'_1:=\mathcal{P}^\bot \widetilde \Gamma_1 - \mathcal{P}\widetilde\Gamma_0.
$$
Here  $\mathcal{P}^\bot$ and $\mathcal{P}$ are orthogonal projectors
$$
\mathcal{P}=\begin{pmatrix} 1&0\\[0.4em] 0&0\end{pmatrix},
\qquad
\mathcal{P}^\bot=\begin{pmatrix}0&0\\[0.4em]0&1\end{pmatrix}.
$$
Since this choice of a triple is exactly as in \eqref{eq:projector-triple}, where $\mathcal{P}^\bot$ takes place of $\P$ and $\mathcal{P}$ takes place of $\Port$, one has the following representation for $\widetilde B^{(\tau)}$ in the triple $(\mathcal H, \widetilde \Gamma'_0, \widetilde \Gamma'_1)$:
$$
B^{(\tau)\prime}=\bigl(\mathcal{P}^\bot \widetilde B^{(t)}-\mathcal P\bigr)\bigl(\mathcal P \widetilde B^{(t)}+\mathcal{P}^\bot\bigr)^{-1}=
-\diag\bigl\{\delta(\tau,\e),0\bigr\},\qquad
 \delta(t,\e):=\dfrac{\e\biggl(\alpha + \Re\dfrac{\bxit \beta}{|\xit|}\biggr)}{\alpha^2-|\beta|^2}.$$ One has:
$$
\e\biggl(\alpha + \Re\dfrac{\bxit \beta}{|\xit|}\biggr)=m_2+\e^2 m_3+O(\e^4); \quad
\alpha^2-|\beta|^2=m_0\dfrac 1{\e^2}+ m_1+ O(\e^2),
$$
where
\begin{align*}
m_0&=2\dfrac {a_1^2 a_3^2}{l^{(1)} l^{(3)}} (1-\cos \tau),\\[0.7em]
m_1&=-(a_1^2+a_3^2)z-\dfrac z 3\biggl(\frac{a_1^2l^{(3)}}{l^{(1)}}+\frac{a_3^2 l^{(1)}}{l^{(3)}}\biggr)(2+\cos\tau),\\[0.8em]
m_2&=\dfrac {a_1^2}{l^{(1)}}+\dfrac {a_3^2}{l^{(3)}} + |\xit|,\\[0.6em]
m_3&=-\dfrac{l^{(1)}+l^{(3)}}3z + \dfrac z 6\bigl|\xit\bigr|^{-1}\biggl(a_1^2 +a_3^2 +\biggl(\frac{a_1^2l^{(3)}}{l^{(1)}}+\frac{a_3^2 l^{(1)}}{l^{(3)}}\biggr)\cos \tau\biggr),
\end{align*}
and
$$
\bigl|\xit\bigr|=\sqrt{\biggl(\frac{a_1^2}{l^{(1)}}\biggr)^2+\biggl(\frac{a_3^2}{l^{(3)}}\biggl)^2+2\frac{a_1^2a_3^2}{l^{(1)}l^{(3)}}\cos\tau}.
$$
Therefore, $\delta(\tau,\e)$ admits the form
$$
\delta(\tau,\e)=\dfrac{m_2+\e^2 m_3+O(\e^4)}{m_0/\e^2+m_1+O(\e^2)},
$$
where the leading coefficient $m_0$ of the denominator is non-uniform in $\tau$ as a function of $\e$.

By a straightforward computation which we skip for the sake of brevity, we arrive at the uniform in $\tau$ and $z\in K_\sigma$ estimate
$$
\delta(\tau,\e)-\dfrac{2\biggl(\dfrac{a_1^2}{l^{(1)}}+\dfrac{a_3^2}{l^{(3)}}\biggr)}{\dfrac{a_1^2 a_3^2}{l^{(1)} l^{(3)}}\biggl(\dfrac{\tau}{\e}\biggr)^2-\bigl(l^{(1)}+l^{(3)}\bigr)\biggl(\dfrac{a_1^2}{l^{(1)}}+\dfrac{a_3^2}{l^{(3)}}\biggr)z}=O(\e^2).
$$

Finally, returning to the triple $(\mathcal H, \widetilde \Gamma_0, \widetilde \Gamma_1)$, we have thus obtained the effective matrix $\widetilde B^{(\tau)}_\eff$. This is:
$$
\widetilde B^{(\tau)}_\eff=\diag\left\{
\dfrac{\dfrac{a_1^2 a_3^2}{l^{(1)} l^{(3)}}\biggl(\dfrac{\tau}{\e}\biggr)^2-\bigl(l^{(1)}+l^{(3)}\bigr)\biggl(\dfrac{a_1^2}{l^{(1)}}+\dfrac{a_3^2}{l^{(3)}}\biggr)z}{2\biggl(\dfrac{a_1^2}{l^{(1)}}+\dfrac{a_3^2}{l^{(3)}}\biggr)}, 0\right\},
$$
leading by an application of Theorem \ref{thm:crucial} to the norm-resolvent estimate with an error bound of the order $O(\e^2)$ and thus completing the proof of Lemma \ref{lemma:gen_res_1}.

\section*{Appendix C. Calculation of the effective dispersion function $K$ in specific examples}

\subsection*{Example (0)}

Here we have $w_\tau={\xi}^{(\tau)}=\exp({\rm i}l^{(1)}\tau),$ $\sigma=0,$ $\rho=\sqrt{l^{(1)}}$  in (\ref{Gamma}), and therefore
we obtain
\begin{equation}
\langle v, \varphi_j\rangle=\frac{\sqrt{2l^{(2)}}}{\pi j}\bigl((-1)^{j+1}\exp({\rm i}\tau)+1\bigr),\qquad
\Gamma_\tau\left(\begin{matrix}\varphi_j\\[0.2em] 0\end{matrix}\right)=\sqrt{\frac{2}{l^{(2)}}}\frac{\pi j}{l^{(2)}}\bigl((-1)^j\exp(-{\rm i}\tau)-1\bigr),\qquad j=1,2,3,\dots,
\label{aux1}
\end{equation}
\begin{equation}
\Gamma_\tau\left(\begin{matrix}v\\[0.2em] \rho\end{matrix}\right)=\frac{2}{l^{(2)}}(1-\cos\tau).
\label{aux2}
\end{equation}
Substituting the above expressions into the general formula (\ref{K_general1}), we obtain
\[
K(\tau, z)=\frac{2}{l^{(1)}l^{(2)}}\biggl(-2z\sum_{j=1}^\infty\dfrac{1-(-1)^j\cos\tau}{(\pi j/l^{(2)})^2-z}-\cos\tau+1\biggr).
\]
Finally, using the formulae, see {\it e.g.} \cite[p.\,48]{Gradshteyn_Ryzhik},
\begin{equation}
\sum_{j=1}^\infty\frac{1}{(\pi j)^2-x^2}=\frac{1}{2}\biggl(\frac{1}{x^2}-\frac{\cos x}{x\sin x}\biggr),\qquad\qquad
\sum_{j=1}^\infty\frac{(-1)^j}{(\pi j)^2-x^2}=\frac{1}{2}\biggl(\frac{1}{x^2}-\frac{1}{x\sin x}\biggr),\qquad x\neq \pi{\mathbb Z},
\label{RG2}
\end{equation}
yields
\begin{equation}
K(\tau, z)=\frac{2\sqrt{z}\bigl(\cos(l^{(2)}\sqrt{z})-\cos\tau\bigr)}{l^{(1)}\sin(l^{(2)}\sqrt{z})}.
\label{K_example0}
\end{equation}


\subsection*{Example (1)}



In this case
\begin{align*}
w_\tau&=-\frac{\xi^{(\tau)}}{|\xi^{(\tau)}|},\quad \xi^{(\tau)}=-\frac{a_1^2}{l^{(1)}}\exp\bigl(i\tau(l^{(1)}+l^{(3)})\bigr)-\frac{a_3^2}{l^{(3)}}\exp(-i\tau l^{(2)}),\\[0.7em]
\sigma^2&=\biggl(\dfrac{l^{(1)}}{a_1^2}+\dfrac{l^{(3)}}{a_3^2}\biggr)^{-1},\qquad\rho=\sqrt{l^{(1)}+l^{(3)}},
\end{align*}
so that
\[
\Gamma_\tau\left(\begin{matrix}
\varphi_j\\[0.3em] 0\end{matrix}\right)=-\sqrt{\frac{2}{l^{(2)}}}\frac{\pi j}{l^{(2)}}\bigl((-1)^{j+1}\overline{\theta(\tau)}+1\bigr),\quad\quad
\langle v, \phi_j\rangle=\frac{\sqrt{2l^{(2)}}}{\pi j}\bigl((-1)^{j+1}\theta(\tau)+1\bigr),\quad\quad j=1,2,3,\dots,
\]

\[
\Gamma_\tau\left(\begin{matrix}v\\[0.2em] \rho\end{matrix}\right)
=\frac{2}{l^{(2)}}\bigl(1-\Re\theta(\tau)\bigr)+\biggl(\frac{\sigma\tau}{\varepsilon}\biggr)^2,\qquad \theta(\tau):=\frac{\dfrac{a_1^2}{l^{(1)}}{\rm e}^{-i\tau}+\dfrac{a_3^2}{l^{(3)}}}{\biggl|\dfrac{a_1^2}{l^{(1)}}{\rm e}^{-i\tau}+\dfrac{a_3^2}{l^{(3)}}\biggr|}.
\]

Substituting the above expressions into (\ref{K_general1}) and making use of the formulae (\ref{RG2}) again,
we obtain
\begin{equation}
K(\tau, z)=\frac{1}{l^{(1)}+l^{(3)}}\Biggl\{2\sqrt{z}\biggl(
\cot(l^{(2)}\sqrt{z})
-\frac{\Re\theta(\tau)}{\sin(l^{(2)}\sqrt{z})}\biggr)+\biggl(\frac{\sigma\tau}{\varepsilon}\biggr)^2\Biggr\}.
\label{K_example1}
\end{equation}


\subsection*{Example (2)} The analysis is similar to the case of Example (0) and is based on the formulae
\[
\langle v_1, \varphi_j^{(1)}\rangle=\xi_1^{(\tau)}\frac{\sqrt{2l^{(1)}}}{\pi j}\bigl((-1)^{j+1}\exp({\rm i}\tau)+1\bigr),
\]
\[
\Gamma_\tau\left(\begin{matrix}\chi^{(1)}\varphi_j^{(1)}\\[0.2em] 0\end{matrix}\right)=a_1^2\sqrt{\frac{2}{l^{(1)}}}\frac{\pi j}{l^{(1)}}\bigl((-1)^j-\exp(i\tau)\bigr)\exp(-{\rm i}\tau l^{(1)}),\qquad j=1,2,3,\dots,
\]
\[
\Gamma_\tau\left(\begin{matrix}\chi^{(1)}v_1\\[0.2em] \rho\end{matrix}\right)=\frac{a_1^2\xi_1^{(\tau)}}{l^{(1)}}\bigl(\exp(-i\tau)-1\bigr)\bigl(\exp(i\tau l^{(1)})-1\bigr),
\]
\[
\langle v_2, \varphi_j^{(2)}\rangle=\frac{\sqrt{2l^{(2)}}}{\pi j}\bigl((-1)^{j+1}+1\bigr),\qquad
\Gamma_\tau\left(\begin{matrix}\chi^{(2)}\varphi_j^{(2)}\\[0.2em] 0\end{matrix}\right)=a_2^2\sqrt{\frac{2}{l^{(2)}}}\frac{\pi j}{l^{(2)}}\bigl((-1)^j-1\bigr),\qquad j=1,2,3,\dots,
\]
\[
\Gamma_\tau\left(\begin{matrix}\chi^{(2)}v_2\\[0.2em] \rho\end{matrix}\right)=0.
\]
Substituting these into (\ref{K_ex2}) and using (\ref{RG2}) yields
\begin{align*}
K(\tau, z)&=\frac{1}{l^{(3)}}\biggl\{-4za_1^2l^{(1)}\sum_{j=1}^\infty\frac{1+(-1)^{j+1}\cos\tau}{(\pi j)^2-(l^{(1)}\sqrt{z})^2}+\frac{2a_1^2}{l^{(1)}}(1-\cos\tau)
-4za_2^2l^{(2)}\sum_{j=1}^\infty\frac{1+(-1)^{j+1}}{(\pi j)^2-(l^{(2)}\sqrt{z})^2}\biggr\}\\[0.6em]
&=\frac{1}{l^{(3)}}\Biggl\{-2za_1^2l^{(1)}\biggl(\frac{1}{(l^{(1)}\sqrt{z})^2}-\frac{\cos(l^{(1)}\sqrt{z})}{l^{(1)}\sqrt{z}\sin(l^{(1)}\sqrt{z})}\biggr)\\[0.6em]
&\ \ \ \ \ \ \ \ \ \ +2za_1^2l^{(1)}\cos\tau\biggl(\frac{1}{(l^{(1)}\sqrt{z})^2}-\frac{1}{l^{(1)}\sqrt{z}\sin(l^{(1)}\sqrt{z})}\biggr)
+\frac{2a_1^2}{l^{(1)}}(1-\cos\tau)\\[0.6em]
&\ \ \ \ \ \ \ \ \ \ -2za_2^2l^{(2)}\biggl(\frac{1}{(l^{(2)}\sqrt{z})^2}-\frac{\cos(l^{(2)}\sqrt{z})}{l^{(2)}\sqrt{z}\sin(l^{(2)}\sqrt{z})}\biggr)
+2za_2^2l^{(2)}\biggl(\frac{1}{(l^{(2)}\sqrt{z})^2}-\frac{1}{l^{(2)}\sqrt{z}\sin(l^{(2)}\sqrt{z})}\biggr)\Biggr\}\\[0.6em]
&=\frac{2\sqrt{z}}{l^{(3)}}\Biggl\{a_1^2\frac{\cos(l^{(1)}\sqrt{z})-\cos\tau}{\sin(l^{(1)}\sqrt{z})}-a_2^2\tan\biggl(\frac{l^{(2)}\sqrt{z}}{2}\biggr)\Biggr\}.
\end{align*}

\section*{Acknowledgements}
KDC and YYE is grateful for the financial support of
the Engineering and Physical Sciences Research Council: Grant EP/L018802/2 ``Mathematical foundations of metamaterials: homogenisation, dissipation and operator theory''. SNN and YYE have been partially supported by the  RFBR grant 19-01-00657-a. SNN also gratefully acknowledges funding provided by Knut and Alice Wallenberg Foundation (program for mathematics 2018). AVK has been partially supported by the Russian Federation Government megagrant 14.Y26.31.0013.

\end{document}